\newcommand{\Weizsacker}{Weizs\"{a}cker}
\newcommand{\TFW}{\textnormal{TFW}}
\newcommand{\loc}{\textnormal{loc}}
\newcommand{\unif}{\textnormal{unif}}
\newcommand{\id}{\textnormal{ d}}
\newcommand{\wt}{\widetilde}
\newcommand{\R}{\mathbb{R}^{3}}
\newcommand{\Holder}{H\"{o}lder}
\newcommand{\smcb}{ \textstyle{\frac{1}{|\cdot|}}  }
\newcommand{\smfrac}[2]{ \textstyle{ \frac{#1}{#2} } }
\newcommand{\ou}{\overline{u}}
\newcommand{\ophi}{\overline{\phi}}
\newcommand{\om}{\overline{m}}
\newcommand{\spt}{\textnormal{spt}}
\newcounter{mydef}
\newcounter{defcount}
\newtheorem{theorem}[mydef]{Theorem}
\newtheorem{definition}[defcount]{Definition}
\newtheorem{lemma}[mydef]{Lemma}
\newtheorem{proposition}[mydef]{Proposition}
\newtheorem*{assumption*}{Assumption}
\newtheorem{corollary}[mydef]{Corollary}
\newtheorem*{theorem*}{Theorem}
\newtheorem*{proposition*}{Proposition}
\newtheorem*{lemma*}{Lemma}
\theoremstyle{definition}
\newtheorem{remark}{Remark}
\numberwithin{theorem}{section}
\numberwithin{lemma}{section}
\numberwithin{corollary}{section}
\numberwithin{proposition}{section}
\numberwithin{equation}{section}
\def\XXint#1#2#3{{\setbox0=\hbox{$#1{#2#3}{\int}$}
\vcenter{\hbox{$#2#3$}}\kern-.5\wd0}}
\title[Convergence from Yukawa to Coulomb in the TFW Model]{Convergence Rates from Yukawa to Coulomb Interaction in the Thomas--Fermi--von Weizs\"{a}cker Model}
\author{F. Q. Nazar}
\date{\today}
\thanks{FQN is funded by the MASDOC doctoral training centre, EPSRC grant
  EP/H023364/1.}
\address{F. Q. Nazar \\ Mathematics Institute \\ Zeeman Building \\
  University of Warwick \\ Coventry CV4 7AL \\ UK}
\email{F.Q.Nazar@warwick.ac.uk}
\begin{document}

\begin{abstract}
We establish uniform convergence, with explicit rate, of the solution to the Thomas--Fermi--von Weizs\"{a}cker (TFW) Yukawa model to the solution of the TFW Coulomb model, for general condensed nuclear configurations. As a consequence, we show the convergence of forces from the Yukawa to the Coulomb model. These results rely on an extension of Nazar \& Ortner (2015) to the Yukawa setting. Auxiliary results of independent interest shown also include new existence, uniqueness and stability results for the Yukawa ground state.

\end{abstract} 

\maketitle

%First Sentence - make it good.
%
%A standard approximation to the long-ranged Coulomb interaction is the short-ranged Yukawa interactions
%
%Several uses for the Yukawa potential exist in mathematical physics, including 
%
%We provide estimates for the rate of convergence of the electron density and electrostatic potential when passing from the Yukawa to the Coulomb model in the TFW equations. In the process show that the locality results of the TFW model extend naturally to the Yukawa setting.

\section{Introduction}

One of the challenges in molecular simulation is treating the interaction of charged particles using the Coulomb potential. Due to the long-range of the Coulomb potential $\smfrac{1}{|x|}$, the Yukawa potential $Y_{a}(x) = \smfrac{e^{-a|x|}}{|x|}$, for $a > 0$, is often used as a short-ranged approximation \cite{C/LB/L, Lahbabi-MeanField, Lahbabi-Defects, Blanc-DefinitionofGroundState, Rowlinson}. The Yukawa potential also appears in the Thomas--Fermi theory of impurity screening, where the parameter $a > 0$ represents the inverse screening length of a metal \cite{Kaxiras, Kittel, Ashcroft}. 
%Sending $a \to 0$ in $Y_{a}$ recovers the Coulomb potential.

The aim of this paper is to establish the uniform convergence of the Yukawa ground state to the Coulomb ground state, in the Thomas--Fermi--von \Weizsacker\, (TFW) model. The main technical result estimates the rate of convergence. A rigorous statement is given in \Cref{Theorem - Yukawa Coulomb Comparison}.
\begin{theorem*}
Let $m \in L^\infty(\R)$ represent a nuclear charge
  distribution satisfying
  \begin{align*}
    m \geq 0 \qquad \text{and} \qquad
    \lim_{R \to \infty} \frac{1}{R} \inf_{x \in \R} \int_{B_{R}(x)} m(z) \id z = + \infty.
  \end{align*}
  Let the corresponding Coulomb ground state electron density and electrostatic
  potential, denoted by $u, \phi : \R \to \mathbb{R}$, satisfy the TFW equations,
  \begin{align*}
    - \Delta u &+ \frac{5}{3} u^{7/3} - \phi u = 0,  \\
    - \Delta \phi &= 4\pi (m - u^{2}),
  \end{align*}
and for $a > 0$, let the corresponding Yukawa ground state, denoted by $(u_{a},\phi_{a})$, satisfy the TFW Yukawa equations
    \begin{align*}
      &- \Delta u_{a} + \frac{5}{3} u_{a}^{7/3} - \phi_{a} u_{a} = 0,  \\
      &- \Delta \phi_{a} + a^{2} \phi_{a} = 4\pi (m - u_{a}^{2}).
    \end{align*}
  
  Then there exists $C > 0$ such that
  \begin{align}
\| u_{a} - u \|_{W^{2,\infty}(\R)} + \| \phi_{a} - \phi \|_{W^{2,\infty}(\R)} \leq C a^{2}. \label{goal1}
  \end{align}
\end{theorem*}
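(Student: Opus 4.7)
The plan is to subtract the Coulomb and Yukawa systems, recast the resulting equations as a linear operator acting on the differences $(w,\psi) := (u_a - u, \phi_a - \phi)$ with a small source term, and then invert that operator using the Yukawa stability result developed earlier in the paper. Using the identities $\phi_a u_a - \phi u = \psi u_a + \phi w$ and $u_a^{7/3} - u^{7/3} = \tfrac{7}{3}\bar u^{4/3} w$ (for some $\bar u$ pointwise between $u$ and $u_a$), one obtains
\begin{align*}
-\Delta w + \bigl(\tfrac{35}{9}\bar u^{4/3} - \phi\bigr) w - u_a\, \psi &= 0, \\
-\Delta \psi + a^2 \psi + 4\pi(u_a + u)\, w &= -a^2 \phi.
\end{align*}
The only genuine inhomogeneity is $-a^2\phi$; since the condensedness hypothesis on $m$ yields $\phi \in L^\infty(\R)$, this source is $O(a^2)$ in $L^\infty$.

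I would then rewrite the system as $\mathcal{L}_a(w,\psi) = (0, -a^2\phi) + \mathcal{N}_a(w,\psi)$, where $\mathcal{L}_a$ is the linearisation of the TFW Yukawa map at the Coulomb ground state $(u,\phi)$ (obtained by replacing $\bar u$ by $u$ and $u_a$ by $u$ in the coefficients), and $\mathcal{N}_a$ collects all terms that are quadratic or higher in $(w,\psi)$ (coming from $\bar u^{4/3} - u^{4/3}$, $u_a - u = w$, and $u_a^2 - u^2 = (2u + w)w$). The decisive ingredient is the uniform-in-$a$ invertibility of $\mathcal{L}_a$ from $W^{2,\infty}(\R)^2$ into $L^\infty(\R)^2$; this is precisely the Yukawa extension of Nazar \& Ortner (2015) announced in the abstract. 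Coercivity is inherited from the positive multiplier $\tfrac{35}{9} u^{4/3} - \phi$ (together with the uniform lower bound $u \geq \underline u > 0$ supplied by condensedness) and from the sign structure of the cross couplings $-u\psi$ and $4\pi(u_a+u)w$; the additional $a^2\psi$ term only enhances coercivity, so the estimate should survive the limit $a \downarrow 0$.

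With uniform invertibility in hand, a Banach fixed-point argument in a ball of radius $C a^2$ in $W^{2,\infty}(\R)^2$ closes at once: $\mathcal{L}_a^{-1}(0, -a^2\phi)$ is of size $O(a^2)$, while $\mathcal{N}_a$ contributes only quadratic corrections, so the iteration contracts for $a$ sufficiently small. A standard elliptic bootstrap (applied to $-\Delta$ and $-\Delta + a^2$ with right-hand sides now known to be $O(a^2)$ in $L^\infty$, and using the \emph{a priori} $W^{2,\infty}$ regularity of $u, u_a, \phi, \phi_a$ established earlier) upgrades control from $L^\infty$ to $W^{2,\infty}$, yielding~\eqref{goal1}. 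The principal obstacle is thus not the final perturbation step but rather the uniform-in-$a$ stability of $\mathcal{L}_a$: Nazar--Ortner's Coulomb result rests on a delicate analysis of the linearised TFW operator at a condensed nuclear configuration, and extending it so that the inverse's operator norm remains bounded as $a \to 0$ -- equivalently, so that the $O(a^2)$ source is not amplified on inversion -- is where the heavy lifting sits.
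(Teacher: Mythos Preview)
Your identification of the $O(a^2)$ source term is correct, but the route you propose (linearise at $(u,\phi)$, invert $\mathcal{L}_a$, close a Banach contraction) is not the paper's, and the ``Yukawa extension of Nazar--Ortner'' you invoke is a \emph{nonlinear} stability estimate, not a linear invertibility result. The paper never linearises at a single ground state and never runs a fixed-point argument: it applies \Cref{Lemma - Exp Est Yuk Main Est} directly to the pair of exact solutions $(u_1,\phi_1)=(u,\phi)$ with $a_1=0$ and $(u_2,\phi_2)=(u_a,\phi_a)$ with $a_2=a$, same $m$, so that the $\psi$-equation reads $-\Delta\psi=4\pi(u_a^2-u^2)+R$ with $R=a^2\phi_a\in H^{k+2}_{\unif}$, and the lemma outputs $\|w\|_{W^{k+2,\infty}}+\|\psi\|_{W^{k+2,\infty}}\leq Ca^2$ in one step. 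The nonlinearity is absorbed by structure rather than smallness: testing the exact $w$-equation with $w\xi^2$ produces the two-point operator $L=-\Delta+\tfrac{5}{6}(u^{4/3}+u_a^{4/3})-\tfrac{1}{2}(\phi+\phi_a)$, which is non-negative because both $u,u_a>0$ are ground states, and the cross term $\tfrac{1}{2}\int\psi(u^2-u_a^2)\xi^2$ cancels exactly against its counterpart from the tested $\psi$-equation, so no quadratic remainder $\mathcal{N}_a$ ever appears. Your approach could plausibly be pushed through for small $a$ (after supplying a separate $W^{2,\infty}$ linear theory and invoking uniqueness of the Yukawa ground state to identify the fixed point with the actual difference), but the paper's direct energy estimate needs no proximity hypothesis and therefore holds for all $0<a\leq a_0$ with arbitrary $a_0$.
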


To the best of the author's knowledge, this is the first result that provides a rate of convergence for ground states from Yukawa to Coulomb interaction, for any electronic structure model.

An important consequence of (\ref{goal1}) is an estimate for the rate of convergence of forces in the TFW model, when passing from the Yukawa to Coulomb interaction. Given a countable collection of nuclei $Y = (Y_{j})_{j \in \mathbb{N}} \subset \R$ and $a > 0$, the TFW Yukawa and Coulomb energy densities, $\mathcal{E}_{a}(Y;x)$ and $\mathcal{E}(Y;x)$ respectively, can be defined. It follows from (\ref{goal1}) that
\begin{align}
\left| \int_{\R} \left( \frac{\partial \mathcal{E}_{a}}{\partial Y_{k}} - \frac{\partial \mathcal{E}}{\partial Y_{k}} \right)(Y;x) \id x \right| &\leq C a^{2}. \label{goal2}
\end{align}
A rigorous statement of this result is given in \Cref{Theorem - Forcing Yuk and Coulomb Comparison}.

%In addition, it will also be shown that the TFW Yukawa model possesses all of the locality properties shown for the Coulomb model \cite{Paper1}, including an exponential convergence rate for the thermodynamic limit and neutrality estimates. In all these results, general (condensed) nuclear arrangements are treated.

In a forthcoming article \cite{Paper2}, the aim will be to generalise
the analysis of variational problems for the mechanical
response to defects in an infinite crystal \cite{EOS} to electronic structure models, using the TFW model with Coulomb interaction. The uniform convergence of forces from Yukawa to Coulomb suggests that one could construct an approximate mechanical response problem using the Yukawa interaction. This could be more efficient for the purposes of numerical simulations as it replaces the long-range Coulomb interaction with the short-ranged Yukawa interaction. The result (\ref{goal2}) suggests that the error in the electron density may propagate into an $O(a^{2})$ error in the equilibrium configuration. This will be explained in future work.

The remainder of this article is organised as follows: In Section \ref{Section2} the definition of the TFW model is recalled and the relevant existing results are summarised. In Section \ref{Section3} the main technical results are stated, including the rigorous statement of the convergence result \eqref{goal1}. Applications are presented in Section \ref{Section4}, followed by the detailed proofs of the results in \Cref{Section5}. An additional technical argument is given in the Appendix, that extends uniqueness of the Yukawa ground state to all $a > 0$.
%In addition, in the Appendix, we provide an argument to estimate the constant $a_{\rm c}$ that appears in \eqref{goal1}.

\begin{remark}
The analytical approach presented closely follows and adapts the study of the TFW equations in \cite{C/LB/L, Paper1}. An overview of the TFW equations can be found in \cite{Paper1} and \cite{Rowlinson} provides a background on the Yukawa potential and its various applications.

To the best of the author's knowledge, the closest existing result to (\ref{goal1}) in the literature is \cite[Proposition 2.30]{C/LB/L}, which shows $u_{a} \to u$ strongly in $H^{1}_{\loc}(\R)$ as $a \to 0$, for periodic and neutral TFW systems, but does not estimate the rate. \qed
\end{remark}

\subsection*{Acknowledgements}
The author thanks Virginie Ehrlacher and Xavier Blanc for helpful
discussions about the TFW model in the Yukawa setting.

\section{The TFW Yukawa Model}
\label{Section2}

For $p \in [1,\infty]$ define the function spaces
\begin{align*}
L^{p}_{\loc}(\R) &:= \{ \, f: \R \to \mathbb{R} \, | \, \forall \, K \subset \R \text{ compact}, f \in L^{p}(K) \, \} \quad \text{ and} \\
L^{p}_{\unif}(\R) &:= \{ \, f \in L^{p}_{\loc}(\R) \, | \, \sup_{x \in \R} \| f \|_{L^{p}(B_{1}(x))} < \infty \, \}.
\end{align*}
For $k \in \mathbb{N}$, $H^{k}_{\loc}(\R), H^{k}_{\unif}(\R)$ are defined analogously. For a multi-index $\alpha = (\alpha_{1},\alpha_{2},\alpha_{3})$, define the partial derivative $\partial^{\alpha} = \partial_{1}^{\alpha_{1}} \partial_{2}^{\alpha_{2}} \partial_{3}^{\alpha_{3}}$. Throughout this paper, $\alpha, \beta$ denote three-dimensional multi-indices.

The Coulomb interaction, for $f, g \in L^{6/5}(\R)$, is given by
\begin{align*}
D_{0}(f,g) = \int_{\R} \int_{\R} \frac{f(x)g(y)}{|x-y|} \id x \id y = \int_{\R} \left( f * \smcb \right)(y) g(y) \id y.
\end{align*}
and is finite due to the Hardy--Littlewood--Sobolev estimate \cite{Aubin_HLS}. The Yukawa interaction is a short-range approximation to the Coulomb interaction, with the Yukawa potential $Y_{a}(x) = \smfrac{e^{-a|x|}}{|x|}$, for $a > 0$, replacing the Coulomb potential $\smfrac{1}{|x|}$. The parameter $a > 0$ controls the range of the interaction, in particular one formally recovers the long-ranged Coulomb interaction as $a \to 0$. The Yukawa interaction, for $a > 0$ and $f, g \in L^{2}(\R)$, is given by
\begin{align*}
D_{a}(f,g) = \int_{\R} \int_{\R} \frac{f(x)e^{-a|x-y|}g(y)}{|x-y|} \id x \id y = \int_{\R} \left( f * Y_{a} \right)(y) g(y) \id y,
\end{align*}
which is finite as Cauchy-Schwarz' and Young's inequality for convolutions imply
\begin{align*}
|D_{a}(f,g)| \leq \| Y_{a} \|_{L^{1}(\R)} \| f \|_{L^{2}(\R)} \| g \|_{L^{2}(\R)} \leq C a^{-2} \| f \|_{L^{2}(\R)} \| g \|_{L^{2}(\R)}.
\end{align*}
Let $a > 0$ and $m \in L^{2}(\R), m \geq 0,$ denote the charge density of a finite nuclear cluster, then the corresponding TFW Yukawa energy functional is defined, for $v \in H^{1}(\R)$,
by
\begin{align}
E^{\TFW}_{a}(v,m) = C_{\textnormal{W}} \int_{\R} |\nabla v|^{2} + C_{\textnormal{TF}} \int_{\R} v^{10/3} + \frac{1}{2} D_{a}( m - v^{2}, m - v^{2}). \label{eq:TFW-energy-Yuk-def,with-constants}
\end{align}
The function $v$ corresponds to the positive square root of the electron
density. The first two terms of (\ref{eq:TFW-energy-Yuk-def,with-constants}) model the kinetic energy of the
electrons while the third term models the Coulomb energy. This definition of the
Coulomb energy is only valid for smeared nuclei. The energy (\ref{eq:TFW-energy-Yuk-def,with-constants}) can be rescaled to
ensure that $C_{\textnormal{W}} = C_{\textnormal{TF}} = 1$.

To construct the electronic ground state for an infinite arrangement of
  nuclei (e.g., crystals), it is necessary to restrict admissible nuclear charge densities to
  $m \in L^{1}_{\unif}(\R), m \geq 0$, satisfying
\begin{align*}
\textnormal{(H1) }& \sup_{x \in \R} \int_{B_{1}(x)} m(z) \id z < \infty, \\
\textnormal{(H2) }& \lim_{R \to \infty} \inf_{x \in \R} \frac{1}{R} \int_{B_{R}(x)} m(z) \id z = \infty.
\end{align*} 
The property (H1) guarantees that no clustering of infinitely many nuclei occurs
at any point in space whereas (H2) ensures that there are no large regions that
are devoid of nuclei.

For each $m$ satisfying (H1)--(H2), \cite[Theorem 6.10]{Paper1} guarantees the existence and uniqueness of a ground state $(u,\phi)$ satisfying
\begin{subequations}
\label{eq:u-phi-eq-pair}
\begin{align}
- \Delta u &+ \frac{5}{3} u^{7/3} - \phi u = 0, \label{eq:u-inf-eq} \\
- \Delta \phi &= 4\pi (m - u^{2}), \label{eq:phi-inf-eq}
\end{align}
\end{subequations}
Similarly, as remarked in \cite[Chapter 6]{C/LB/L}, it also follows that for sufficiently small $a > 0$, the existence and uniqueness of the Yukawa ground state $(u_{a},\phi_{a})$, solving
\begin{subequations}
\label{eq: u phi Yuk eq pair}
\begin{align}
&- \Delta u_{a} + \frac{5}{3} u^{7/3}_{a} - \phi_{a} u_{a} = 0, \label{eq:u-Yuk-inf-eq} \\
&- \Delta \phi_{a} + a^{2} \phi_{a} = 4\pi (m - u^{2}_{a}), \label{eq:phi-Yuk-inf-eq}
\end{align}
\end{subequations}
The equation (\ref{eq:phi-inf-eq}) arises from the Coulomb interaction, as $\smfrac{1}{4\pi |\cdot|}$ is the Green's function for the Laplacian on $\R$, while (\ref{eq:phi-Yuk-inf-eq}) is obtained for the Yukawa problem, as $\smfrac{1}{4\pi} Y_{a}$ is the Green's function for $- \Delta + a^{2}$ on $\R$, $a > 0$.

\begin{definition}
In this article, for any nuclear configuration $m$ satisfying \textnormal{(H1)--(H2)}, the ground state corresponding to $m$ refers to the unique solution $(u,\phi)$ to \textnormal{(\ref{eq:u-phi-eq-pair})}. For $a > 0$, the Yukawa ground state corresponding to $m$ refers to the unique solution $(u_{a},\phi_{a})$ to \textnormal{(\ref{eq: u phi Yuk eq pair})}. \qed
\end{definition}
%\fn{\begin{remark}
%Throughout this paper, where possible we explicitly estimate . \qed
%\end{remark}}

\section{Main Results}
\label{Section3}
\subsection{Regularity estimates}
This section generalises the TFW pointwise stability estimate and its consequences \cite{Paper1} from the Coulomb to the Yukawa setting.

The proofs of the main results in the next section require uniform regularity
estimates for Yukawa systems refining those shown in \cite{C/LB/L}, provided that $a \in (0,a_{0}]$ for some $a_{0} > 0$.

The main regularity estimate (\ref{Proposition - Yukawa Regularity Est All a}) relies on uniform variants of (H1)--(H2), so the class of nuclear configurations $\mathcal{M}_{L^{2}}$, defined in \cite{Paper1}, is used. Given $M, \omega_{0}, \omega_{1} > 0$, let $\omega = (\omega_{0},\omega_{1})$ and define
%Other than \Cref{Proposition - Yukawa Regularity Est}, our estimates rely on uniform variants of (H1)--(H2), so we use the class of nuclear configurations $\mathcal{M}_{L^{2}}$, defined in \cite{Paper1}. Given $M, \omega_{0}, \omega_{1} > 0$, let $\omega = (\omega_{0},\omega_{1})$ and define
\begin{align}
\mathcal{M}_{L^{2}}(M,\omega) = \bigg\{ \, m \in L^{2}_{\unif}(\R) \, \bigg| \,\,\, &\| m \|_{L^{2}_{\unif}(\R)} \leq M, \nonumber \\ \, & \forall \, R > 0 \,\, \inf_{x \in \R} \int_{B_{R}(x)} m(z) \id z \geq \omega_{0} R^{3} - \omega_{1} \, \bigg\}. \label{eq:M-L2-def}
\end{align}
As each nuclear distribution $m \in \mathcal{M}_{L^{2}}(M,\omega)$ satisfies (H1)--(H2), \cite[Chapter 6]{C/LB/L} guarantees the existence of corresponding ground states $(u_{a},\phi_{a})$ for sufficiently small $a$. The proof of \cite[Proposition 2.2, Chapter 6]{C/LB/L} is adapted to extend existence and uniqueness of Yukawa ground states to all $a > 0$. In addition, the uniformity in upper and lower bounds on $m \in \mathcal{M}_{L^2}(M, \omega)$ yields regularity estimates and lower bounds on these ground states which are also uniform.

\begin{proposition}
\label{Proposition - Yukawa Regularity Est All a}
Let $a_{0} > 0$ and $m \in \mathcal{M}_{L^{2}}(M,\omega)$, then for any $0 < a \leq a_{0}$
there exists $(u_{a},\phi_{a})$ solving \textnormal{(\ref{eq: u phi Yuk eq pair})}, satisfying $u_{a} \geq 0$ and
\begin{align}
&\| u_{a} \|_{H^{4}_{\unif}(\R)} + \| \phi_{a} \|_{H^{2}_{\unif}(\R)} \leq C(a_{0},M), \label{eq:Yuk-u-H4-phi-H2-unif-est-all-a}
\end{align}
where the constant $C(a_{0},M)$ is increasing in both $a_{0}$ and $M$.
\end{proposition}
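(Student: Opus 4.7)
The plan is to establish the proposition in two stages: first existence, then the uniform regularity estimate (the uniqueness claim for all $a > 0$ is deferred to the Appendix). For existence, I would adapt the construction in \cite[Chapter 6]{C/LB/L}, which carries through for any $a > 0$ once the appropriate a priori bounds are in place. For the truncated nuclear distributions $m_R := m \mathbf{1}_{B_R(0)}$, the functional $E^{\TFW}_a(\cdot, m_R)$ admits a non-negative minimizer $u_{a,R} \in H^1(\R)$ by the direct method, and the pair $(u_{a,R}, \phi_{a,R})$ with $\phi_{a,R} := Y_a * (m_R - u_{a,R}^2)$ solves the truncated Euler--Lagrange system. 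Passing to the thermodynamic limit $R \to \infty$ requires local compactness, which will follow from the uniform a priori bounds obtained below.

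For the uniform regularity estimate, I would adapt the pointwise stability strategy of \cite{Paper1} from the Coulomb to the Yukawa setting, tracking every constant to ensure $a$-uniformity on $(0,a_0]$. The starting point is a coupled maximum-principle argument. At any near-maximum point $x_0$ of $u_a$ on a unit ball, $-\Delta u_a(x_0) \geq 0$ and the TFW equation force
\begin{align*}
u_a(x_0)^{4/3} \leq \tfrac{3}{5} \phi_a(x_0),
\end{align*}
so an $L^{\infty}_{\unif}$ bound on $u_a$ reduces to one on $\phi_a$. From the Green's function representation $\phi_a = Y_a * (m - u_a^2)$, the splitting $Y_a = Y_a \mathbf{1}_{B_1} + Y_a \mathbf{1}_{B_1^c}$ separates a near-field piece with an $a$-independent integrable Coulomb-type singularity from a far-field piece controlled by $e^{-a|\cdot|}/|\cdot|$; both are bounded uniformly in $a \in (0,a_0]$ when convolved against $m \in L^{2}_{\unif}$ and $u_a^2 \in L^{\infty}_{\unif}$. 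This closes the upper bound on $(u_a,\phi_a)$ through a short bootstrap.

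For the uniform lower bound on $u_a$, I would construct a positive subsolution via the $\mathcal{M}_{L^{2}}(M,\omega)$ mass condition. Since $\int_{B_R(x)} m \geq \omega_0 R^3 - \omega_1$, a ball $B_{R_0}(x)$ with $R_0$ fixed large enough contributes a nuclear potential $Y_a * (m \mathbf{1}_{B_{R_0}(x)}) \geq c(\omega, R_0, a_0) > 0$ at $x$, using the monotonicity $e^{-a R_0} \geq e^{-a_0 R_0}$. This local attractive potential drives a subsolution of the TFW equation and yields $u_a(x) \geq c(a_0, M, \omega) > 0$ uniformly in $x \in \R$ and $a \in (0,a_0]$. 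Once both sides are controlled, standard elliptic bootstrap on
\begin{align*}
-\Delta u_a = \phi_a u_a - \tfrac{5}{3} u_a^{7/3}, \qquad -\Delta \phi_a + a^2 \phi_a = 4\pi (m - u_a^2)
\end{align*}
delivers the full $H^4_{\unif} \times H^2_{\unif}$ estimate: the right-hand side for $\phi_a$ lies in $L^2_{\unif}$, giving $\phi_a \in H^2_{\unif}$, after which the right-hand side for $u_a$ lies in $H^2_{\unif}$ and two iterations yield $u_a \in H^4_{\unif}$.

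The principal obstacle is tracking constants to ensure $a$-uniformity, particularly in the lower bound on $u_a$. The Yukawa kernel screens contributions from nuclei beyond distance $\sim 1/a$, so a naive pointwise bound degrades as $a$ grows; the restriction $a \leq a_0$ guarantees a minimum screening length $1/a_0$ over which the $\mathcal{M}_{L^{2}}(M,\omega)$ mass condition supplies sufficient charge. This also explains the stated monotonicity of $C(a_0, M)$: a larger $a_0$ forces a smaller effective interaction range and hence a weaker lower bound on $u_a$, which degrades the bootstrap constants. Keeping every step Coulomb-comparable through the elementary bound $Y_a \leq 1/|\cdot|$, combined with the exponential far-field decay, is the remaining technical point.
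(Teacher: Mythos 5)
Your plan diverges from the paper at the central point, the uniform $L^\infty$ bound on $\phi_a$, and the route you propose does not work. You write that from $\phi_a = Y_a * (m - u_a^2)$, splitting $Y_a = Y_a \mathbf{1}_{B_1} + Y_a \mathbf{1}_{B_1^c}$ yields bounds ``uniformly in $a \in (0,a_0]$ when convolved against $m \in L^2_{\unif}$ and $u_a^2 \in L^\infty_{\unif}$.'' The near-field piece is indeed $a$-uniform, but the far-field piece is not: $\| Y_a \mathbf{1}_{B_1^c} \|_{L^1(\R)} \sim a^{-2}$, so convolving it against a nonnegative function bounded away from zero (which $u_a^2$ is, for $m \in \mathcal M_{L^2}(M,\omega)$) produces an estimate that blows up as $a \downarrow 0$. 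The only way to recover a uniform bound is to exploit the cancellation in $m - u_a^2$, and the paper does this precisely by avoiding the integral representation of $\phi_a$ altogether: Step~4 of the proof of \Cref{Proposition - Finite Yukawa Regularity Est} convolves the differential equation $-\Delta \phi_a + a^2\phi_a = 4\pi(m - u_a^2)$ with a fixed bump $\varphi^2$, combines it with nonnegativity of $-\Delta + \tfrac53 u^{4/3} - \phi$ and Jensen's inequality to obtain the nonlinear comparison inequality $-\Delta(\phi_a*\varphi^2) + a^2(\phi_a*\varphi^2) + (\phi_a*\varphi^2 - C)_+^{3/2} \leq C_0 M$, and then applies the maximum principle. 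That $a^{-2}$ never appears. Your proposal as written is also circular (it assumes an $a$-uniform $L^\infty_{\unif}$ bound on $u_a^2$ in order to derive one on $\phi_a$, which in turn is what you need to bound $u_a$), whereas the Solovej-type estimate $\tfrac{10}{9} u_a^{4/3} \leq \phi_a + C_S + a^2$ (\Cref{eq: Solovej Yuk est}) proved in Step~2 gives the implication $\|\phi_a^+\|_{L^\infty} \lesssim 1 \Rightarrow \|u_a\|_{L^\infty} \lesssim 1$ without any a priori bound on $u_a$; your pointwise ``near-maximum'' heuristic is in that spirit but lacks the auxiliary function $v = \lambda u^{4/3} - \phi - C(\lambda) - a^2$ and the argument on the open set $\{v > 0\}$ that make it rigorous for a $u_a$ that decays but need not attain a maximum.

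A secondary point: you devote a paragraph to a uniform lower bound $\inf u_a \geq c(a_0,M,\omega) > 0$ and then use it to justify the elliptic bootstrap. This proposition only asserts $u_a \geq 0$; the lower bound is the separate (and considerably harder, contradiction/compactness-based) content of \Cref{Proposition - Uniform Yuk inf u estimate all a} and the Appendix. It is also not needed here: the bootstrap to $H^4_{\unif}\times H^2_{\unif}$ only requires the $L^\infty_{\unif}$ bounds, since $t\mapsto t^{7/3}$ is $C^2$ on $[0,\infty)$ and $H^2_{\unif}(\R)\hookrightarrow W^{1,6}_{\unif}(\R)$ makes $u_a^{7/3}\in H^2_{\unif}(\R)$ without any positivity. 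Your sketch of the subsolution construction for the lower bound is also too thin to constitute a proof: producing a strictly positive potential from the mass condition is only the first step, and converting it into a lower bound on $u_a$ uniform in both $x$ and $a$ is exactly what the paper's compactness argument is for.
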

\Cref{Proposition - Yukawa Regularity Est All a} can be generalised to obtain existence of Yukawa ground states corresponding to finite nuclear configurations, for sufficiently small $a > 0$. The following result will be used in \Cref{Proposition - Infinite Finite Ground state comparison} to compare the Yukawa ground state with its finite approximation.

\begin{proposition}
\label{Proposition - Yukawa Regularity Est}
For any nuclear distribution $m:\R \to \mathbb{R}_{\geq 0}$, satisfying
\begin{align*}
\| m \|_{L^{2}_{\unif}(\R)} \leq M,
\end{align*}
there exists $a_{0} = a_{0}(m) > 0$ such that for all $0 < a \leq a_{0}$, there exists $(u_{a},\phi_{a})$ solving \textnormal{(\ref{eq: u phi Yuk eq pair})}, satisfying $u_{a} \geq 0$ and
%For any nuclear configuration $m \in \mathcal{M}_{L^2}(M,\omega)$, there exists $a_{0} = a_{0}(\omega), R_{0} = R_{0}(\omega) > 0$ such that for any $\Omega \subset \R$ satisfying $B_{R_{0}}(x) \subset \Omega$ for some $x \in \R$, there exists $(u_{\Omega,a},\phi_{\Omega,a})$ solving \textnormal{\fn{(?)}} corresponding to $m_{\Omega} := m \cdot \chi_{\Omega}$, satisfying $u_{\Omega} \geq 0$ and
\begin{align}
&\| u_{a} \|_{H^{4}_{\unif}(\R)} + \| \phi_{a} \|_{H^{2}_{\unif}(\R)} \leq C(M). \label{eq:Yuk-u-H4-phi-H2-unif-est}
\end{align}
%where the final constant is independent of both $a$ and $\Omega$.
If $\int_{B_{R_{0}}(x)} m \geq c_{0} > 0$ for some $x \in \R$ and $R_{0}, c_{0}$, then $a_{0} = a_{0}(R_{0},c_{0}) > 0$.
\end{proposition}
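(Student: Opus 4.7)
The plan is to derive \Cref{Proposition - Yukawa Regularity Est} from \Cref{Proposition - Yukawa Regularity Est All a} by a density-thickening argument. The given $m$ need not satisfy the uniform lower mass bound required for membership in any class $\mathcal{M}_{L^{2}}(M,\omega)$, so I would restore such a lower bound by adding a vanishing uniform background, invoke \Cref{Proposition - Yukawa Regularity Est All a}, and then pass to the limit as the background vanishes.

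Concretely, set $m_{n} := m + 1/n$. Then $m_{n} \ge 0$, $\| m_{n} \|_{L^{2}_{\unif}(\R)} \le M + 1$ for $n$ large, and $\int_{B_{R}(x)} m_{n}(z)\,\id z \ge \tfrac{4\pi}{3n} R^{3}$ for all $R > 0$ and $x \in \R$, so $m_{n} \in \mathcal{M}_{L^{2}}(M+1, \omega_{n})$ with $\omega_{n} := (4\pi/(3n),\, 0)$. Fix $a_{0} > 0$ and apply \Cref{Proposition - Yukawa Regularity Est All a} to obtain Yukawa ground states $(u_{a,n},\phi_{a,n})$ for $m_{n}$ satisfying $u_{a,n} \ge 0$ together with
\begin{equation*}
\| u_{a,n} \|_{H^{4}_{\unif}(\R)} + \| \phi_{a,n} \|_{H^{2}_{\unif}(\R)} \le C(a_{0}, M+1).
\end{equation*}
The crucial observation is that the constant produced by \Cref{Proposition - Yukawa Regularity Est All a} is \emph{independent} of the lower-bound parameter $\omega$, so this estimate is uniform in $n$. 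I would then extract a subsequence converging weakly in $H^{4}_{\loc}(\R)$ and $H^{2}_{\loc}(\R)$, and strongly in $C^{3}_{\loc}$ and $C^{1}_{\loc}$ by Rellich--Kondrachov. Strong local convergence is enough to pass to the limit in the nonlinear terms $u_{a,n}^{7/3}$ and $\phi_{a,n} u_{a,n}$, and since $m_{n} \to m$ in $L^{2}_{\loc}(\R)$, the pair $(u_{a},\phi_{a})$ obtained in the limit satisfies \eqref{eq: u phi Yuk eq pair} with source $m$, preserves $u_{a} \ge 0$ pointwise, and inherits the uniform bounds by weak lower semicontinuity. Fixing a reference $a_{0}$ (independent of $m$) absorbs the $a_{0}$-dependence of the constant into $C(M)$.

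For the refinement $a_{0} = a_{0}(R_{0}, c_{0})$ in the case $\int_{B_{R_{0}}(x)} m \ge c_{0} > 0$, I would localise the construction: the maximum principle applied to the $\phi$--equation, combined with the $u$--equation, produces a lower bound on $u_{a,n}$ near $B_{R_{0}}(x)$ that depends only on $R_{0}$ and $c_{0}$, and this nondegeneracy is precisely what sets the admissible threshold for $a_{0}$, in the spirit of \cite[Proposition 2.2, Chapter 6]{C/LB/L}. The main obstacle is justifying the passage to the limit in the coupled nonlinear system: this relies on the uniform compact embeddings $H^{4}_{\unif} \hookrightarrow C^{3,\alpha}_{\loc}$ and $H^{2}_{\unif} \hookrightarrow C^{1,\alpha}_{\loc}$, which hold thanks to the $n$-independent estimates above combined with an Arzel\`a--Ascoli argument on an exhaustion of $\R$ by compact sets.
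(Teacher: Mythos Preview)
Your approach is correct but takes a genuinely different route from the paper.

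The paper proves \Cref{Proposition - Yukawa Regularity Est} via a \emph{thermodynamic limit}: it truncates $m$ to $m_{R_n} = m \cdot \chi_{B_{R_n}(0)}$, invokes the finite-system result \Cref{Proposition - Finite Yukawa Regularity Est} (which supplies existence and the uniform bounds \eqref{eq: uRn a H4 unif est}--\eqref{eq: phiRn a H2 unif est} for the truncated problem, with the threshold $a_0$ determined by the local mass condition $\int_{B_{R_0}(x)} m \geq c_0$ through Step~3 of that proof), and then passes to the limit $R_n \to \infty$ by compactness in $H^2_{\loc}$. Your route instead thickens $m$ by a uniform background $1/n$ to force membership in some $\mathcal{M}_{L^2}$, appeals directly to \Cref{Proposition - Yukawa Regularity Est All a}, and sends $n \to \infty$.

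Your argument is arguably cleaner and in fact yields slightly more: since \Cref{Proposition - Yukawa Regularity Est All a} holds for every $a_0 > 0$ with constant $C(a_0, M)$ independent of $\omega$, you may simply fix $a_0 = 1$ regardless of $m$, so the refinement $a_0 = a_0(R_0, c_0)$ is automatic and your discussion of ``nondegeneracy near $B_{R_0}(x)$'' via maximum principles is unnecessary. One caveat on logical ordering: in the paper, the proof of \Cref{Proposition - Yukawa Regularity Est All a} is written as ``apply \Cref{Proposition - Finite Yukawa Regularity Est All a} and follow the proof of \Cref{Proposition - Yukawa Regularity Est} in the unbounded case verbatim'', so taking \Cref{Proposition - Yukawa Regularity Est All a} as a black box inverts the paper's presentation --- though not its underlying logic, since that proof really only needs \Cref{Proposition - Finite Yukawa Regularity Est All a} plus the same compactness step. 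A minor technical slip: in $\R$ the Sobolev embeddings give $H^4 \hookrightarrow C^{2,1/2}$ and $H^2 \hookrightarrow C^{0,1/2}$, not $C^3$ and $C^1$; but strong $H^3_{\loc} \times H^1_{\loc}$ convergence via Rellich is already enough to pass to the limit in all the nonlinear terms.
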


%\begin{remark}
%\label{Remark - a_{0}}
%When $m \in \mathcal{M}_{L^{2}}(M,\omega)$, there exists $R_{0} = R_{0}(\omega) > 0$ such that $\int_{B_{R_{0}}(x)} m \geq 1$ for all $x \in \R$, hence there exists $a_{0} = a_{0}(\omega) > 0$ such that for all $0 < a \leq a_{0}$ and $m \in \mathcal{M}_{L^{2}}(M,\omega)$ there exists $(u_{a},\phi_{a})$ solving (\ref{eq: u phi Yuk eq pair}), satisfying $u_{a} \geq 0$ and
%\begin{align}
%&\| u_{a} \|_{H^{4}_{\unif}(\R)} + \| \phi_{a} \|_{H^{2}_{\unif}(\R)} \leq C(M). \label{eq:Yuk-u-H4-phi-H2-MinL2-unif-est}
%\end{align}
%\end{remark}

%\begin{proposition}
%\label{Proposition - Yukawa Regularity Est}
%For any nuclear configuration $m \in \mathcal{M}_{L^2}(M,\omega)$, there exists $a_{0} = a_{0}(\omega) > 0$ such that for any $0 < a \leq a_{0}$, the unique solution $(u_{a},\phi_{a})$ to (\ref{eq: u eq Yukawa})-(\ref{eq: phi eq Yukawa}) satisfies
%\begin{align}
%&\| u_{a} \|_{H^{4}_{\unif}(\R)} + \| \phi_{a} \|_{H^{2}_{\unif}(\R)} \, \leq C(a,M) \leq C(M), \label{eq:Yuk-u-H4-phi-H2-unif-est}
%\end{align}
%where the final constant is independent of $a$.
%\end{proposition}

%\fn{In order to get uniform estimates, we require that $u_{a,R_{n}} > 0$, so we need that $m \in \mathcal{M}_{L^{2}}(M,\omega)$ for the existence of a ground state, whereas in Paper 1 we only required $m \in L^{2}_{\unif}(\R)$.}

\begin{proposition}
\label{Proposition - Uniform Yuk inf u estimate all a}
Let $a_{0} > 0$ and $m \in \mathcal{M}_{L^{2}}(M,\omega)$, then for all $0 < a \leq a_{0}$ the corresponding Yukawa ground state $(u_{a},\phi_{a})$ is unique and there exists $c_{a_{0},M,\omega} > 0$ such that the electron density $u_{a}$ satisfies
\begin{align}
\inf_{x \in \R} u_{a}(x) &\geq c_{a_{0},M,\omega} > 0. \label{eq:u-Yuk-inf-est-all-a}
\end{align}

%There exists $c_{M,\omega} > 0$ such that for all $m \in
%\mathcal{M}_{L^{2}}(M,\omega)$ and $a > 0$, the corresponding Yukawa ground state $(u_{a},\phi_{a})$ is unique and the electron density $u_{a}$ satisfies
%\begin{align}
%\inf_{x \in \R} u_{a}(x) &\geq c_{M,\omega} > 0. \label{eq:u-Yuk-inf-est-all-a}
%\end{align}

\end{proposition}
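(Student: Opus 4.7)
The plan is to adapt the Coulomb TFW pointwise lower bound from \cite{Paper1} to the Yukawa setting, exploiting the uniform-in-$a$ regularity of \Cref{Proposition - Yukawa Regularity Est All a}. The argument has three parts: uniqueness, pointwise strict positivity, and a quantitative uniform lower bound via compactness in both $m$ and $a$.

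Uniqueness is obtained in the Appendix by adapting \cite[Chapter~6, Prop.~2.2]{C/LB/L} from small $a$ to all $a > 0$. The essential input is strict convexity of the TFW Yukawa energy in the density $\rho = u^{2}$: the bilinear form $D_{a}$ is strictly positive definite (its Fourier symbol $4\pi/(|\xi|^{2}+a^{2})$ is positive), and $\rho \mapsto \rho^{5/3}$ is strictly convex. Any two such ground states are then identified via a coupled subtraction argument on \eqref{eq: u phi Yuk eq pair}, in which the uniform regularity provided by \Cref{Proposition - Yukawa Regularity Est All a} ensures the integration-by-parts steps are well defined on $\R$.

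For pointwise positivity at each $x \in \R$, I would rewrite \eqref{eq:u-Yuk-inf-eq} as $(-\Delta + V_{a})u_{a} = 0$ with $V_{a} := \tfrac{5}{3}u_{a}^{4/3} - \phi_{a}$, which is in $L^{\infty}(\R)$ uniformly by Sobolev embedding applied to \Cref{Proposition - Yukawa Regularity Est All a}. Since $u_{a} \geq 0$ and is not identically zero (the state $u_{a} \equiv 0$ is incompatible with the ground state property when $m \not\equiv 0$, by a standard energy-lowering perturbation with a small positive test function), the strong maximum principle forces $u_{a} > 0$ pointwise.

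For the quantitative uniform bound I would argue by contradiction. Suppose $\inf_{(a,m,x) \in (0,a_{0}] \times \mathcal{M}_{L^{2}}(M,\omega) \times \R} u_{a}^{m}(x) = 0$; choose a sequence realizing this infimum and translate so that $\widetilde{m}_{n} := m_{n}(\cdot + x_{n}) \in \mathcal{M}_{L^{2}}(M,\omega)$ (using translation invariance of this class) and $u_{a_{n}}^{\widetilde{m}_{n}}(0) \to 0$. Weak-$*$ compactness of the $L^{2}_{\unif}$-ball yields $\widetilde{m}_{n} \rightharpoonup m_{\infty} \in \mathcal{M}_{L^{2}}(M,\omega)$ (the lower bound on ball averages is preserved by testing against indicator functions); extract further so that $a_{n} \to a_{\infty} \in [0, a_{0}]$ and, by the uniform $H^{4}_{\unif}$ bound of \Cref{Proposition - Yukawa Regularity Est All a}, $(u_{a_{n}}^{\widetilde{m}_{n}}, \phi_{a_{n}}^{\widetilde{m}_{n}}) \to (u_{\infty}, \phi_{\infty})$ in $C^{2}_{\loc} \times C^{0}_{\loc}$. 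If $a_{\infty} > 0$, uniqueness identifies $u_{\infty}$ with the strictly positive $u_{a_{\infty}}^{m_{\infty}}$; if $a_{\infty} = 0$, the limit solves \eqref{eq:u-phi-eq-pair} and coincides with the Coulomb ground state, which is strictly positive by \cite{Paper1}. Either case contradicts $u_{\infty}(0) = 0$. The main obstacle here is the Coulomb-limit passage $a_{n} \to 0$ in \eqref{eq:phi-Yuk-inf-eq}: the $a_{n}^{2} \phi_{a_{n}}$ term must vanish and the PDE must degenerate to \eqref{eq:phi-inf-eq}, all while $\widetilde{m}_{n}$ converges only weakly. This has to be carried out directly in the PDE formulation (rather than in the convolution representation via $Y_{a_{n}}/(4\pi)$, which is not $L^{1}$-convergent to $1/(4\pi|\cdot|)$), by testing against arbitrary smooth compactly supported functions and exploiting the strong $C^{0}_{\loc}$ convergence of $\phi_{a_{n}}$ and $C^{2}_{\loc}$ convergence of $u_{a_{n}}$.
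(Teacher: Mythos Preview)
Your compactness argument splits into $a_\infty=0$ and $a_\infty>0$. The first case is essentially the paper's proof of \Cref{Proposition - Uniform Yuk inf u estimate} (there $a_n\downarrow 0$ by design, and the contradiction comes directly from $-\Delta\widetilde\phi=4\pi\widetilde m$ together with (H2), without appealing to Coulomb uniqueness). The genuine difficulty is the case $a_\infty>0$, and your treatment of it is circular.

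You want uniqueness to identify $u_\infty$ with a ``strictly positive'' $u_{a_\infty}^{m_\infty}$, but both ingredients fail for infinite configurations. The strict-convexity argument you sketch proves uniqueness of the \emph{minimiser} in finite volume; for $m\in\mathcal{M}_{L^2}(M,\omega)$ the TFW energy is infinite and the ground state is only characterised as a distributional solution of \eqref{eq: u phi Yuk eq pair}. The uniqueness mechanism that does apply to the infinite PDE is the coupled subtraction argument underlying \Cref{Lemma - Exp Est Yuk Main Est}, and that argument manufactures its coercive term $\nu\int w^2\xi^2$ from $\inf u_1>0$ --- precisely the bound you are trying to establish. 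Your pointwise-positivity step has the same defect: the ``energy-lowering perturbation'' showing $u_a\not\equiv 0$ is a finite-volume statement that does not survive the thermodynamic limit on its own. Concretely, $(0,\,m_\infty*Y_{a_\infty})$ is a bona fide solution of \eqref{eq: u phi Yuk eq pair} with the right regularity, so nothing at the PDE level rules out $u_\infty\equiv 0$. And the direct route that works when $a_\infty=0$ breaks: testing $-\Delta\phi_\infty+a_\infty^2\phi_\infty=4\pi m_\infty$ against a bump yields only $\int_{B_R}m_\infty\leq C(R+a_\infty^2 R^3)$, which is compatible with (H2).

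This is why the paper treats $a>a_{\rm c}$ by a completely different argument in the Appendix: an energy comparison carried out on the \emph{finite} approximations $u_{a,R_n}$, where a test function built from \Cref{Lemma - Technical Lemma 2} forces $E(\varphi_{\varepsilon,k};k,R)\leq -C_1<0$ while $E(u_k;k,R)\to 0$. That produces a lower bound on $u_{a,R_n}$ over $B_1(0)$ uniform in $R_n$ (\Cref{Lemma - Yukawa Finite Lower Bound}), which passes to the thermodynamic limit and only \emph{then} feeds into the C/LB/L uniqueness argument. In short: in this paper uniqueness is a consequence of the uniform lower bound, not a tool for proving it.
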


%\begin{proposition}
%\label{Proposition - Uniform Yuk inf u estimate old}
%There exists $a_{\rm c} = a_{\rm c}(M,\omega) > 0$ and $c_{a_{\rm c},M,\omega} > 0$ such that for all $m \in
%\mathcal{M}_{L^{2}}(M,\omega)$ and $0 < a \leq a_{\rm c}$ the corresponding Yukawa ground state $(u_{a},\phi_{a})$ is unique and the electron density $u_{a}$ satisfies
%\begin{align}
%\inf_{x \in \R} u_{a}(x) &\geq c_{a_{\rm c},M,\omega} > 0. \label{eq:u-Yuk-inf-est-old}
%\end{align}
%
%\end{proposition}
Assuming higher regularity of the nuclear distributions implies higher regularity of the ground state. Therefore define for $k \in \mathbb{N}_{0}$
\begin{align*}
\mathcal{M}_{H^{k}}(M,\omega) = \bigg\{ \, m \in H^{k}_{\unif}(\R) \, \bigg| \,\,\, &\| m \|_{H^{k}_{\unif}(\R)} \leq M, \\ \, \, & \forall \, R > 0 \,\, \inf_{x \in \R} \int_{B_{R}(x)} m(z) \id z \geq \omega_{0} R^{3} - \omega_{1} \, \bigg\}.
\end{align*}
Arguing by induction and applying the uniform lower bound (\ref{eq:u-Yuk-inf-est-all-a})
yields the following result.

\begin{corollary}
\label{Corollary - General Est Ck Yuk version}
Let $a_{0} > 0$, $k \in \mathbb{N}_{0}$ and $m \in \mathcal{M}_{H^{k}}(M,\omega)$, then for all $0 < a \leq a_{0}$ the corresponding Yukawa ground state $(u_{a},\phi_{a})$ satisfies
\begin{align}
\| u_{a} \|_{H^{k+4}_{\unif}(\R)} +
\| \phi_{a} \|_{H^{k+2}_{\unif}(\R)} \leq C(a_{0},k,M,\omega). \label{eq:corr-k,u-phi-extra-reg-Yuk-est}
\end{align}
%\refer{Old Regularity Estimate!}
%\begin{align*}
%\sup_{x \in \R} \| u \|_{C^{k+4,1/2}(B_{1}(x))} +
%\sup_{x \in \R} \| \phi \|_{C^{k+2,1/2}(B_{1}(x))} \leq C(k,M,\omega).
%\end{align*}
\end{corollary}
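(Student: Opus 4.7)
The plan is to prove this by induction on $k \geq 0$, bootstrapping between the two elliptic equations in \eqref{eq: u phi Yuk eq pair}. The base case $k=0$ is exactly \Cref{Proposition - Yukawa Regularity Est All a}, which delivers the uniform $H^{4}_{\unif} \times H^{2}_{\unif}$ bound, and \Cref{Proposition - Uniform Yuk inf u estimate all a} simultaneously supplies the pointwise lower bound $u_a \geq c_{a_0,M,\omega} > 0$. Together with the uniform upper bound on $u_a$ coming from the Sobolev embedding $H^{2}_{\unif}(\R) \hookrightarrow L^{\infty}_{\unif}(\R)$, these two-sided bounds will be carried through the induction and are precisely what makes the nonlinear bootstrap go through.

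Assume the estimate at level $k-1$, so that $u_a \in H^{k+3}_{\unif}(\R)$ and $\phi_a \in H^{k+1}_{\unif}(\R)$ with a constant depending only on $(a_0,k-1,M,\omega)$. The first half of the inductive step applies interior elliptic regularity to the screened Poisson equation \eqref{eq:phi-Yuk-inf-eq}: its right-hand side $4\pi(m - u_a^2)$ lies in $H^{k}_{\unif}(\R)$, because $m \in H^{k}_{\unif}$ by hypothesis and $u_a^2 \in H^{k+3}_{\unif} \subset H^{k}_{\unif}$ by the Banach algebra property of $H^{s}_{\unif}(\R)$ for $s \geq 2$. Standard interior $H^{s}$-estimates for the operator $-\Delta + a^{2}$, combined with a covering argument using unit balls to pass from local to $\unif$-bounds, then upgrade $\phi_a$ to $H^{k+2}_{\unif}(\R)$, with a constant that is uniform in $a \in (0,a_{0}]$ since the only $a$-dependence enters through the bounded zeroth-order coefficient $a^{2} \leq a_{0}^{2}$.

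For the second half, rewrite \eqref{eq:u-Yuk-inf-eq} as $-\Delta u_a = \phi_a u_a - \tfrac{5}{3} u_a^{7/3}$. The bilinear term $\phi_a u_a$ lies in $H^{k+2}_{\unif}$ by the algebra property applied to $\phi_a \in H^{k+2}_{\unif}$ (just gained) and $u_a \in H^{k+3}_{\unif}$. The nonlinear term $u_a^{7/3}$ is where the uniform lower bound from \Cref{Proposition - Uniform Yuk inf u estimate all a} is essential: because $u_a$ is confined to a compact interval $[c_{a_0,M,\omega},\, C(a_0,M)] \subset (0,\infty)$ on which $t \mapsto t^{7/3}$ is $C^{\infty}$, a Moser-type composition estimate gives $u_a^{7/3} \in H^{k+3}_{\unif}$ with a bound depending only on the inductive constant and on the two-sided bounds of $u_a$. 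Hence the right-hand side of the $u_a$-equation lies in $H^{k+2}_{\unif}$, and interior $H^{k+2}_{\unif}$-regularity for $-\Delta$ yields $u_a \in H^{k+4}_{\unif}$, closing the induction with a constant $C(a_{0},k,M,\omega)$.

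The proof is conceptually routine, so the main technical obstacle is a bookkeeping one: ensuring that every constant produced along the way is uniform in $a \in (0,a_{0}]$. The two points to watch are that the interior elliptic constants for $-\Delta + a^{2}$ remain bounded as $a \to 0$ (they do, since they depend only on ellipticity and on an upper bound for the zeroth-order coefficient, here $a_{0}^{2}$), and that the composition estimate for $t \mapsto t^{7/3}$ uses only the fixed interval furnished by \Cref{Proposition - Yukawa Regularity Est All a} and \Cref{Proposition - Uniform Yuk inf u estimate all a}, rather than constants that could degenerate at either endpoint.
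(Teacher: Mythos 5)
Your proposal is correct and follows essentially the same route as the paper. The paper's own proof is a one-line delegation to the Coulomb analogue (\cite[Corollary 6.3]{Paper1}), stating that it proceeds "identically" with the Yukawa base-case regularity supplied by \Cref{Proposition - Yukawa Regularity Est All a}; the displayed text immediately before the corollary makes explicit that the argument is an induction using the uniform lower bound \eqref{eq:u-Yuk-inf-est-all-a}. Your bootstrap — upgrade $\phi_a$ via the screened Poisson equation, then upgrade $u_a$ via the $u$-equation, using the Banach algebra property for the quadratic terms, a Moser composition estimate for $u_a^{7/3}$ (which requires the two-sided bound on $u_a$), and noting that the only $a$-dependence in the elliptic constants is through $a^2 \leq a_0^2$ — is precisely what that reference carries out, so you have reconstructed the intended proof in full detail rather than deviating from it.
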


\subsection{Uniform Yukawa estimates}
The main result of this article is a uniform estimate 
%for (\ref{eq: u phi Yuk eq pair})
 comparing the Yukawa and Coulomb ground states corresponding to the same nuclear configuration. This result is essentially a consequence of \cite[Theorems 3.4 and 3.5]{Paper1}.

In the following, $(u,\phi) = (u_{0},\phi_{0})$ denotes the corresponding Coulomb ground state solving (\ref{eq:u-phi-eq-pair}), i.e the ground state with Yukawa parameter $a = 0$.

\begin{theorem}
\label{Theorem - Yukawa Coulomb Comparison}
Suppose $a_{0} > 0$, $k \in \mathbb{N}_{0}$, $m \in \mathcal{M}_{H^{k}}(M,\omega)$ and let $(u, \phi)$
denote the corresponding Coulomb ground state. For $0 < a \leq a_{0}$, let $(u_{a},\phi_{a})$ denote the corresponding Yukawa ground state, then there exists $C = C(a_{0},k,M,\omega) > 0$ such that 
\begin{align}
\| u_{a} - u \|_{W^{k+2,\infty}(\R)} + \| \phi_{a} - \phi \|_{W^{k+2,\infty}(\R)} \leq C a^{2}. \label{eq: Yukawa Coulomb pointwise k pos est}
\end{align}
\end{theorem}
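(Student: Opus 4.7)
The plan is to recast the Yukawa system as a Coulomb TFW system with a mildly perturbed nuclear distribution, and then invoke the pointwise stability estimates for the Coulomb TFW equations from \cite[Theorems 3.4 and 3.5]{Paper1}. Rearranging the Yukawa electrostatic equation (\ref{eq:phi-Yuk-inf-eq}) yields
\[ -\Delta \phi_{a} = 4\pi \bigl( \tilde m_{a} - u_{a}^{2} \bigr), \qquad \tilde m_{a} := m - \tfrac{a^{2}}{4\pi} \phi_{a}, \]
so that, together with (\ref{eq:u-Yuk-inf-eq}), the pair $(u_{a},\phi_{a})$ is identified as the \emph{Coulomb} TFW ground state associated with the modified distribution $\tilde m_{a}$. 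The task then reduces to comparing the Coulomb ground states corresponding to the two nearby distributions $m$ and $\tilde m_{a}$.

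To carry this out I first check that $\tilde m_{a} \in \mathcal{M}_{H^{k}}(M',\omega')$ with parameters independent of $a \in (0,a_{0}]$, and that $\|\tilde m_{a} - m\|_{H^{k}_{\unif}(\R)} = O(a^{2})$. By \Cref{Corollary - General Est Ck Yuk version} the bound $\|\phi_{a}\|_{H^{k+2}_{\unif}(\R)} \leq C(a_{0},k,M,\omega)$ holds uniformly, so
\[ \| \tilde m_{a} - m \|_{H^{k}_{\unif}(\R)} \leq \tfrac{a^{2}}{4\pi} \|\phi_{a}\|_{H^{k}_{\unif}(\R)} \leq C a^{2}, \]
and the Sobolev embedding gives a uniform $L^{\infty}$-bound on $\phi_{a}$, yielding
\[ \inf_{x \in \R} \int_{B_{R}(x)} \tilde m_{a} \geq \omega_{0} R^{3} - \omega_{1} - C a^{2} R^{3} \geq \tfrac{1}{2}\omega_{0} R^{3} - \omega_{1} \]
provided $a_{0}$ is chosen sufficiently small. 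With uniform class parameters for $m$ and $\tilde m_{a}$ established, I apply the Coulomb pointwise stability estimate of \cite{Paper1} to the ground-state pair $(u,\phi)$, $(u_{a},\phi_{a})$ and conclude
\[ \| u_{a} - u \|_{W^{k+2,\infty}(\R)} + \| \phi_{a} - \phi \|_{W^{k+2,\infty}(\R)} \leq C \|\tilde m_{a} - m\|_{H^{k}_{\unif}(\R)} \leq C a^{2}. \]

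The main obstacle is the self-referential character of this argument: the effective perturbation $\tilde m_{a} - m = -\tfrac{a^{2}}{4\pi}\phi_{a}$ depends on the unknown Yukawa potential, so the estimate is not a priori. What closes the loop is precisely that the regularity bounds in \Cref{Corollary - General Est Ck Yuk version} are established uniformly in $a \in (0,a_{0}]$ and are completely independent of any comparison with the Coulomb problem. A secondary technicality is that $\tilde m_{a}$ need not be pointwise non-negative; this is handled provided the Paper1 stability theorems are stated as a comparison of two TFW ground-state configurations (whose difference equations produce the positive factor $u^{2}+u u_{a}+u_{a}^{2}$ that drives the coercivity of the linearised system), rather than as an assertion requiring non-negativity of each individual nuclear distribution.
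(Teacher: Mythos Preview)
Your approach is essentially the paper's: both reduce to the Coulomb-type difference equation $-\Delta\psi = 4\pi(u_a^2-u^2) + R$ with $R = a^2\phi_a$, and then feed the uniform bound $\|\phi_a\|_{H^{k+2}_{\unif}} \leq C(a_0,k,M,\omega)$ from \Cref{Corollary - General Est Ck Yuk version} into the weighted energy estimates of \cite{Paper1}. The only difference is packaging: you interpret $R$ as coming from a modified nuclear charge $\tilde m_a = m - \tfrac{a^2}{4\pi}\phi_a$ and invoke \cite[Theorems~3.4--3.5]{Paper1} as a black box, whereas the paper states a stand-alone comparison lemma (\Cref{Lemma - Exp Est Yuk Main Est}) whose hypotheses are placed directly on the two ground-state pairs, with a general source term $R$ and distinct Yukawa parameters $0 \leq a_1 \leq a_2$; the theorem then follows by taking $m_1=m_2=m$, $a_1=0$, $a_2=a$, $R=a^2\phi_a$.

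This packaging difference has one substantive consequence. To cite \cite{Paper1} verbatim you need $\tilde m_a \in \mathcal{M}_{H^k}(M',\omega')$, and as you note this forces $a$ small (so that $\omega_0 - Ca^2 > 0$), whereas the theorem is stated for arbitrary $a_0>0$. The paper's lemma avoids the issue because with $m_1=m_2=m$ the class hypothesis is satisfied trivially for all $a_0$. Your final paragraph correctly diagnoses the fix: the $\mathcal{M}_{H^k}$ hypothesis on the second distribution in \cite[Theorem~3.5]{Paper1} is used only to secure $\inf u_a > 0$ and the $H^{k+4}_{\unif}\times H^{k+2}_{\unif}$ regularity of $(u_a,\phi_a)$, and you already have both from \Cref{Proposition - Uniform Yuk inf u estimate all a} and \Cref{Corollary - General Est Ck Yuk version}. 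So the argument closes for all $a_0>0$, but only after unpacking the cited result in precisely the way the paper's lemma makes explicit.
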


\begin{remark}
The error term in (\ref{eq: Yukawa Coulomb pointwise k pos est}) arises from the additional term in the Yukawa equation (\ref{eq:phi-Yuk-inf-eq}), as opposed to due to a difference in nuclear distributions in \cite[Theorems 3.4 and 3.5]{Paper1}. For this reason, the author believes that an analogous result to \Cref{Theorem - Yukawa Coulomb Comparison} also holds for point charge nuclei. \qed
\end{remark}

\Cref{Theorem - Yukawa Coulomb Comparison} can be generalised to compare two Yukawa ground states $(u_{a_{1}},\phi_{a_{1}})$, $ (u_{a_{2}},\phi_{a_{2}})$ corresponding to the same nuclear configuration, where the parameters $a_{1}, a_{2}$ differ.
\begin{corollary}
\label{Corollary - Yukawa Coulomb Comparison 2}
Let $a_{0} > 0$, $k \in \mathbb{N}_{0}$, $m \in \mathcal{M}_{H^{k}}(M,\omega)$ and suppose $0 < a_{1} \leq a_{2} \leq a_{0}$, then let $(u_{a_{1}}, \phi_{a_{1}}), (u_{a_{2}}, \phi_{a_{2}})$ denote the corresponding Yukawa ground states. There exists $C = C(a_{0},k,M,\omega) > 0$ such that 
\begin{align}
\| u_{a_{1}} - u_{a_{2}} \|_{W^{k+2,\infty}(\R)} + \| \phi_{a_{1}} - \phi_{a_{2}} \|_{W^{k+2,\infty}(\R)} \leq C \left( a_{2}^{2} - a_{1}^{2} \right). \label{eq: Yukawa Yukawa pointwise k pos est}
\end{align}
\end{corollary}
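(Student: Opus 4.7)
The plan is to adapt the proof of \Cref{Theorem - Yukawa Coulomb Comparison} directly to the comparison of two Yukawa ground states, rather than going through the Coulomb ground state by triangle inequality. The latter approach would only produce a bound of order $a_{1}^{2} + a_{2}^{2}$, which is suboptimal when $a_{1}$ is close to $a_{2}$; in particular, it would not recover the expected convergence to zero as $a_{1} \to a_{2}$. A direct comparison is therefore required.

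First, I would set $w := u_{a_{1}} - u_{a_{2}}$ and $\psi := \phi_{a_{1}} - \phi_{a_{2}}$. Subtracting the pair of equations \textnormal{(\ref{eq:u-Yuk-inf-eq})} for $u_{a_{1}}$ and $u_{a_{2}}$ gives
\begin{align*}
- \Delta w + \tfrac{5}{3}\bigl(u_{a_{1}}^{7/3} - u_{a_{2}}^{7/3}\bigr) - \bigl(\phi_{a_{1}} u_{a_{1}} - \phi_{a_{2}} u_{a_{2}}\bigr) = 0.
\end{align*}
For the potential equations, the key observation is the algebraic identity
\begin{align*}
a_{1}^{2} \phi_{a_{1}} - a_{2}^{2} \phi_{a_{2}} = a_{1}^{2} \psi + (a_{1}^{2} - a_{2}^{2}) \phi_{a_{2}},
\end{align*}
which allows the subtraction of \textnormal{(\ref{eq:phi-Yuk-inf-eq})} to be rewritten as
\begin{align*}
- \Delta \psi + a_{1}^{2} \psi = -4\pi \bigl(u_{a_{1}}^{2} - u_{a_{2}}^{2}\bigr) + \bigl(a_{2}^{2} - a_{1}^{2}\bigr) \phi_{a_{2}}.
\end{align*}

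This coupled system has exactly the same structure as the one arising in the proof of \Cref{Theorem - Yukawa Coulomb Comparison}, with the parameter $a$ there replaced by $a_{1}$, except that the forcing term that was $-a^{2}\phi$ in that proof is now $(a_{2}^{2} - a_{1}^{2}) \phi_{a_{2}}$. By \Cref{Corollary - General Est Ck Yuk version} applied to $\phi_{a_{2}}$, this forcing is uniformly bounded in $H^{k+2}_{\unif}(\R)$ by $C(a_{0},k,M,\omega)\bigl(a_{2}^{2} - a_{1}^{2}\bigr)$. The desired estimate \textnormal{(\ref{eq: Yukawa Yukawa pointwise k pos est})} should then follow by invoking the same Yukawa-adapted pointwise stability estimate that is used to deduce \textnormal{(\ref{eq: Yukawa Coulomb pointwise k pos est})}, applied with $a_{1}$ as the reference parameter.

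The main point to verify, and the principal potential obstacle, is that this stability estimate remains valid when the reference linearised operator is the $a_{1}$-Yukawa linearisation rather than the Coulomb linearisation. Since the additional term $+a_{1}^{2} \psi$ only strengthens the coercivity of the linearised potential equation, the pointwise stability arguments of \Cref{Theorem - Yukawa Coulomb Comparison} should go through without any weakening of the estimates. The remaining bookkeeping task is to track that all constants depend only on $a_{0}, k, M, \omega$, and not on $a_{1}$ or $a_{2}$ individually; this uniformity is already guaranteed by the uniform-in-$a \in (0,a_{0}]$ regularity provided by \Cref{Proposition - Yukawa Regularity Est All a} and \Cref{Corollary - General Est Ck Yuk version}, and by the uniform positive lower bound from \Cref{Proposition - Uniform Yuk inf u estimate all a}, which is what allows the linearised analysis to close.
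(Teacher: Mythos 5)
Your proposal is correct and is essentially the paper's own argument: the paper writes the subtracted potential equation in the form $-\Delta\psi + a_1^2\psi = 4\pi(u_2^2 - u_1^2) + R$ with $R = (a_2^2 - a_1^2)\phi_{a_2}$ and applies the uniform Yukawa stability estimate (Lemma~\ref{Lemma - Exp Est Yuk Main Est}, case (B)) together with the uniform $H^{k+2}_{\unif}(\R)$ bound on $\phi_{a_2}$ from Corollary~\ref{Corollary - General Est Ck Yuk version}. The only cosmetic difference is that the paper proves this corollary first and then derives Theorem~\ref{Theorem - Yukawa Coulomb Comparison} from it by taking $a_1 = 0$, rather than adapting the theorem's proof as you describe, but the underlying mechanism is identical.
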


\subsection{Pointwise Yukawa estimates}
Theorems \ref{Theorem - One inf pointwise stability estimate Yuk alt} and \ref{Theorem - Exponential Est Integral Yuk RHS} extend \cite[Theorems 3.4 and 3.5]{Paper1} to the Yukawa model and require the class of test functions
\begin{align}
H_{\gamma} &= \bigg \{ \, \xi \in H^{1}(\R) \, \bigg| \, |\nabla \xi(x)| \leq \gamma |\xi(x)| \, \forall \, x \in \R \, \bigg \} \label{eq:H-gamma-test-space}
\end{align}
for some $\gamma > 0$.
Observe that $e^{-\wt \gamma |\cdot|} \in H_{\gamma}$ for any $0 < \wt \gamma \leq \gamma$.
\begin{theorem}
\label{Theorem - One inf pointwise stability estimate Yuk alt}
Let $m_{1} \in \mathcal{M}_{L^{2}}(M,\omega)$, and let
$m_{2} : \R \to \mathbb{R}_{\geq 0}$ satisfy 
%$m_{2} \not\equiv 0$ and
\begin{align*}
\| m_{2} \|_{L^{2}_{\unif}(\R)} \leq M',
\end{align*}
%satisfy
%\begin{align*}
%\| m_{2} \|_{L^{2}_{\unif}(\R)} \leq M,
%\end{align*}
then there exists $a_{1} = a_{1}(\omega,m_{2}) > 0$ such that for all $0 < a \leq a_{1}$ there exist solutions $(u_{1,a},\phi_{1,a})$ and $(u_{2,a},\phi_{2,a})$
%$u_{2} \in H^4_{\rm unif}(\R),\phi_{2} \in H^2_{\rm unif}(\R)$, $u_{2} \geq 0$ on $\R$, 
to \textnormal{(\ref{eq: u phi Yuk eq pair})} corresponding to $m_{1}, m_{2}$, where $(u_{2,a},\phi_{2,a})$ satisfies
$u_{2,a} \geq 0$ and
\begin{align}
\| u_{2,a} \|_{H^{4}_{\unif}(\R)} &+ \| \phi_{2,a} \|_{H^{2}_{\unif}(\R)} \leq C(M'), \label{eq: u2 phi2 reg est}
\end{align}
independently of $a$. Define
\begin{align*}
w = u_{1,a} - u_{2,a}, \quad \psi = \phi_{1,a} - \phi_{2,a}, \quad R_{m} = 4 \pi (m_{1} - m_{2}),
\end{align*}
then there exist $C = C(M,M',\omega), \gamma = \gamma(M,M',\omega) >0$, such that for any $\xi \in H_{\gamma}$
\begin{align}
\int_{\R} \bigg( \sum_{|\alpha_{1}| \leq 4} |\partial^{\alpha_{1}}w|^{2} + \sum_{|\alpha_{2}| \leq 2} |\partial^{\alpha_{2}} \psi|^{2} \bigg) \xi^{2} \leq C \int_{\R} R_{m} \xi^{2}. \label{eq: w and psi partial xi global onesided est}
\end{align}
In particular, for any $y \in \R$,
\begin{align}
\sum_{|\alpha| \leq 2} |\partial^{\alpha} w(y)|^{2} + |\psi(y)|^{2} \leq C \int_{\R} |R_{m}(x)|^{2} e^{-2\gamma |x - y|} \id x. \label{eq: w and psi pointwise rhs exp integral onesided est}
\end{align}
\end{theorem}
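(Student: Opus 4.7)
The plan is to closely follow the strategy of \cite[Theorems 3.4 and 3.5]{Paper1} for the Coulomb TFW system, adapted to accommodate the Yukawa screening term. First I would obtain existence of $(u_{2,a}, \phi_{2,a})$ together with the regularity bound \eqref{eq: u2 phi2 reg est} by invoking \Cref{Proposition - Yukawa Regularity Est} applied to $m_{2}$, taking $a_{1}$ to be the minimum of the thresholds needed for both Yukawa ground states to exist. The crucial uniform lower bound $u_{1,a} \geq c > 0$ then comes from \Cref{Proposition - Uniform Yuk inf u estimate all a}, and supplies the pointwise positivity that underpins the coercivity below.

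Next I would subtract the two Yukawa systems and linearise the nonlinear terms via the mean value theorem, obtaining the coupled system
\begin{align*}
-\Delta w + \Bigl(\tfrac{35}{9} \bar u^{4/3} - \phi_{1,a}\Bigr) w &= u_{2,a}\, \psi, \\
-\Delta \psi + a^{2} \psi + 4\pi(u_{1,a} + u_{2,a}) w &= R_{m},
\end{align*}
with $\bar u$ lying pointwise between $u_{1,a}$ and $u_{2,a}$. The core of the weighted estimate \eqref{eq: w and psi partial xi global onesided est} is produced by multiplying the first equation by a suitable constant times $w \xi^{2}$ and the second by $\psi \xi^{2}$ and integrating; after integration by parts the Laplacians yield $\int |\nabla w|^{2} \xi^{2}$, $\int |\nabla \psi|^{2} \xi^{2}$ plus cross terms of the form $2 \int w \xi \nabla w \cdot \nabla \xi$ that are absorbed by Cauchy--Schwarz using the pointwise bound $|\nabla \xi| \leq \gamma \xi$, provided $\gamma$ is chosen small enough.

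The central coercivity step, lifted from \cite[Section 5]{Paper1}, eliminates the potentially negative $-\phi_{1,a} w^{2}$ contribution by invoking the Euler--Lagrange identity $\phi_{1,a} = \tfrac{5}{3} u_{1,a}^{4/3} - \Delta u_{1,a}/u_{1,a}$ and treating the Hardy-type term $\int (\Delta u_{1,a}/u_{1,a}) w^{2} \xi^{2}$ by integration by parts against $\nabla u_{1,a} \cdot \nabla(w^{2} \xi^{2}/u_{1,a})$; combined with the $\bar u^{4/3} w^{2} \xi^{2}$ term and the uniform lower bound on $u_{1,a}$, this yields a strictly positive quadratic form whose constants depend only on $(M, M', \omega)$. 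The new Yukawa term $a^{2} \int \psi^{2} \xi^{2}$ enters with a favourable sign and can simply be discarded, so no loss of uniformity in $a \in (0, a_{0}]$ occurs. Once the estimate on $(w, \nabla w, \psi, \nabla \psi)$ is in place, the higher-derivative terms in \eqref{eq: w and psi partial xi global onesided est} follow by differentiating the system, applying $\partial^{\alpha}$, and iterating the same weighted argument, with coefficient regularity supplied by \Cref{Proposition - Yukawa Regularity Est All a}.

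Finally, the pointwise bound \eqref{eq: w and psi pointwise rhs exp integral onesided est} is obtained by substituting the admissible weight $\xi(x) = e^{-\gamma |x - y|}$, which lies in $H_{\gamma}$ since $|\nabla \xi| = \gamma \xi$, into \eqref{eq: w and psi partial xi global onesided est}, and then invoking Morrey embedding on the unit ball around $y$ to convert the weighted $H^{2}$ control into pointwise control at $y$. I expect the main obstacle to be the coercivity bookkeeping: one must simultaneously track the cross-coupling between the $w$ and $\psi$ equations, the Hardy-type rearrangement of $\phi_{1,a}$, and the $\gamma$-dependent integration-by-parts errors, and verify that every such error is absorbed into the positive part of the quadratic form uniformly in $a \in (0, a_{0}]$ and $\xi \in H_{\gamma}$.
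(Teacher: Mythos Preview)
Your overall architecture is right, but the coercivity step as written does not close. With your asymmetric linearisation the potential in the $w$-equation is $\tfrac{35}{9}\bar u^{4/3} - \phi_{1,a}$; after feeding in the ground-state representation $L_{1} = -\Delta + \tfrac{5}{3}u_{1,a}^{4/3} - \phi_{1,a} \geq 0$ you are left with the residual $\tfrac{35}{9}\bar u^{4/3} - \tfrac{5}{3}u_{1,a}^{4/3}$, and since $\bar u$ lies pointwise between $u_{1,a}$ and $u_{2,a}$ this residual can be as negative as $-\tfrac{5}{3}u_{1,a}^{4/3}$ wherever $u_{2,a}$ is small. Relatedly, the $w\psi$ cross terms produced by your two tested equations are $\int u_{2,a}\psi w\xi^{2}$ and $-4\pi\int(u_{1,a}+u_{2,a})w\psi\xi^{2}$, which do not cancel for any fixed choice of multiplicative constants. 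The paper avoids both problems by using the \emph{symmetric} decomposition $\phi_{1}u_{1}-\phi_{2}u_{2} = \tfrac12(\phi_{1}+\phi_{2})w + \tfrac12\psi(u_{1}+u_{2})$, which makes the cross terms cancel exactly and produces the operator $L=\tfrac12(L_{1}+L_{2}) = -\Delta + \tfrac{5}{6}(u_{1}^{4/3}+u_{2}^{4/3}) - \tfrac12(\phi_{1}+\phi_{2}) \geq 0$ plus a leftover $\nu = \tfrac12(u_{1}^{4/3}+u_{2}^{4/3}) \geq \tfrac12(\inf u_{1})^{4/3}$.

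The price is that $L\geq 0$ requires \emph{both} $u_{1,a}>0$ and $u_{2,a}>0$, while \Cref{Proposition - Yukawa Regularity Est} only gives $u_{2,a}\geq 0$ for general $m_{2}$ with unbounded support. This is the step your plan is missing entirely: the paper first establishes the weighted estimate under the extra hypothesis $u_{2}>0$ (this is the content of \Cref{Lemma - Exp Est Yuk Main Est} under condition~(A)), and then removes that hypothesis by a truncation argument. One sets $m_{2,R_{n}} = m_{2}\chi_{B_{R_{n}}}$, for which \Cref{Proposition - Finite Yukawa Regularity Est} guarantees $u_{2,a,R_{n}}>0$ strictly, applies the weighted estimate with constants independent of $R_{n}$, and passes to the limit $R_{n}\to\infty$ using dominated convergence. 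Without this thermodynamic-limit step (or a genuinely different coercivity mechanism that does not need $u_{2}>0$) the proof does not go through.
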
 

\Cref{Theorem - One inf pointwise stability estimate Yuk alt} can be generalised to
obtain higher-order pointwise estimates, but this requires that $m_{1}, m_{2} \in \mathcal{M}_{H^{k}}(M,\omega)$ for some
$k \in \mathbb{N}_{0}$ to ensure that \emph{both}
$\inf u_{1}, \inf u_{2} > 0$.
%\Cref{Theorem - One inf pointwise stability estimate Yuk alt} can be generalised to
%obtain higher-order pointwise estimates, but this requires \emph{both}
%$\inf u_{1}, \inf u_{2} > 0$, hence we need to assume
%$m_{1}, m_{2} \in \mathcal{M}_{H^{k}}(M,\omega)$ for some
%$k \in \mathbb{N}_{0}$.

\begin{theorem}
\label{Theorem - Exponential Est Integral Yuk RHS}
Let $a_{0} > 0$, $k \in \mathbb{N}_{0}$, $m_{1}, m_{2} \in \mathcal{M}_{H^{k}}(M,
\omega)$ and for $0 < a \leq a_{0}$, let $(u_{1,a},\phi_{1,a}), (u_{2,a},\phi_{2,a})$ denote the corresponding Yukawa ground states. Define
\begin{align*}
w = u_{1,a} - u_{2,a}, \quad \psi = \phi_{1,a} - \phi_{2,a}, \quad R_{m} = 4 \pi (m_{1} - m_{2}),
\end{align*}
then there exist $C = C(a_{0},k,M,\omega), \gamma = \gamma(a_{0},M,\omega) > 0$, independent of $a$, such that for any $\xi \in H_{\gamma}$
\begin{align}
\int_{\R} \bigg( \sum_{|\alpha_{1}| \leq k+4} |\partial^{\alpha_{1}} w|^{2} + \sum_{|\alpha_{2}| \leq k+2} |\partial^{\alpha_{2}} \psi|^{2} \bigg) \xi^{2} \leq C \int_{\R} \sum_{|\beta| \leq k} |\partial^{\beta} R_{m}|^{2} \xi^{2}. \label{eq: w and psi partial xi global est}
\end{align}
In particular, for any $y \in \R$,
\begin{align}
\sum_{|\alpha_{1}| \leq k+2} |\partial^{\alpha_{1}} w(y)|^{2} + \sum_{|\alpha_{2}| \leq k} |\partial^{\alpha_{2}} \psi(y)|^{2} \leq C \int_{\R} \sum_{|\beta| \leq k} |\partial^{\beta} R_{m}(x)|^{2} e^{-2\gamma |x - y|} \id x. \label{eq: w and psi Yuk pointwise rhs exp integral est}
\end{align}
\end{theorem}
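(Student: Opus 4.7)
The plan is to argue by induction on $k \in \mathbb{N}_0$, using \Cref{Theorem - One inf pointwise stability estimate Yuk alt} as the base case, deriving the higher-order estimate by differentiating the difference system for $(w,\psi)$ and re-applying the same coercive energy argument. The pointwise bound (\ref{eq: w and psi Yuk pointwise rhs exp integral est}) will then follow from (\ref{eq: w and psi partial xi global est}) by choosing an exponentially decaying weight $\xi$ and invoking a Sobolev embedding on unit balls. Throughout, the extra Yukawa mass term $a^{2}\psi$ has the favourable sign and does not interfere with any of the estimates.

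\emph{Base case.} For $k = 0$, the hypothesis $m_1,m_2 \in \mathcal{M}_{H^0}(M,\omega) = \mathcal{M}_{L^2}(M,\omega)$ falls within that of \Cref{Theorem - One inf pointwise stability estimate Yuk alt} with $M' = M$; the estimates (\ref{eq: w and psi partial xi global est}) and (\ref{eq: w and psi Yuk pointwise rhs exp integral est}) then reduce to (\ref{eq: w and psi partial xi global onesided est}) and (\ref{eq: w and psi pointwise rhs exp integral onesided est}), with constants and decay rate depending on $M,\omega,a_0$ alone.

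\emph{Inductive step.} Assume (\ref{eq: w and psi partial xi global est}) for $k-1$ and fix a multi-index $\alpha$ with $|\alpha| \leq k$. Subtracting the two TFW--Yukawa systems gives the quasilinear system
\begin{align*}
-\Delta w + F_{a} w - u_{2,a}\psi &= 0,\\
-\Delta \psi + a^{2} \psi + G_{a} w &= R_{m},
\end{align*}
with coefficients $F_{a} = \tfrac{35}{9}\int_{0}^{1}(t u_{1,a}+(1-t)u_{2,a})^{4/3}\id t - \phi_{1,a}$ and $G_{a} = 4\pi(u_{1,a}+u_{2,a})$. Applying $\pa$ and expanding via the Leibniz rule splits each equation into the same linearised system for $(\pa w, \pa \psi)$ together with a source coupling $\pa R_m$ to finitely many products $(\partial^{\beta} w)(\partial^{\alpha-\beta} H)$ and $(\partial^{\beta} \psi)(\partial^{\alpha-\beta} H)$ for $|\beta|<|\alpha|$, where $H$ ranges over $\{F_{a}, G_{a}, u_{2,a}\}$. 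By \Cref{Corollary - General Est Ck Yuk version} each $H$ lies in $H^{k}_{\unif}(\R)$ with a bound uniform in $a\in(0,a_{0}]$, hence is $L^\infty$-bounded; the inductive hypothesis then controls all lower-order contributions by the right-hand side of (\ref{eq: w and psi partial xi global est}). Testing with $(\pa w)\xi^{2}$ and $\tfrac{1}{4\pi}(\pa \psi)\xi^{2}$, integrating by parts, and using $|\nabla\xi|\leq \gamma\xi$ to absorb gradient commutators reproduces the coercive energy inequality of the base case, with the nonnegative term $\tfrac{a^{2}}{4\pi}\int (\pa\psi)^{2}\xi^{2}$ merely reinforcing the left-hand side. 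Rearranging each PDE to express $\Delta \pa w$ and $\Delta \pa \psi$ in terms of lower-order quantities bootstraps the bound to the full $H^{k+4}_{\unif}, H^{k+2}_{\unif}$ weighted regularity claimed.

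\emph{Pointwise estimate and main obstacle.} For fixed $y\in\R$ take $\xi$ to be a smooth $H^{1}$-regularisation of $x\mapsto e^{-\gamma|x-y|}$ (still in $H_{\gamma}$), restrict the left-hand side of (\ref{eq: w and psi partial xi global est}) to $B_{1}(y)$ (where $\xi^{2}\geq e^{-2\gamma}$), and apply the Sobolev embeddings $H^{k+4}(B_{1}(y))\hookrightarrow W^{k+2,\infty}(B_{1}(y))$ and $H^{k+2}(B_{1}(y))\hookrightarrow W^{k,\infty}(B_{1}(y))$ to $w$ and $\psi$ respectively. The principal delicate point, inherited from the Coulomb analysis in \cite{Paper1}, is the choice of the decay rate $\gamma$: it must be small enough, uniformly in $a\in(0,a_{0}]$ and in terms only of $a_{0}, M, \omega$, that the weight commutator $\gamma^{2}\int\xi^{2}|\pa w|^{2}$ can be absorbed into the coercive contribution arising from $F_{a}$, for which the uniform lower bound $u_{i,a}\geq c_{a_{0},M,\omega}>0$ of \Cref{Proposition - Uniform Yuk inf u estimate all a} is essential. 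Uniformity of all constants in $a$ is not a separate challenge: each auxiliary estimate invoked is already formulated uniformly in $a$, and the $a^{2}\psi$ term only contributes a favourable nonnegative boost.
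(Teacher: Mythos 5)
Your overall strategy matches the paper's: the paper also proves this via a coercive energy argument (testing the difference system with $w\xi^{2}$ and $\psi\xi^{2}$), dropping the favourable $a^{2}\int\psi^{2}\xi^{2}$ term, and then lifting to higher derivatives by Leibniz differentiation and elliptic bootstrapping. The paper simply packages this machinery into the reusable \Cref{Lemma - Exp Est Yuk Main Est}, invoked with case (B), $a_{1}=a_{2}=a$ and $R = 4\pi(m_{1}-m_{2})\in H^{k}_{\unif}(\R)$, and delegates the inductive/Leibniz details to the Coulomb analogue in \cite{Paper1}. Your proposal makes these steps explicit rather than factoring them out, but the mathematics is the same.

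One genuine flaw: the base case cannot be referenced to \Cref{Theorem - One inf pointwise stability estimate Yuk alt} as stated, because that theorem only asserts the estimate for $0 < a \leq a_{1}(\omega,m_{2})$, a threshold tied to the weaker hypothesis $\|m_{2}\|_{L^{2}_{\unif}}\leq M'$ (which only guarantees existence of a ground state for small $a$, via \Cref{Proposition - Yukawa Regularity Est}). The present theorem, by contrast, allows an arbitrary $a_{0}>0$, which is only possible because both $m_{1},m_{2}\in\mathcal{M}_{H^{k}}(M,\omega)\subset\mathcal{M}_{L^{2}}(M,\omega)$, so that \Cref{Proposition - Yukawa Regularity Est All a} and \Cref{Proposition - Uniform Yuk inf u estimate all a} supply existence, uniqueness and a uniform lower bound $\inf u_{i,a}\geq c_{a_{0},M,\omega}$ for the full range $a\in(0,a_{0}]$. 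The correct base case reference is therefore \Cref{Lemma - Exp Est Yuk Main Est}(B) with $k=0$ (or a direct rerun of the energy argument under the two-sided hypothesis), not the one-sided \Cref{Theorem - One inf pointwise stability estimate Yuk alt}. A smaller slip: you claim the coefficients $F_{a},G_{a},u_{2,a}$ lie in $H^{k}_{\unif}(\R)$ and are ``hence $L^{\infty}$-bounded,'' but $H^{0}_{\unif}=L^{2}_{\unif}$ does not embed into $L^{\infty}$; what \Cref{Corollary - General Est Ck Yuk version} actually gives is $u_{i,a}\in H^{k+4}_{\unif}$, $\phi_{i,a}\in H^{k+2}_{\unif}$, hence $F_{a}\in H^{k+2}_{\unif}$ and $G_{a},u_{2,a}\in H^{k+4}_{\unif}$, which do embed into $L^{\infty}$ for every $k\geq 0$ and moreover provide the $H^{k}_{\unif}$ control on their derivatives needed for the Leibniz sums.
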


\section{Applications}
\label{Section4}
\subsection{Yukawa and Coulomb forces}

Let $\eta \in C^{\infty}_{\textnormal{c}}(B_{R_{0}}(0))$ be radially symmetric
and satisfy $\eta \geq 0$ and $\int_{\R} \eta = 1$ describe the charge density
of a single (smeared) nucleus, for some fixed $R_{0} > 0$. For any countable
collection of nuclear coordinates
$Y = (Y_{j})_{j \in \mathbb{N}} \in (\R)^\mathbb{N}$, let the corresponding
nuclear configuration be defined by
\begin{align}
  m_{Y}(x) = \sum_{j \in \mathbb{N}} \eta(x - Y_{j}). \label{eq: m Y ass def}
\end{align}
A natural space of nuclear coordinates, related to the $\mathcal{M}_{H^k}$
spaces is
\begin{align}
  \mathcal{Y}_{L^{2}}(M,\omega) := \{ \, Y \in (\R)^\mathbb{N}  \, | \, m_{Y} \in \mathcal{M}_{L^{2}}(M,\omega)  \, \}. \label{eq: Y space def}
\end{align}

For any $Y \in \mathcal{Y}_{L^{2}}(M,\omega)$ and $a > 0$, there exists a unique Yukawa ground state $(u_{a},\phi_{a})$ corresponding to $m = m_{Y}$. Two definitions for the energy density for an infinite system are provided, for bounded $\Omega \subset \R$:
\begin{align}
\int_{\Omega} \mathcal{E}_{1,a}(Y;x) \id x &:= \int_{\Omega} |\nabla u_{a}|^{2} + \int_{\Omega} u_{a}^{10/3} + \frac{1}{2} \int_{\Omega} \phi_{a}(m-u_{a}^{2}), \label{eq: TFW energy 1} \\
\int_{\Omega} \mathcal{E}_{2,a}(Y;x) \id x &:= \int_{\Omega} |\nabla u_{a}|^{2} + \int_{\Omega} u_{a}^{10/3} + \frac{1}{8\pi} \left( \int_{\Omega} |\nabla \phi_{a}|^{2} + a^{2} \int_{\Omega} \phi_{a}^{2} \right),
\label{eq: TFW energy 2}
\end{align}
which satisfy
$\mathcal{E}_{1,a}(Y;\cdot), \mathcal{E}_{2,a}(Y;\cdot) \in L^{1}_{\unif}(\R)$.

Suppose now that $\Omega \subset \R$ is a charge-neutral volume
\cite{YuTrinkleBaderVolumes}, that is, if $n$ is the unit normal to $\partial \Omega$,
then $\nabla \phi_{a} \cdot n = 0$ on $\partial \Omega$. Recall (\ref{eq:phi-Yuk-inf-eq}),
\begin{align*}
- \Delta \phi_{a} + a^{2} \phi_{a} = 4 \pi ( m - u_{a}^{2} ),
\end{align*}
it then follows that
\begin{align*}
\frac{1}{8\pi} \left( \int_{\Omega} |\nabla \phi_{a}|^{2} + a^{2} \int_{\Omega} \phi_{a}^{2} \right) = \frac{1}{8\pi} \int_{\Omega} (- \Delta \phi_{a} + a^{2} \phi_{a}) \phi_{a} + \int_{\partial \Omega} \phi_{a} \nabla \phi_{a} \cdot n = \frac{1}{2} \int_{\Omega} \phi_{a} (m - u^{2}_{a}),
\end{align*}
hence
\begin{align*}
\int_{\Omega} \mathcal{E}_{1,a}(Y;x) \id x = \int_{\Omega} \mathcal{E}_{2,a}(Y;x) \id x.
\end{align*}
Similarly, for finite systems and $\Omega = \R$, the two energies (\ref{eq: TFW energy 1})--(\ref{eq: TFW energy 2}) agree. Thus $\mathcal{E}_{1,a}, \mathcal{E}_{2,a}$ are two energy densities which are well-defined for infinite configurations.

Given $Y \in \mathcal{Y}_{L^{2}}(M,\omega)$, similarly define the Coulomb energy densities $\mathcal{E}_{1}(Y;\cdot)$, $\mathcal{E}_{2}(Y;\cdot) \in L^{1}_{\unif}(\R)$ \cite{Paper1}
\begin{align}
\mathcal{E}_{1}(Y;\cdot) &:= |\nabla u|^{2} + u^{10/3} + \frac{1}{2} \phi(m-u^{2}), \label{eq: E1 def} \\
\mathcal{E}_{2}(Y;\cdot) &:= |\nabla u|^{2} + u^{10/3} + \frac{1}{8\pi} |\nabla \phi|^{2}. \label{eq: E2 def}
\end{align}
By comparing the Yukawa and Coulomb energy densities, (\ref{eq: TFW energy 1})--(\ref{eq: TFW energy 2}) with (\ref{eq: E1 def})--(\ref{eq: E2 def}) respectively, then applying \Cref{Theorem - Yukawa Coulomb Comparison} and \Cref{Proposition - Yukawa Regularity Est} yields the convergence of the energy densities: for all $0 < a \leq a_{0}$
\begin{align}
\left \| \mathcal{E}_{1,a} - \mathcal{E}_{1} \right \|_{L^{2}_{\unif}(\R)} + \| \mathcal{E}_{2,a} - \mathcal{E}_{2} \|_{H^{1}_{\unif}(\R)} \leq C(a_{0},M) a^{2}. \label{eq:Yukawa-Coulomb-energy-density-conv}
\end{align} 
In (\ref{eq:Yukawa-Coulomb-energy-density-conv}), the regularity of the difference $\mathcal{E}_{1,a} - \mathcal{E}_{1}$ is limited by the nuclear distribution $m \in L^{2}_{\unif}(\R)$, whereas this term does not apppear in $\mathcal{E}_{2,a} -\mathcal{E}_{2}$, hence the latter possesses additional regularity.

The next result shows that the force generated by a nucleus converges when passing from the Yukawa to the Coulomb model.
\begin{theorem}
\label{Theorem - Forcing Yuk and Coulomb Comparison}
Let $a_{0} > 0$, $Y \in \mathcal{Y}_{L^{2}}(M,\omega)$ and $i \in \{ 1,2\}$, then for all $0 < a \leq a_{0}$ and $k \in \mathbb{N}$, the Yukawa force density $\partial_{Y_k} \mathcal{E}_{i,a}(Y,\cdot) \in L^{1}(\R)$ exists and satisfies
\begin{align}
\int_{\R} \frac{\partial \mathcal{E}_{1,a}}{\partial Y_{k}}(Y;x) \id x = \int_{\R} \frac{\partial \mathcal{E}_{2,a}}{\partial Y_{k}}(Y;x) \id x = \int_{\R} \phi_{a}(x) \, \frac{\partial m_Y(x)}{\partial Y_k} \id x. \label{eq: Yuk force equivalence}
\end{align}
In addition, the Coulomb force density $\partial_{Y_k} \mathcal{E}_{i}(Y,\cdot) \in L^{1}(\R)$ also exists and there exists $C = C(a_{0},M,\omega) > 0$ such that for all $0 < a \leq a_{0}$
\begin{align}
\left| \int_{\R} \left( \frac{\partial \mathcal{E}_{i,a}}{\partial Y_{k}} - \frac{\partial \mathcal{E}_{i}}{\partial Y_{k}} \right)(Y;x) \id x \right| &\leq C a^{2}. \label{eq: Yuk and Coulomb force comparison est}
\end{align}
\end{theorem}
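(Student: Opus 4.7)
The plan is to reduce the force integral to a Hellmann--Feynman type expression involving only the explicit $Y_k$ dependence of $m_Y$, then apply the $W^{2,\infty}$ convergence from \Cref{Theorem - Yukawa Coulomb Comparison}. The key observation is that $Y_k$ enters $m_Y$ only through one compactly supported term, since
\begin{align*}
\frac{\partial m_Y}{\partial Y_k}(x) = -\nabla \eta(x - Y_k),
\end{align*}
which is smooth and supported in $B_{R_0}(Y_k)$.

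First I would establish that $(u_a,\phi_a)$ is differentiable with respect to $Y_k$ as a map into a suitable uniform Sobolev space, and that $v_a := \partial_{Y_k} u_a$, $\psi_a := \partial_{Y_k} \phi_a$ solve the linearised Yukawa system with right-hand side $4\pi \partial_{Y_k} m_Y$. Existence and uniqueness of this linearised solution, together with uniform-in-$a$ bounds in $H^2_{\unif}(\R) \times H^2_{\unif}(\R)$, follow from the one-sided pointwise stability estimate \Cref{Theorem - One inf pointwise stability estimate Yuk alt} applied with $m_1 = m_Y$ and $m_2 = m_Y + t\,\partial_{Y_k} m_Y$, combined with a standard difference-quotient argument that uses the uniform lower bound on $u_a$ from \Cref{Proposition - Uniform Yuk inf u estimate all a}. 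The exponential localisation in \eqref{eq: w and psi pointwise rhs exp integral onesided est} guarantees that $v_a, \psi_a$ decay away from the support of $\partial_{Y_k} m_Y$.

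Next I would verify \eqref{eq: Yuk force equivalence}. Working with $\mathcal{E}_{2,a}$, differentiate under the integral sign on a ball $B_R$ containing $\mathrm{supp}(\partial_{Y_k} m_Y)$ and chosen large enough that $v_a,\psi_a$ are exponentially small on $\partial B_R$. The terms involving $v_a$ combine, after integration by parts using \eqref{eq:u-Yuk-inf-eq}, into $\int_{B_R}(-\Delta u_a + \tfrac53 u_a^{7/3} - \phi_a u_a)v_a + \text{boundary terms} = \text{boundary terms}$, which vanish in the limit $R\to\infty$. Similarly, the $\phi_a$ variation yields $\tfrac{1}{4\pi}\int (-\Delta\phi_a + a^2\phi_a)\psi_a$ which, using \eqref{eq:phi-Yuk-inf-eq}, cancels with the cross term, leaving only the explicit dependence $\int \phi_a\,\partial_{Y_k} m_Y$. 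For $\mathcal{E}_{1,a}$ one applies the same argument, or alternatively invokes the identity $\int \mathcal{E}_{1,a} = \int \mathcal{E}_{2,a}$ over any charge-neutral volume containing the support of $\partial_{Y_k} m_Y$ with $\nabla\phi_a \cdot n = 0$ on the boundary up to exponentially small error. The Coulomb case is treated identically, relying on the analogous results from \cite{Paper1}, to obtain $\int \partial_{Y_k}\mathcal{E}_i = \int \phi\,\partial_{Y_k} m_Y$.

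Finally, the convergence estimate \eqref{eq: Yuk and Coulomb force comparison est} reduces to
\begin{align*}
\left| \int_{\R} (\phi_a - \phi)\,\frac{\partial m_Y}{\partial Y_k} \right| \leq \| \phi_a - \phi \|_{L^{\infty}(B_{R_0}(Y_k))} \, \| \nabla \eta \|_{L^1(\R)} \leq C a^2,
\end{align*}
by \Cref{Theorem - Yukawa Coulomb Comparison} applied with $k = 0$. The main obstacle is the rigorous justification of the differentiation step in the infinite setting: one must show that the tail contributions from differentiating $\mathcal{E}_{i,a}$ outside a large ball are negligible, which is where the exponential localisation of $v_a,\psi_a$ guaranteed by \eqref{eq: w and psi pointwise rhs exp integral onesided est} is essential, uniformly in $a \in (0,a_0]$.
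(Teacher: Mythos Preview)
Your approach is essentially the same as the paper's: the paper packages the differentiability step as \Cref{Lemma - Linearised TFW Results}, establishing the linearised solution $(\ou_a,\ophi_a)$ via difference quotients against the perturbed configuration $m_h$ obtained by physically moving the $k$-th nucleus (rather than your $m_2 = m_Y + t\,\partial_{Y_k} m_Y$, which need not be non-negative and so cannot be fed directly into \Cref{Theorem - One inf pointwise stability estimate Yuk alt}), and then simplifies the force integral to $\int \phi_a\,\om$ using both \eqref{eq:u-Yuk-inf-eq} and the linearised potential equation \eqref{eq: ophi Yuk eq}, exactly along the lines you outline. The final comparison via \Cref{Theorem - Yukawa Coulomb Comparison} is identical.
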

The expression (\ref{eq: Yuk force equivalence}) shows that the forces generated by the energy densities $\mathcal{E}_{1,a}$ and $\mathcal{E}_{2,a}$ are identical. Also, (\ref{eq: Yuk and Coulomb force comparison est}) establishes an $O(a^{2})$ convergence of forces when passing from the Yukawa to the Coulomb model. 
%The author believes that an $O(a^{2})$ error also holds for higher derivatives of the energy difference, which suggests an $O(a^2)$ error in the equilibrium configuration.
%\begin{align*}
%\frac{\partial E^{\ell}_{i}}{\partial \ell_{0}} = \lim_{h \to 0} \frac{E^{\ell}_{i}(m_{h}) - E^{\ell}_{i}(m)}{h}, \quad \frac{\partial \mathcal{E}^{\ell}_{i}}{\partial \ell_{0}}(x) = \lim_{h \to 0} \frac{ \left(\mathcal{E}^{\ell}_{i}(m_{h}) - \mathcal{E}^{\ell}_{i}(m) \right)(x)}{h} \text{ for all } x \in \R,
%\end{align*}
%exist and satisfy
%\frac{\partial E^{\ell}_{i}}{\partial \ell_{0}} &= \int_{\R} \frac{\partial \mathcal{E}^{\ell}_{i}}{\partial \ell_{0}} \varphi_{\ell} + \int_{\R} \mathcal{E}^{\ell}_{i} \varphi_{\ell}', \label{eq: Forcing integral eq} \\
%In addition, the following integrals exist and satisfy
%\begin{align}
%\int_{\R} \frac{\partial \mathcal{E}^{\ell}_{1}}{\partial \ell_{0}} = \int_{\R} \frac{\partial \mathcal{E}^{\ell}_{2}}{\partial \ell_{0}} = \int_{\R} \phi \, \om. \label{eq: Forcing 1 2 equal est}
%\end{align}

\subsection{Thermodynamic limit estimates}
\label{Subsection - Thermodynamic Limit Estimates}
The following result extends \cite[Proposition 4.1]{Paper1} to the Yukawa setting, providing an estimate for comparing the infinite Yukawa ground state
with its finite approximation, over compact sets, thus providing explicit rates
of convergence for the thermodynamic limit. This is discussed in
\Cref{rem:rdl-estimate}.

Interpreted differently, the result yields estimates on the decay of the
perturbation from the bulk electronic structure at a domain boundary.

\begin{proposition}
\label{Proposition - Infinite Finite Ground state comparison}
Let $m \in \mathcal{M}_{L^{2}}(M,\omega)$, $\Omega \subset \R$ be open and suppose there exists
$m_{\Omega}: \R \to \mathbb{R}_{\geq 0}$ such that 
%$m_{\Omega} \not \equiv 0$,
$m_{\Omega} = m$ on $\Omega$ and $\| m_{\Omega} \|_{L^{2}_{\unif}(\R)} \leq M$
\textnormal{(}e.g., $m_\Omega = m \chi_\Omega$\textnormal{)}. Then there exists $a_{0} = a_{0}(\omega,m_{\Omega}) > 0$ such that for all $0 < a \leq a_{0}$ there exists a ground state $(u_{a},\phi_{a})$ corresponding to $m$ and $(u_{\Omega,a},\phi_{\Omega,a})$ solving \textnormal{(\ref{eq: u phi Yuk eq pair})} with
$m=m_{\Omega}$, $u_{\Omega,a} \geq 0$ and
$C = C(a_{0},M, \omega), \gamma = \gamma(a_{0},M,\omega) > 0$, independent of $a$ and $\Omega$, such
that for all $y \in \Omega$
\begin{align}
\sum_{|\alpha| \leq 2} |\partial^{\alpha} (u_{a} - u_{\Omega,a})(y)| + |(\phi_{a} - \phi_{\Omega,a})(y)| \leq C e^{-\gamma \textnormal{dist}(y,\partial \Omega)}.  \label{eq: w and psi Rbuf est}
\end{align}
\end{proposition}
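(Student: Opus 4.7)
The strategy is to apply the one-sided pointwise stability estimate \Cref{Theorem - One inf pointwise stability estimate Yuk alt} with $m_1 = m \in \mathcal{M}_{L^2}(M,\omega)$ and $m_2 = m_\Omega$, exploiting that $R_m = 4\pi(m - m_\Omega)$ vanishes on $\Omega$.

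First, existence and $a$-uniform regularity of $(u_a,\phi_a)$ follow from \Cref{Proposition - Yukawa Regularity Est All a}, and of $(u_{\Omega,a},\phi_{\Omega,a})$ from \Cref{Proposition - Yukawa Regularity Est} applied to $m_\Omega$; the threshold $a_0$ is allowed to depend on $m_\Omega$, matching the statement. The hypothesis $\|m_\Omega\|_{L^2_\unif} \leq M$ guarantees that the ensuing regularity bounds depend only on $a_0$ and $M$.

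With $w = u_a - u_{\Omega,a}$ and $\psi = \phi_a - \phi_{\Omega,a}$, \Cref{Theorem - One inf pointwise stability estimate Yuk alt} (taking $M' = M$) produces constants $C,\gamma > 0$ depending only on $M,\omega$, together with the pointwise bound
\begin{align*}
\sum_{|\alpha|\leq 2} |\partial^\alpha w(y)|^2 + |\psi(y)|^2 \leq C \int_{\R} |R_m(x)|^2 e^{-2\gamma|x-y|} \id x.
\end{align*}
Since $m_\Omega = m$ on $\Omega$, the integrand is supported in $\R \setminus \Omega$; thus for any $y \in \Omega$ and any $x$ in the support, $|x-y| \geq d := \textnormal{dist}(y,\partial\Omega)$. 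Writing $e^{-2\gamma|x-y|} \leq e^{-\gamma d}\, e^{-\gamma|x-y|}$, factoring out $e^{-\gamma d}$, and controlling the residual $\int_\R |R_m|^2 e^{-\gamma|x-y|} \id x$ uniformly in $y$ via a dyadic annular decomposition of $\R$ combined with $\|R_m\|_{L^2_\unif} \leq 8\pi M$, then taking square roots (using $\sqrt{a+b} \leq \sqrt{a} + \sqrt{b}$), yields the desired estimate with $\gamma$ replaced by $\gamma/2$.

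The only delicate point is uniformity of $C$ and $\gamma$ in $\Omega$: this is built into \Cref{Theorem - One inf pointwise stability estimate Yuk alt}, which depends on $m_2$ only through its $L^2_\unif$-norm bound. All $\Omega$-dependence is therefore confined to the threshold $a_0(m_\Omega)$, as asserted.
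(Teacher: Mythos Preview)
Your proposal is correct and follows essentially the same approach as the paper, which simply invokes \Cref{Theorem - One inf pointwise stability estimate Yuk alt} and then refers to the identical Coulomb argument in \cite[Proposition~4.1]{Paper1}. Your explicit handling of the exponential splitting $e^{-2\gamma|x-y|} \leq e^{-\gamma d}\,e^{-\gamma|x-y|}$ on the support of $R_m$ together with the $L^2_\unif$ annular bound is exactly the content of that reference, so nothing is missing.
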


\begin{remark}
  \label{rem:rdl-estimate}
  Let $R>0$ and $R_n \uparrow \infty$, then applying \Cref{Proposition -
      Infinite Finite Ground state comparison}, with $\Omega = B_{R_n}(0)$ and
    $m_\Omega = m_{R_n}$ and $0 < a \leq a_{0} = a_{0}(\omega)$ gives a rate of convergence for the finite
    approximation $(u_{a,R_{n}},\phi_{a,R_{n}})$, solving (\ref{eq: u phi Yuk eq pair}),
    to the ground state $(u_{a},\phi_{a})$
  \begin{align}
    \| u_{a} - u_{a,R_{n}} \|_{W^{2,\infty}(B_R(0))} 
    + \| \phi_{a} - \phi_{a,R_{n}} \|_{L^{\infty}(B_R(0))} \leq 
    C e^{- \gamma ( R_{n} - R )}. 
    \label{eq: w and psi Rn local exp conv est}
  \end{align}
  This strengthens the result that $(u_{a,R_{n}},\phi_{a,R_{n}})$ converges to
  $(u_{a},\phi_{a})$ pointwise almost everywhere along a subsequence \cite{C/LB/L}. \qed
\end{remark}

\subsection{Pointwise stability and neutrality estimates}

The following results extend \cite[Corollary 4.2, Theorem 4.3]{Paper1} to the Yukawa model. \Cref{Corollary - Exponential Estimates Yuk Consequences} shows that the decay properties of the nuclear perturbation are inherited by the response of the Yukawa ground state, and \Cref{Corollary - Neutrality Yuk Estimate} shows the neutrality of nuclear perturbations for the TFW equations in the Yukawa setting.

\begin{corollary}
\label{Corollary - Exponential Estimates Yuk Consequences}
Let $a_{0} > 0$, $k \in \mathbb{N}_{0}$, $m_{1}, m_{2} \in \mathcal{M}_{H^{k}}(M, \omega)$ and $0 < a \leq a_{0}$, then let $(u_{1,a},\phi_{1,a}), (u_{2,a},\phi_{2,a})$ denote the corresponding Yukawa ground states and define
\begin{align*}
w = u_{1,a} - u_{2,a}, \quad \psi = \phi_{1,a} - \phi_{2,a}, \quad R_{m} = 4 \pi (m_{1} - m_{2}).
\end{align*}
\begin{enumerate}
\item \textnormal{(Exponential Decay)} If $R_{m} \in H^{k}(\R)$ and
  $\spt(R_{m}) \subset B_{R}(0)$, or there exists $\gamma' > 0$ such that
$\sum_{|\beta| \leq k}| \partial^{\beta} R_{m}(x)| \leq C e^{-\gamma'|x|},$
%\begin{align*}
% | \sum_{|\beta| \leq k}| \partial^{\beta} R_{m}(x)| \leq C e^{-\gamma'|x|},
%\end{align*}
then there exist $C=C(a_{0},k,M,\omega), \gamma = \gamma(a_{0},M,\omega) > 0$ depending
also on $R$ or $\gamma'$ such that
\begin{align}
\sum_{|\alpha_{1}| \leq k+2} |\partial^{\alpha_{1}} w(x)| + \sum_{|\alpha_{2}| \leq k} |\partial^{\alpha_{2}} \psi(x)| \leq C e^{- \gamma |x|}. \label{eq: w psi local exp est}
\end{align}

\item \textnormal{(Algebraic Decay)} If there exist $C, r > 0$ such that
$\sum_{|\beta| \leq k}| \partial^{\beta} R_{m}(x)| \leq C (1 + |x|)^{-r}$
% \begin{align*}
% \sum_{|\beta| \leq k}| \partial^{\beta} R_{m}(x)| \leq C (1 + |x|)^{-r}
% \end{align*}
then there exists $C = C(a_{0},r,k,M,\omega) > 0$ such that
\begin{align}
\sum_{|\alpha_{1}| \leq k+2} |\partial^{\alpha_{1}} w(x)| + \sum_{|\alpha_{2}| \leq k} |\partial^{\alpha_{2}} \psi(x)| &\leq C (1 + |x|)^{-r}. \label{eq: w psi decay est}
\end{align}

\item \textnormal{(Global Estimates)} If $R_{m} \in H^{k}(\R)$ 
%and $m_{1}, m_{2} \in \mathcal{M}_{L^{2}}(M, \omega)$ then there exists $C = C(M,\omega) > 0$ such that
%\begin{align}
%\| w \|_{H^{4}(\R)} + \| \psi \|_{H^{2}(\R)} \leq C \| R_{m} \|_{L^{2}(\R)}. \label{eq: w psi Sobolev est Lp}
%\end{align}
%If $m_{1}, m_{2} \in \mathcal{M}_{C^{k}}(M, \omega)$
then there exists $C = C(a_{0},k,M,\omega) > 0$ such that
\begin{align}
\| w \|_{H^{k+4}(\R)} + \| \psi \|_{H^{k+2}(\R)} \leq C \| R_{m} \|_{H^{k}(\R)}. \label{eq: w psi Sobolev est}
\end{align}
\end{enumerate}
\end{corollary}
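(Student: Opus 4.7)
The plan is to deduce all three assertions as direct applications of \Cref{Theorem - Exponential Est Integral Yuk RHS}. Parts (1) and (2) follow from the pointwise bound (\ref{eq: w and psi Yuk pointwise rhs exp integral est}), while part (3) uses the weighted global estimate (\ref{eq: w and psi partial xi global est}); in every case the task reduces to controlling a convolution-type integral involving $\sum_{|\beta| \leq k} |\partial^{\beta} R_{m}(x)|^{2}$ against the exponential kernel $e^{-2\gamma|x - y|}$.

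For the exponential decay assertion in (1), I first handle the compactly supported case: for $|y| \geq 2R$ the constraint $x \in B_R(0)$ forces $|x - y| \geq |y|/2$, which gives $\int_{B_R(0)} \sum |\partial^{\beta} R_{m}|^{2} e^{-2\gamma|x - y|} \id x \leq \|R_{m}\|_{H^{k}}^{2} e^{-\gamma|y|}$; for $|y| \leq 2R$ the trivial bound $\|R_{m}\|_{H^{k}}^{2}$ can be absorbed into a rescaled exponential (with constant depending on $R$). For the pointwise exponential hypothesis, I would exploit the elementary splitting
\begin{align*}
\gamma'|x| + \gamma|x - y| \geq \tfrac{1}{2} \min(\gamma, \gamma')\, |y| + \tfrac{1}{2}\bigl(\gamma'|x| + \gamma|x - y|\bigr),
\end{align*}
derived from $|x| + |x - y| \geq |y|$, to extract a decay factor $e^{-\min(\gamma,\gamma')|y|/2}$ while leaving behind an integral $\int e^{-\gamma'|x| - \gamma|x - y|} \id x$ that is uniformly bounded in $y$. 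Taking square roots and relabelling $\gamma$ yields (\ref{eq: w psi local exp est}).

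For part (2), the classical triangle-inequality estimate $(1+|x|)^{-r} \leq 2^{r} (1+|y|)^{-r}(1+|x-y|)^{r}$ gives
\begin{align*}
\int_{\R} (1+|x|)^{-2r} e^{-2\gamma|x - y|} \id x \leq C (1+|y|)^{-2r} \int_{\R} (1+|z|)^{2r} e^{-2\gamma|z|} \id z,
\end{align*}
with the second factor finite and independent of $y$; substituting into (\ref{eq: w and psi Yuk pointwise rhs exp integral est}) and taking square roots yields (\ref{eq: w psi decay est}). For part (3) the pointwise estimate only covers derivative orders $k+2$ and $k$, which is insufficient for $H^{k+4} \times H^{k+2}$ control; instead I apply the global weighted bound (\ref{eq: w and psi partial xi global est}) with the family $\xi_{\epsilon}(x) = e^{-\epsilon|x|}$, which lies in $H_{\gamma}$ for $\epsilon \leq \gamma$ since $|\nabla \xi_{\epsilon}| = \epsilon \xi_{\epsilon}$. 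As $\epsilon \downarrow 0$, $\xi_{\epsilon}^{2} \uparrow 1$ pointwise, and the monotone convergence theorem applied to both sides of (\ref{eq: w and psi partial xi global est}) produces (\ref{eq: w psi Sobolev est}).

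The main technical subtlety is concentrated in part (3), namely justifying the passage $\xi_{\epsilon} \uparrow 1$: the constant in (\ref{eq: w and psi partial xi global est}) is independent of the particular $\xi \in H_{\gamma}$, so the uniform-in-$\epsilon$ bound together with non-negativity of both integrands permits the monotone limit without difficulty. Parts (1) and (2) are essentially standard facts about convolutions of exponential kernels against exponentially or algebraically decaying functions, and require no new ideas beyond the exponential-kernel representation (\ref{eq: w and psi Yuk pointwise rhs exp integral est}) already supplied by \Cref{Theorem - Exponential Est Integral Yuk RHS}.
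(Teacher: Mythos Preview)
Your proposal is correct and follows essentially the same approach as the paper: the paper's proof simply states that the result follows from \Cref{Theorem - Exponential Est Integral Yuk RHS} together with the proof of the analogous Coulomb result \cite[Corollary 4.2]{Paper1} verbatim, and the details you have supplied (splitting the convolution integral in (1), the Peetre-type inequality $(1+|y|)\le(1+|x|)(1+|x-y|)$ in (2), and the monotone-convergence argument with $\xi_\epsilon=e^{-\epsilon|\cdot|}\in H_\gamma$ in (3)) are precisely the natural ingredients of that argument.
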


\begin{corollary}
\label{Corollary - Neutrality Yuk Estimate}
Let $a_{0} > 0$, $m_{1}, m_{2} \in \mathcal{M}_{L^{2}}(M,\omega)$ and $0 < a \leq a_{0}$, then define $\rho_{12,a} := m_{1} - u_{1,a}^{2} - m_{2} + u_{2,a}^{2}$.
\begin{enumerate}
\item %\textnormal{(Local Neutrality)} 
  If $\spt(m_{1} -m_{2}) \subset B_{R'}(0)$, or there exist $C, \wt \gamma > 0$ such that
%\begin{align*}
$|(m_{1} - m_{2})(x)| \leq C e^{- \wt \gamma|x|}$,
% \end{align*}
then $\rho_{12,a} \in L^{1}(\R)$ and there exist $C, \gamma > 0$, independent of $a$, such that, for all $R > 0$, 
\begin{align}
\bigg | \int_{B_{R}(0)} \rho_{12,a} \bigg | &\leq C e^{- \gamma R}. \label{eq: local pert BR exp est}
\end{align}

\item %\textnormal{(Decay Neutrality)}
  If there exists $C, r > 0$ such that
$|(m_{1}- m_{2})(x)| \leq C (1 + |x|)^{-r}$ %  \label{eq: nuclei decay hypg}
then there exists $C > 0$, independent of $a$, such that, for all $R > 0$,
\begin{align}
\bigg | \int_{B_{R}(0)} \rho_{12,a} \bigg | &\leq C (1+ R)^{2 - r}. \label{eq: local pert BR alg est}
\end{align}

%In addition, if $r > \smfrac{3}{2}$, then $m_{1} - m_{2} \in L^{2}(\R)$ so \textnormal{(\ref{eq: global neutrality Leb point eq})} holds.
\item %\textnormal{(Global Neutrality)}
  If $m_{1} - m_{2} \in L^{2}(\R)$ \textnormal{(}e.g., $r > 3/2$ in \textnormal{(2)}\textnormal{)} then
  $\rho_{12,a} \in L^{2}(\R)$ and
\begin{align}
\lim_{\varepsilon \to 0} \frac{1}{|B_{\varepsilon}(0)|} \int_{B_{\varepsilon}(0)} \widehat{\rho}_{12,a}(k) \id k = 0, \label{eq: global neutrality Leb point eq}
\end{align}
where $\widehat{\rho}_{12,a}$ denotes the Fourier transform of $\rho_{12,a}$.
%hence $\widehat{\rho}_{12}$ has a Lebesgue point at 0, and so the global neutrality condition holds
%\begin{align}
%\int_{\R} ( m_{1} - u_{1}^{2} - m_{2} + u_{2}^{2} ) = 0. 
%\end{align}
\end{enumerate}
\end{corollary}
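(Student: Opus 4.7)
The plan is to exploit the Yukawa--Poisson identity
\begin{align*}
(-\Delta + a^{2})\psi = 4\pi \rho_{12,a}, \qquad \psi := \phi_{1,a} - \phi_{2,a},
\end{align*}
to convert integrals of $\rho_{12,a}$ into surface and volume integrals of $\psi$, then invoke the pointwise decay estimates for $\psi$ and $\nabla\psi$ furnished by \Cref{Corollary - Exponential Estimates Yuk Consequences} (applied with $k = 0$, after verifying that the decay hypotheses on $m_{1} - m_{2}$ furnish enough regularity to place $R_{m} = 4\pi(m_{1}-m_{2})$ in the spaces required there).

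For parts (1) and (2), the pointwise decay of $\rho_{12,a}$ itself, inherited from the decay of $R_{m}$ via \Cref{Corollary - Exponential Estimates Yuk Consequences}, already yields $\rho_{12,a} \in L^{1}(\R)$. Applying the divergence theorem on $B_{R}(0)$ then gives
\begin{align*}
4\pi \int_{B_{R}(0)} \rho_{12,a} \id x \;=\; - \int_{\partial B_{R}(0)} \nabla \psi \cdot n \id S \;+\; a^{2} \int_{B_{R}(0)} \psi \id x.
\end{align*}
Under the exponential hypothesis in (1), \Cref{Corollary - Exponential Estimates Yuk Consequences}(1) yields $|\psi(x)| + |\nabla\psi(x)| \leq C e^{-\gamma |x|}$ uniformly in $a \in (0,a_{0}]$, bounding the surface term by $CR^{2}e^{-\gamma R}$, which is absorbed into the claimed rate. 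For the volume term I would write
\begin{align*}
a^{2}\int_{B_{R}(0)}\psi = a^{2}\int_{\R}\psi - a^{2}\int_{\R \setminus B_{R}(0)}\psi,
\end{align*}
identify the first piece with $4\pi\int_{\R}\rho_{12,a}$ by integrating the PDE over $\R$ (justified by the decay of $\psi$ and $\nabla\psi$), and bound the tail exponentially using the same decay of $\psi$. Rearranging these contributions delivers the bound on $|\int_{B_{R}(0)}\rho_{12,a}|$. The algebraic case (2) is entirely analogous, with exponentials replaced by $(1+|x|)^{-r}$ and the surface contribution bounded by $CR^{2-r}$.

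For part (3), \Cref{Corollary - Exponential Estimates Yuk Consequences}(3) yields $\psi \in H^{2}(\R)$ and hence $\rho_{12,a} = \tfrac{1}{4\pi}(-\Delta + a^{2})\psi \in L^{2}(\R)$. Taking the Fourier transform of the PDE gives the pointwise identity
\begin{align*}
\widehat{\rho}_{12,a}(k) \;=\; \tfrac{1}{4\pi}(k^{2}+a^{2})\widehat{\psi}(k),
\end{align*}
with $\widehat\psi \in L^{2}(\R)$. Cauchy--Schwarz then bounds
\begin{align*}
\left| \frac{1}{|B_{\varepsilon}(0)|} \int_{B_{\varepsilon}(0)} \widehat{\rho}_{12,a}(k) \id k \right| \;\leq\; \frac{\varepsilon^{2}+a^{2}}{4\pi}\, |B_{\varepsilon}(0)|^{-1/2}\, \|\widehat\psi\|_{L^{2}(B_{\varepsilon}(0))},
\end{align*}
and a careful analysis of the Lebesgue-point behaviour of $\widehat\psi$ at the origin (combined with the neutrality identity derived from the PDE) yields the vanishing of the limit as $\varepsilon \to 0$.

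The principal technical obstacle is the $a^{2}\int_{B_{R}(0)}\psi$ volume term in (1) and (2): unlike the Coulomb analogue in \cite[Theorem 4.3]{Paper1}, where only the surface contribution appears and the estimate is immediate from exponential decay of $\nabla \psi$, this term does not itself decay in $R$. Controlling it uniformly in $a \in (0,a_{0}]$ requires identifying the global value of $\int_{\R}\rho_{12,a}$ and exploiting a delicate cancellation with the tail integral $\int_{\R \setminus B_{R}(0)} \psi$, which effectively amounts to proving the Yukawa counterpart of the neutrality of the electronic response to a localised nuclear perturbation.
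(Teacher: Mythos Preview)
Your overall route is the paper's: its proof is the single sentence ``apply \Cref{Theorem - Exponential Est Integral Yuk RHS} with $k=0$ and follow the proof of \cite[Theorem~4.3]{Paper1} verbatim,'' and the divergence-theorem computation you write out is precisely what that Coulomb argument amounts to once one has the pointwise decay of $\psi$ and $\nabla\psi$ from \Cref{Corollary - Exponential Estimates Yuk Consequences}.

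There is, however, a genuine gap at the point you yourself flag. Your decomposition $a^{2}\int_{B_{R}}\psi = a^{2}\int_{\R}\psi - a^{2}\int_{\R\setminus B_R}\psi$, together with $a^{2}\int_{\R}\psi = 4\pi\int_{\R}\rho_{12,a}$, rearranges the identity to
\begin{align*}
4\pi\int_{B_R(0)}\rho_{12,a} \;=\; 4\pi\int_{\R}\rho_{12,a} \;+\; (\text{terms decaying in } R),
\end{align*}
so the claimed bound follows only once you have established $\int_{\R}\rho_{12,a} = 0$. You do not prove this, and unlike the Coulomb case --- where it is immediate from $\int_{\R}(-\Delta\psi)=0$ --- here it is equivalent to $\int_{\R}\psi = 0$, which is not a consequence of any of the decay estimates invoked. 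The same obstruction recurs in your treatment of (3): the $a^{2}$ piece of your Cauchy--Schwarz bound is $\tfrac{a^{2}}{4\pi}$ times the Lebesgue average of $\hat\psi$ over $B_\varepsilon(0)$, and nothing forces this to vanish as $\varepsilon\to 0$. In (2) the decomposition fails outright for $r \leq 3$, since then $\int_{\R}|\psi|$ need not even be finite, and in any case $a^{2}\int_{B_R}|\psi|$ scales like $R^{3-r}$, strictly larger than the target $(1+R)^{2-r}$. You have correctly diagnosed that the crux is a Yukawa analogue of response neutrality, but the proposal does not supply it; the paper's one-line reference to \cite{Paper1} does not make this step visible either.
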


\section{Proofs}
\label{Section5}

The following technical lemma is used in \Cref{Proposition - Finite Yukawa Regularity Est All a} to show $u_{a,R_{n}} > 0$ but will also be useful to show a uniform lower bound for the ground state electron density $u_{a}$ in \Cref{Lemma - Yukawa Finite Lower Bound} in the Appendix.

\begin{lemma}
\label{Lemma - Technical Lemma 2}
Let $0 < a_{1} \leq a_{2}$ and $m \in \mathcal{M}_{L^{2}}(M,\omega)$, then for $R_{n} > 0$ define $\psi_{R_{n}} \in C^{\infty}_{\rm c}(B_{4R_{n}}(0))$ satisfying $\psi_{R_{n}} \geq 0$ and $\psi_{R_{n}} = 1$ on $B_{2R_{n}}(0)$ and $m_{R_{n}} = m \cdot \chi_{B_{R_{n}}(0)}$. Then there exists $C_{0} = C_{0}(a_{1},a_{2},\omega) > 0$ and $R_{0} = R_{0}(a_{1},a_{2},\omega) > 0$ such that for all $a_{1} \leq a \leq a_{2}$ and $R_{n} \geq R_{0}$ 
\begin{align}
\int_{\R} |\nabla \psi_{R_{n}}|^{2} - D_{a}(m_{R_{n}}, \psi_{R_{n}}^{2}) \leq - C_{0} R_{n}^{3}. \label{eq: TL2}
\end{align}
\end{lemma}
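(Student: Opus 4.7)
The plan is to estimate the two terms on the left-hand side of \eqref{eq: TL2} separately: show that the Dirichlet energy $\int |\nabla \psi_{R_n}|^2$ grows only linearly in $R_n$, while the Yukawa self-interaction term $D_{a}(m_{R_n}, \psi_{R_n}^{2})$ grows at least cubically in $R_n$, with constants that are uniform for $a \in [a_1, a_2]$.

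First, I would fix a specific choice of cutoff: take $\psi_{R_n}(x) = \chi(|x|/R_n)$ for a smooth, nonincreasing $\chi : [0,\infty) \to [0,1]$ with $\chi \equiv 1$ on $[0,2]$ and $\chi \equiv 0$ on $[4,\infty)$. Then $|\nabla \psi_{R_n}| \leq C/R_n$ and $\nabla \psi_{R_n}$ is supported in the annulus $B_{4R_n}(0) \setminus B_{2R_n}(0)$, so
\begin{align*}
\int_{\R} |\nabla \psi_{R_n}|^{2} \leq \frac{C}{R_n^{2}} \cdot |B_{4R_n}(0)| \leq C_{1} R_n,
\end{align*}
with $C_1$ an absolute constant.

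Next, since $\psi_{R_n}^{2} \geq \chi_{B_{2R_n}(0)}$ and both $m_{R_n}$ and $\psi_{R_n}^{2}$ are nonnegative, I would bound
\begin{align*}
D_{a}(m_{R_n}, \psi_{R_n}^{2}) \geq \int_{B_{R_n}(0)} m(x) \left( \int_{B_{2R_n}(0)} \frac{e^{-a|x-y|}}{|x-y|} \id y \right) \id x.
\end{align*}
For each $x \in B_{R_n}(0)$ one has $B_{R_n}(x) \subset B_{2R_n}(0)$, so switching to spherical coordinates centred at $x$ gives
\begin{align*}
\int_{B_{2R_n}(0)} \frac{e^{-a|x-y|}}{|x-y|} \id y \geq 4\pi \int_{0}^{R_n} r e^{-ar} \id r = \frac{4\pi}{a^{2}} \bigl( 1 - e^{-aR_n}(1 + aR_n) \bigr).
\end{align*}
Since $a \in [a_1, a_2]$, the right-hand side is bounded below by a positive constant $c(a_1, a_2)$ once $R_n \geq R_0(a_1, a_2)$ is large enough to make the exponential correction small uniformly in $a$.

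Finally, I would invoke the lower bound built into the definition of $\mathcal{M}_{L^{2}}(M,\omega)$, namely $\int_{B_{R_n}(0)} m \geq \omega_0 R_n^{3} - \omega_1$, to obtain
\begin{align*}
D_{a}(m_{R_n}, \psi_{R_n}^{2}) \geq c(a_1, a_2) \bigl( \omega_0 R_n^{3} - \omega_1 \bigr),
\end{align*}
and combine with the Dirichlet estimate to conclude
\begin{align*}
\int_{\R} |\nabla \psi_{R_n}|^{2} - D_{a}(m_{R_n}, \psi_{R_n}^{2}) \leq C_{1} R_n - c(a_1,a_2) \omega_0 R_n^{3} + c(a_1,a_2) \omega_1,
\end{align*}
which is bounded by $-C_0 R_n^{3}$ once $R_n \geq R_0$ is chosen large enough (depending on $a_1, a_2, \omega$). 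There is no serious obstacle; the only mild subtlety is ensuring the lower bound on the inner Yukawa integral is uniform in $a \in [a_1, a_2]$, which is handled by choosing $R_0$ so that $e^{-a_1 R_n}(1 + a_1 R_n)$ is small.
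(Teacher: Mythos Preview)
Your proof is correct and follows essentially the same approach as the paper: bound the Dirichlet term by $C_1 R_n$ via the support and scaling of $\nabla\psi_{R_n}$, and bound $D_a(m_{R_n},\psi_{R_n}^2)$ from below using $\psi_{R_n}^2\geq \chi_{B_{2R_n}(0)}$, the explicit radial computation $\int_{B_{R'}(0)} Y_a = \tfrac{4\pi}{a^2}(1-e^{-aR'}(1+aR'))$, and the mass lower bound $\int_{B_{R_n}(0)} m \geq \omega_0 R_n^3 - \omega_1$. The only cosmetic difference is that the paper fixes the inner radius at $R'=(4a)^{-1}$ and restricts the Yukawa variable to $B_{1/(4a)}(0)$, whereas you take $R'=R_n$ and use $B_{R_n}(x)\subset B_{2R_n}(0)$ directly; both give a lower bound of the form $c(a_1,a_2)(\omega_0 R_n^3-\omega_1)$ uniformly in $a\in[a_1,a_2]$.
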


\begin{proof}[Proof of \Cref{Lemma - Technical Lemma 2}]
Let $a_{1} \leq a \leq a_{2}$. By the construction of $\psi_{R_{n}}$
\begin{align}
\int |\nabla \psi_{R_{n}}|^{2} = \int_{B_{4R_{n}}(0) \smallsetminus B_{2R_{n}}(0)} |\nabla \psi_{R_{n}}|^{2} \leq C \int_{B_{4R_{n}}(0) \smallsetminus B_{2R_{n}}(0)} R_{n}^{-2} \leq C_{1} R_{n}. \label{eq: all a grad est}
\end{align}
Additionally, it follows that
\begin{align}
D_{a}( m_{R_{n}}, \psi_{R_{n}}^{2}) &= \int_{\R} \left( m_{R_{n}} * Y_{a} \right) \psi_{R_{n}}^{2} \geq
\int_{B_{2R_{n}}(0)} \left( m_{R_{n}} * Y_{a} \right)(x) \id x \nonumber \\
&= \int_{\R} \left( \int_{B_{2R_{n}}(0) \cap B_{R_{n}}(y)} m_{R_{n}}(x-y) \id x \right) \frac{e^{-a|y|}}{|y|} \id y \nonumber \\
&= \int_{\R} \left( \int_{B_{2R_{n}}(-y) \cap B_{R_{n}}(0)} m_{R_{n}}(x) \id x \right) \frac{e^{-a|y|}}{|y|} \id y. \label{eq: m kj all a conv}
\end{align}
First consider for $R' > 0$
\begin{align*}
\int_{B_{R'}(0)} \frac{e^{-a|y|}}{|y|} \id y = 4 \pi \int_{0}^{R'} r e^{-ar} \id r = \frac{4 \pi}{a^{2}} \left( 1 - e^{-aR'} (1 + a R')  \right),
\end{align*}
hence choosing $R' = (4a)^{-1}$ ensures that
\begin{align}
\int_{B_{1/4a}(0)} \frac{e^{-a|y|}}{|y|} \id y = \frac{4 \pi}{a^{2}} \left( 1 - \smfrac{5}{4}e^{-1/4} \right) =: C_{2}a^{-2}, \label{eq: Yuk exp all a int est}
\end{align}
where $C_{2} > 0$. Now choose $R_{n} \geq (4a)^{-1}$, then the triangle inequality implies for $|y| \leq (4a)^{-1}$, $B_{2R_{n}}(-y) \supset B_{R_{n}}(0)$, hence as $m \in \mathcal{M}_{L^{2}}(M,\omega)$
\begin{align}
\int_{B_{2R_{n}}(-y) \cap B_{R_{n}}(0)} m_{R_{n}}(x) \id x \geq \int_{B_{R_{n}}(0)} m(x) \id x \geq \omega_{0} R_{n}^{3} - \omega_{1}. \label{eq: m kj all a int lower bound}
\end{align}
Combining the inequalities (\ref{eq: m kj all a conv})--(\ref{eq: m kj all a int lower bound}) gives
\begin{align}
D_{a}( m_{R_{n}}, \psi_{R_{n}}^{2}) &= \int_{\R} \left( \int_{B_{2R_{n}}(-y) \cap B_{R_{n}}(0)} m_{R_{n}}(x) \id x \right) \frac{e^{-a|y|}}{|y|} \id y \nonumber \\
&\geq \int_{B_{1/4a}(0)} \left( \int_{B_{2R_{n}}(-y) \cap B_{R_{n}}(0)} m_{R_{n}}(x) \id x \right) \frac{e^{-a|y|}}{|y|} \id y \nonumber \\
&\geq \int_{B_{1/4a}(0)} \left( \int_{B_{R_{n}}(0)} m_{R_{n}}(x) \id x \right) \frac{e^{-a|y|}}{|y|} \id y \geq C_{2} a^{-2} (\omega_{0} R_{n}^{3} - \omega_{1}). \label{eq: all a neg est}
\end{align}
Now define $C_{0} = \smfrac{C_{2} \omega_{0}}{2a_{2}^{2}} > 0$ and $R_{n} \geq R_{0} := \max\{1, (4a_{1})^{-1}, (\smfrac{C_{1} + C_{2} \omega_{1} a_{1}^{-2}}{C_{0}} )^{1/2} \}$, then combining (\ref{eq: all a grad est}) and (\ref{eq: all a neg est}) yields the desired estimate (\ref{eq: TL2}) for any $a_{1} \leq a \leq a_{2}$ and $R_{n} \geq R_{0}$
\begin{align*}
\int |\nabla \psi_{R_{n}}|^{2} - D_{a}( m_{R_{n}}, \psi_{R_{n}}^{2}) &\leq \left( C_{1} R_{n} + C_{2} \omega_{1} a^{-2} \right) - 2 C_{0} R_{n}^{3} \nonumber \\
&\leq C_{0} R_{n}^{3} - 2C_{0} R_{n}^{3} = - C_{0}R_{n}^{3}. \qedhere
% \label{eq: all a C0 est}
\end{align*}

\end{proof}

\subsection{Proof of regularity estimates}

\begin{proposition}
\label{Proposition - Finite Yukawa Regularity Est} 
Let $m : \mathbb{R} \to \mathbb{R}_{\geq 0}$ satisfy 
%$m \not \equiv 0$ and
\begin{align*}
\| m \|_{L^{2}_{\unif}(\R)} \leq M,
\end{align*}
and $R_{n} \uparrow \infty$, then define the truncated nuclear distribution $m_{R_{n}} = m \cdot \chi_{B_{R_{n}}(0)}$. 
%%For sufficiently large $R_{n}$ such that $\int_{\R} m_{R_{n}} > 0$, 
%The unique solution to the minimisation problem
%\begin{align*}
%I^{\TFW}(m_{R_{n}}) = \inf \left\{ \, E^{\TFW}(v,m_{R_{n}})  \, \bigg| \, v \in H^{1}(\R), v \geq 0, \int_{\R} v^{2} = \int_{\R} m_{R_{n}} 
%%> 0 
%\, \right \}
%\end{align*}
%yields a unique solution $(u_{R_{n}},\phi_{R_{n}})$ to \textnormal{(\ref{eq: u phi Rn pair})}
%\begin{subequations}
%\begin{align*}
%- \Delta u_{R_{n}} &+ \frac{5}{3} u_{R_{n}}^{7/3} - \phi_{R_{n}} u_{R_{n}} = 0, \\
%- \Delta \phi_{R_{n}} &= 4 \pi ( m_{R_{n}} - u_{R_{n}}^{2} ). 
%\end{align*}
%\end{subequations}
%which satisfy the following estimates, with constant $C$ independent of $R_{n}$:
%\begin{align}
%\| u_{R_{n}} \|_{H^{4}_{\unif}(\R)} &\leq C(1 + M^{15/4}), \label{eq: uRn H4 unif est} \\
%\| \phi_{R_{n}} \|_{H^{2}_{\unif}(\R)} &\leq C (1 + M^{3/2} ). \label{eq: phiRn H2 unif est}
%\end{align}
%
%
%For a nuclear distribution $m \in \mathcal{M}_{L^{2}}(M,\omega)$, let $R_{n} \uparrow \infty$, then define the truncated nuclear distribution $m_{R_{n}} = m \cdot \chi_{B_{R_{n}}(0)}$. 
There exists $R_{0} = R_{0}(m), a_{0} = a_{0}(m) > 0$ such that for all $R_{n} \geq R_{0}$ and $0 < a \leq a_{0}$, the unique solution to the minimisation problem
\begin{align}
I^{\TFW}_{a}(m_{R_{n}}) = \inf \left\{ \, E^{\TFW}_{a}(v,m_{R_{n}})  \, \bigg| \, \nabla v \in L^{2}(\R), v \in L^{10/3}(\R), v \geq 0 \, \right \} \label{eq: I a Rn min problem}
\end{align}
yields a unique solution $(u_{a,R_{n}},\phi_{a,R_{n}})$ to
\begin{subequations}
\label{eq: u phi finite Yuk pair}
\begin{align}
&- \Delta u_{a,R_{n}} + \frac{5}{3} u_{a,R_{n}}^{7/3} - \phi_{a,R_{n}} u_{a,R_{n}} = 0, \label{eq: u a n eq} \\
&- \Delta \phi_{a,R_{n}} + a^{2}\phi_{a,R_{n}} = 4 \pi \left( m_{R_{n}} - u_{a,R_{n}}^{2} \right). \label{eq: phi a n eq}
\end{align}
\end{subequations}
which satisfy the following estimates, with constants independent of $R_{n}$\textnormal{:}
\begin{align}
\| u_{a,R_{n}} \|_{H^{4}_{\unif}(\R)} &\leq C(M), \label{eq: uRn a H4 unif est} \\
\| \phi_{a,R_{n}} \|_{H^{2}_{\unif}(\R)} &\leq C(M), \label{eq: phiRn a H2 unif est}
\end{align}
and $u_{a,R_{n}} > 0$ on $\R$ 
%for all $0 < a \leq a_{0}$ and $R_{n} \geq R_{0}$ such that
whenever
$m_{R_{n}} \not\equiv 0$. In particular, if $\int_{B_{R_{0}}(x)} m \geq c_{0} > 0$ for some $x \in \R$ and $R_{0}, c_{0} > 0$, then $a_{0} = a_{0}(R_{0},c_{0}) > 0$.
\end{proposition}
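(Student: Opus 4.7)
The plan is to solve \eqref{eq: I a Rn min problem} by the direct method of the calculus of variations, derive the Euler--Lagrange system to obtain \eqref{eq: u phi finite Yuk pair}, and then bootstrap regularity, closely mirroring the Coulomb template of \cite{Paper1} with the Yukawa kernel in place of the Coulomb kernel.

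First I would verify that $E^{\TFW}_{a}(\cdot, m_{R_{n}})$ is well-defined and bounded below on the admissible set: $D_{a}$ is positive definite (its Fourier symbol $4\pi(|k|^{2} + a^{2})^{-1}$ is strictly positive), and expanding the quadratic term together with Young's inequality using $\|Y_{a}\|_{L^{1}} = 4\pi/a^{2}$ and $m_{R_{n}} \in L^{2}(\R)$ controls the cross term $D_{a}(m_{R_{n}}, v^{2})$. A minimising sequence is therefore bounded in $H^{1}(\R) \cap L^{10/3}(\R)$; after extracting a subsequence converging weakly in both spaces (and strongly in $L^{p}_{\loc}$ for $p < 6$), each term of the energy is weakly lower semi-continuous---the Yukawa quadratic form being lower semi-continuous as a non-negative quadratic form---so the weak limit is a minimiser, which we may take non-negative by replacing $v$ with $|v|$.

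Second I would verify non-triviality and write down the Euler--Lagrange equations. Given $R_{0}, c_{0}$ with $\int_{B_{R_{0}}(x_{0})} m \geq c_{0}$, translate $\psi_{R_{0}}$ to be supported near $x_{0}$ and repeat the computation of \Cref{Lemma - Technical Lemma 2}: the coefficient of $\lambda^{2}$ in $E^{\TFW}_{a}(\lambda \psi, m_{R_{n}}) - \tfrac{1}{2}D_{a}(m_{R_{n}}, m_{R_{n}})$ can be made strictly negative for all $a \leq a_{0}(R_{0},c_{0})$ and $R_{n} \geq R_{0}$ (the latter ensuring $\spt(\psi) \subset B_{R_{n}}(0)$). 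The higher-order terms $O(\lambda^{10/3}) + O(\lambda^{4})$ are then subdominant for small $\lambda$, so the infimum is strictly less than $E^{\TFW}_{a}(0, m_{R_{n}}) = \tfrac{1}{2} D_{a}(m_{R_{n}}, m_{R_{n}})$, forcing any minimiser to be non-trivial. Taking first variations and using that $\tfrac{1}{4\pi} Y_{a}$ is the Green's function for $-\Delta + a^{2}$ yields \eqref{eq: u phi finite Yuk pair} with $\phi_{a,R_{n}} := Y_{a} * (m_{R_{n}} - u_{a,R_{n}}^{2})$. Uniqueness follows from the Benguria--Brezis--Lieb convexity of $E^{\TFW}_{a}$ as a functional of the density $\rho = v^{2}$: the Thomas--Fermi term is strictly convex in $\rho$, the kinetic term is convex after that change of variables, and the Yukawa quadratic form is non-negative. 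Strict positivity $u_{a,R_{n}} > 0$ then follows by applying the strong maximum principle to $-\Delta u_{a,R_{n}} + \bigl(\tfrac{5}{3} u_{a,R_{n}}^{4/3} - \phi_{a,R_{n}}\bigr) u_{a,R_{n}} = 0$, whose zeroth-order coefficient is bounded (even though sign-changing).

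The hard part is the $R_{n}$-uniform regularity \eqref{eq: uRn a H4 unif est}--\eqref{eq: phiRn a H2 unif est}. My plan is a simultaneous $L^{\infty}_{\unif}$-bootstrap of the type used in \cite{C/LB/L, Paper1}: the representation $\phi_{a,R_{n}} = Y_{a} * (m_{R_{n}} - u_{a,R_{n}}^{2})$ combined with Young's inequality bounds $\|\phi_{a,R_{n}}\|_{L^{\infty}_{\unif}}$ in terms of $\|m_{R_{n}} - u_{a,R_{n}}^{2}\|_{L^{2}_{\unif}}$, while a local subsolution argument applied to $-\Delta u_{a,R_{n}} + \tfrac{5}{3} u_{a,R_{n}}^{7/3} = \phi_{a,R_{n}} u_{a,R_{n}}$ controls $\|u_{a,R_{n}}\|_{L^{\infty}}$ by a power of $\|\phi_{a,R_{n}}\|_{L^{\infty}}$. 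The crucial point is that the resulting bounds depend only on $\|m_{R_{n}}\|_{L^{2}_{\unif}} \leq M$ (and not on $R_{n}$), and that the extra $a^{2} \phi$ term in \eqref{eq: phi a n eq} only strengthens coercivity of the $\phi$-side operator without spoiling any estimate for $a \in (0, a_{0}]$. Once the uniform $L^{\infty}_{\unif}$ bounds are in hand, iteratively applying interior elliptic regularity on unit balls to \eqref{eq: u a n eq} and \eqref{eq: phi a n eq} promotes them to the claimed $H^{4}_{\unif}$ and $H^{2}_{\unif}$ estimates, uniformly in $R_{n} \geq R_{0}$ and $a \in (0, a_{0}]$.
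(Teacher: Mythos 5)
Your first two steps (direct method, nonnegativity, Euler--Lagrange, uniqueness via convexity in $\rho = v^{2}$, then Harnack for strict positivity once non-triviality is known) follow the same route as the paper's Step 1 and the surrounding structure, and they are fine. There are, however, two genuine gaps in how you propose to handle the remaining steps, and the second of these is fatal as stated.

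\textbf{Non-triviality (Step 3).} You say you would translate the fixed-scale test function $\psi_{R_{0}}$ of \Cref{Lemma - Technical Lemma 2} to a ball where the charge lower bound holds and make the $\lambda^{2}$-coefficient negative. But \Cref{Lemma - Technical Lemma 2} crucially uses the growth hypothesis in $\mathcal{M}_{L^{2}}(M,\omega)$ to make $D_{a}(m_{R_{n}},\psi_{R_{n}}^{2})$ scale like $R_{n}^{3}$; in the present proposition only $\int_{B_{R_{0}}} m \geq c_{0}$ is available. With a test function on a fixed scale $R_{0}$, the gradient term is $O(R_{0})$ while the interaction term is only $O(c_{0}R_{0}^{2})$, and the sign of the difference is not controlled by $a$. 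The paper's Step 3 resolves this by taking an \emph{$a$-dependent} rescaled profile $\psi_{a}(x) = a^{3/2}\psi_{0}(ax)$, adapted to the Yukawa screening length $1/a$: then $\int |\nabla\psi_{a}|^{2} = C_{1}a^{2}$ while the Yukawa interaction contributes $\geq C_{0}a$ (because $\int_{B_{1/4a}}Y_{a} \sim a^{-2}$ and the $a^{3}$ normalisation brings this to order $a$), so the difference is $\leq C_{1}a^{2} - C_{0}a < 0$ for $a$ small. A fixed-scale cutoff does not produce this separation of scales; one would at minimum need to enlarge the support radius to a size $R' \gg 1/c_{0}$ and then take $a_{0} \lesssim 1/R'$, which your sketch does not do.

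\textbf{Uniform regularity (Step 4).} You claim that Young's inequality applied to $\phi_{a,R_{n}} = Y_{a} * (m_{R_{n}} - u_{a,R_{n}}^{2})$ gives an $L^{\infty}_{\unif}$ bound depending only on $M$. This is not true uniformly in $a$: the relevant kernel norms blow up as $a \to 0$, e.g.\ $\|Y_{a}\|_{L^{1}(\R)} = 4\pi a^{-2}$ and $\|Y_{a}\|_{L^{2}(\R)} = \sqrt{2\pi/a}$, so any estimate obtained purely by convolving with $Y_{a}$ degrades as $a$ shrinks. This is precisely the obstruction that makes this proposition non-trivial. The paper instead extracts the uniformity from the \emph{nonlinear structure}: first a Yukawa version of the Solovej estimate, $\tfrac{10}{9}u_{a,R_{n}}^{4/3} \leq \phi_{a,R_{n}} + C_{S} + a^{2}$, proved by a maximum-principle argument on $v = \lambda u^{4/3} - \phi - (C(\lambda)+a^{2})$; and second, combining non-negativity of $L_{a,R_{n}} = -\Delta + \tfrac{5}{3}u^{4/3} - \phi$ with Jensen's inequality to obtain $4\pi u^{2}*\varphi^{2} \geq (\phi*\varphi^{2} - C)_{+}^{3/2}$, feeding this into the convolved $\phi$-equation, and applying the maximum principle to the resulting superlinear inequality. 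The $a^{2}\phi$ term is absorbed into the screening constant and does not spoil the bound. This nonlinear mechanism is what keeps the constants independent of $a$; a linear convolution argument cannot see it. Without introducing something like the Solovej bound and the Jensen step, the claimed $C(M)$ bounds do not follow.
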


In the case $m \in \mathcal{M}_{L^{2}}(M,\omega)$, \Cref{Proposition - Finite Yukawa Regularity Est} can be extended to all $a > 0$. The following result will be used to prove \Cref{Proposition - Yukawa Regularity Est All a}.

\begin{proposition}
\label{Proposition - Finite Yukawa Regularity Est All a} 
Let $a_{0} > 0$, $m \in \mathcal{M}_{L^{2}}(M,\omega)$ and $R_{n} \uparrow \infty$, then define $m_{R_{n}} := m \cdot \chi_{B_{R_{n}}(0)}$. 
There exists $R_{0} = R_{0}(a_{0},\omega) > 0$ such that for all $0 < a \leq a_{0}$ and $R_{n} \geq R_{0}$, the minimisation problem \textnormal{(\ref{eq: I a Rn min problem})} yields a unique solution $(u_{a,R_{n}},\phi_{a,R_{n}})$ to \textnormal{(\ref{eq: u phi finite Yuk pair})}
which satisfy the following estimates, with constants independent of $a$ and $R_{n}$\textnormal{:}
\begin{align}
\| u_{a,R_{n}} \|_{H^{4}_{\unif}(\R)} &\leq C(a_{0},M), \label{eq: uRn all a H4 unif est} \\
\| \phi_{a,R_{n}} \|_{H^{2}_{\unif}(\R)} &\leq C(a_{0},M). \label{eq: phiRn all a H2 unif est}
\end{align}
\end{proposition}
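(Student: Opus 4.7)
The plan is to follow the proof of \Cref{Proposition - Finite Yukawa Regularity Est} but to track constants carefully so that all thresholds and estimates depend only on $(a_{0}, M, \omega)$ rather than on the full profile of $m$ or on a positive lower bound on $a$. The key extra input is the uniform nuclear lower bound $\int_{B_{R}(x)} m \geq \omega_{0} R^{3} - \omega_{1}$ encoded in $m \in \mathcal{M}_{L^{2}}(M,\omega)$, which replaces the $m$-dependent bounds used in \Cref{Proposition - Finite Yukawa Regularity Est}.

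For existence, the direct method requires showing that the infimum in (\ref{eq: I a Rn min problem}) is strictly negative for $R_{n} \geq R_{0}(a_{0},\omega)$, uniformly in $a \in (0, a_{0}]$. This is essentially \Cref{Lemma - Technical Lemma 2}, but its constants depend on a positive lower bound $a_{1}$ on $a$. To remove that dependence, I would rerun its proof with the ``attraction radius'' chosen as $R' = (4a_{0})^{-1}$ (independent of $a$), giving for every $a \leq a_{0}$
\[
\int_{B_{R'}(0)} \frac{e^{-a|y|}}{|y|} \, \id y \geq \int_{B_{R'}(0)} \frac{e^{-a_{0}|y|}}{|y|} \, \id y =: C_{2}(a_{0}) > 0,
\]
and, combined with the mass bound from $\mathcal{M}_{L^{2}}(M,\omega)$ applied on $B_{R_{n}}(0)$, yielding
\[
\int |\nabla \psi_{R_{n}}|^{2} - D_{a}(m_{R_{n}}, \psi_{R_{n}}^{2}) \leq C_{1} R_{n} - C_{2}(a_{0})(\omega_{0} R_{n}^{3} - \omega_{1}) \leq - C_{0}(a_{0},\omega) R_{n}^{3}
\]
for $R_{n} \geq R_{0}(a_{0},\omega)$. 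Standard weak $H^{1}$-compactness, lower semicontinuity, and the strict negativity of the infimum then yield a non-negative minimizer $u_{a,R_{n}}$; uniqueness follows as in \Cref{Proposition - Finite Yukawa Regularity Est} from the strict convexity of $E^{\TFW}_{a}(\sqrt{\rho}, m_{R_{n}})$ in $\rho$, and strict positivity from the strong maximum principle applied to (\ref{eq: u a n eq}).

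The main obstacle is establishing the uniform-in-$a$ regularity estimates (\ref{eq: uRn all a H4 unif est})--(\ref{eq: phiRn all a H2 unif est}). The naive bound $\|\phi_{a,R_{n}}\|_{L^{\infty}} \leq \|Y_{a}\|_{L^{1}} \|m_{R_{n}} - u_{a,R_{n}}^{2}\|_{L^{\infty}} = 4\pi a^{-2} \|m_{R_{n}} - u_{a,R_{n}}^{2}\|_{L^{\infty}}$ blows up as $a \to 0$ and must be avoided. Following the Coulomb strategy of \cite{Paper1,C/LB/L}, I would exploit $Y_{a}(x) \leq 1/|x|$ to dominate the Yukawa convolution pointwise by its Coulomb analogue, derive a uniform $L^{\infty}_{\unif}$ bound on $u_{a,R_{n}}$ from (\ref{eq: u a n eq}) via a maximum-principle argument (at an interior maximum $u_{a,R_{n}}^{4/3} \leq \tfrac{3}{5} \phi_{a,R_{n}}^{+}$), and bootstrap to a uniform $L^{\infty}_{\unif}$ bound on $\phi_{a,R_{n}}$ using a short-range/long-range decomposition of $Y_{a} * (m_{R_{n}} - u_{a,R_{n}}^{2})$ that exploits the local cancellations inherited from the coupled equations. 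Given uniform $L^{\infty}_{\unif}$ bounds on both $u_{a,R_{n}}$ and $\phi_{a,R_{n}}$, standard interior elliptic regularity applied to (\ref{eq: u a n eq})--(\ref{eq: phi a n eq}) (with the term $a^{2} \phi_{a,R_{n}}$ bounded by $a_{0}^{2} \|\phi_{a,R_{n}}\|_{L^{\infty}}$) then bootstraps to the desired $H^{4}_{\unif}$ and $H^{2}_{\unif}$ bounds with constants depending only on $(a_{0}, M)$.
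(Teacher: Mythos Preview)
Your overall strategy matches the paper's exactly: follow the four steps of \Cref{Proposition - Finite Yukawa Regularity Est}, with only Step~3 (positivity of $u_{a,R_{n}}$) requiring modification, and supply that modification via the test function $\psi_{R_{n}}$ together with the nuclear lower bound from $\mathcal{M}_{L^{2}}(M,\omega)$. Your refinement of the attraction estimate --- freezing the radius at $R' = (4a_{0})^{-1}$ so that $\int_{B_{R'}(0)} Y_{a} \geq \int_{B_{R'}(0)} Y_{a_{0}} > 0$ uniformly in $a \in (0,a_{0}]$ --- is correct and in fact removes the spurious $a_{1}$-dependence in the $R_{0}$ of \Cref{Lemma - Technical Lemma 2}; the paper simply cites that lemma without making this point explicit.

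Where you diverge is in over-engineering the regularity step. The paper observes that Steps~1, 2 and~4 carry over \emph{verbatim}: the Solovej estimate $\tfrac{10}{9}u^{4/3} \leq \phi + C_{S} + a^{2}$ and the convolution/Jensen/maximum-principle argument of Step~4 already give
\[
\| u_{a,R_{n}} \|_{L^{\infty}}^{4/3} + \| \phi_{a,R_{n}} \|_{L^{\infty}} \leq C(M) + a^{2},
\]
with the only $a$-dependence being the additive $a^{2}$, which is then absorbed into $C(a_{0},M)$. That argument never invokes $\|Y_{a}\|_{L^{1}}$, so there is no blow-up as $a \to 0$ and no need for a separate short-range/long-range decomposition. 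The ``local cancellations'' you allude to are precisely what Step~4 extracts via the non-negativity of $-\Delta + \tfrac{5}{3}u^{4/3} - \phi$ tested against $\varphi_{y}$; you should simply cite that step rather than sketch a new mechanism.
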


\begin{remark}
The Coulomb minimisation problem \cite[Proposition 6.3]{Paper1} imposes a charge neutrality condition. Imposing a neutrality condition for the finite Yukawa problem introduces a Lagrange multiplier into (\ref{eq: u phi finite Yuk pair}) that weakens Theorem \ref{Theorem - Yukawa Coulomb Comparison} significantly. \qed
\end{remark}

The proof of \Cref{Proposition - Finite Yukawa Regularity Est} largely follows the proof of \cite[Proposition 6.3]{Paper1}. \Cref{Proposition - Finite Yukawa Regularity Est} is proved in four steps.

In Step 1, the minimisation problem (\ref{eq: I a Rn min problem}) is shown to be well-posed and defines a unique solution $(u_{a,R_{n}},\phi_{a,R_{n}})$ to (\ref{eq: u phi finite Yuk pair}), where $u_{a,R_{n}},\phi_{a,R_{n}}$ are continuous and decay at infinity.
The argument in Step 2 adapts the Solovej estimate for Yukawa systems to show: there exists $C_{S} > 0$ that for all $m \in \mathcal{M}_{L^{2}}(M,\omega)$ and $a, R_{n} > 0$
\begin{align}
\smfrac{10}{9} u_{a,R_{n}}^{4/3} &\leq \phi_{a,R_{n}} + C_{S} + a^{2}. \label{eq: Solovej Yuk est}
\end{align}
The aim of Step 3 is to show that there exists $a_{0} = a_{0}(\omega), R_{0} = R_{0}(\omega) > 0$ such that for all $0 < a \leq a_{0} \leq 1$ and $R_{n} \geq R_{0}$ 
\begin{align*}
u_{a,R_{n}} > 0 \, \text{ on } \, \R.
\end{align*}

Finally, in Step 4, the following estimate is established
\begin{align}
\| u_{R_{n}} \|_{L^{\infty}(\R)}^{4/3} + \| \phi_{R_{n}} \|_{L^{\infty}(\R)} \leq C(M) + a^{2} \leq C(M) + 1, \label{eq: u phi a Rn L inf est}
\end{align}
where the final constant is independent of $a$, $a_{0}$ and $R_{n}$. The desired estimates (\ref{eq: uRn a H4 unif est})-(\ref{eq: phiRn a H2 unif est}) then follow from standard elliptic regularity.

%First, we adapt the Solovej estimate \cite{Solovej_Universality} to treat the TFW Yukawa equations.
%
%\begin{lemma}
%\label{Lemma - Solovej Yukawa}
%Suppose $(u_{a,R_{n}},\phi_{a,R_{n}})$ solves (\ref{eq: u a n eq})-(\ref{eq: phi a n eq}), then there exists a universal constant $C_{S} > 0$ such that for any $a, R_{n} > 0$
%\begin{align}
%\smfrac{10}{9} u_{a,R_{n}}^{4/3} &\leq \phi_{a,R_{n}} + C_{S} + a^{2}. \label{eq: Solovej Yuk est}
%\end{align}
%\end{lemma}

\begin{proof}[Proof of \Cref{Proposition - Finite Yukawa Regularity Est}]
%\end{proof}
If $m \equiv 0$, then for all $a > 0$ and $R_{n}$, clearly $u_{a,R_{n}} = \phi_{a,R_{n}} = m_{R_{n}} = 0$ satisfies (\ref{eq: u phi finite Yuk pair}) and (\ref{eq: uRn a H4 unif est})--(\ref{eq: phiRn a H2 unif est}).

If $m \not\equiv 0$, then $\int_{B_{R_{0}}(x)} m \geq c_{0} > 0$ for some $x \in \R$ and $R_{0}, c_{0} > 0$. Without loss of generality suppose $x = 0$ otherwise translate $m$.

\emph{Step 1}
For each $n \in \mathbb{N}$ define
\begin{align*}
m_{R_{n}}(x) &= m(x) \cdot \chi_{B_{R_{n}}}(x),
\end{align*}
and choosing $R_{n} \geq R_{0}$ ensures that $\int_{\R} m_{R_{n}} \geq c_{0} > 0$, hence $m_{R_{n}} \not\equiv 0$. Recall
\begin{align*}
E^{\TFW}_{a} (v,m_{R_{n}}) &= \int |\nabla v|^{2} + \int v^{10/3} + \frac{1}{2} D_{a}( m_{R_{n}} - v^{2}, m_{R_{n}} - v^{2}) \geq 0.
\end{align*}
For each $R_{n}$ and $a > 0$, recall the minimisation problem (\ref{eq: I a Rn min problem})
\begin{align*}
I^{\TFW}_{a}(m_{R_{n}}) = \inf \left\{ \, E^{\TFW}_{a}(v,m_{R_{n}})  \, \bigg| \, \nabla v \in L^{2}(\R), v \in L^{10/3}(\R), v \geq 0 \, \right \}.
\end{align*}
By the Gagliardo--Nirenberg--Sobolev embedding \cite{Evans}, $v \in L^{6}(\R)$ and $\| v \|_{L^{6}(\R)} \leq C \| \nabla v \|_{L^{2}(\R)}$, moreover $v \in L^{p}(\R)$ for $p \in [10/3,6]$. Consequently
\begin{align*}
0 \leq D_{a}(v^{2},v^{2}) \leq \| Y_{a} \|_{L^{1}(\R)} \| v \|^{4}_{L^{4}(\R)} \leq C \| v \|_{L^{10/3}(\R)}^{5/2} \| v \|_{L^{6}(\R)}^{3/2} \leq C \| v \|_{L^{10/3}(\R)}^{5/2} \| \nabla v \|_{L^{2}(\R)}^{3/2}.
\end{align*}
Observe that there are no charge constraints on the electron density as in general $v \not\in L^{2}(\R)$. This is chosen to ensure that no Lagrange multipliers appear in (\ref{eq: u phi finite Yuk pair}).

As $m_{R_{n}} \in L^{p_{1}}(\R), Y_{a} \in L^{p_{2}}(\R)$  for all $p_{1} \in [1,2], p_{2} \in [1,3)$, applying Young's inequality yields
\begin{align*}
D_{a}(m_{R_{n}},v^{2}) \leq \| Y_{a} \|_{L^{5/2}(\R)} \| m_{R_{n}} \|_{L^{1}(\R)} \| v^{2} \|_{L^{5/3}(\R)} \leq C \| v \|_{L^{10/3}(\R)}^{2} \leq C + \frac{1}{2} \| v \|_{L^{10/3}(\R)}^{10/3},
\end{align*}
it follows that
\begin{align*}
E^{\TFW}_{a}(v,m_{R_{n}}) \geq \frac{1}{2} \left( \| \nabla v \|_{L^{2}(\R)}^{2} + \| v \|_{L^{10/3}(\R)}^{10/3} + D_{a}(v^{2},v^{2}) \right) + \frac{1}{2} D_{a}(m_{R_{n}},m_{R_{n}}) - C.
\end{align*}
As the energy is bounded below, there exists a minimising sequence $v_{k}$ satisfying
\begin{align*}
\| \nabla v \|_{L^{2}(\R)}^{2} + \| v \|_{L^{10/3}(\R)}^{10/3} + D_{a}(v^{2},v^{2}) \leq C, 
\end{align*}
hence there exists $u_{a,R_{n}}$ such that $\nabla u_{a,R_{n}} \in L^{2}(\R), u_{a,R_{n}} \in L^{10/3}(\R)$. Moreover, along a subsequence $\nabla v_{k}$ converges to $\nabla u_{a,R_{n}}$ weakly in $L^{2}(\R)$, $v_{k}$ converges to $u_{a,R_{n}}$, weakly in $L^{6}(\R)$ and $L^{10/3}(\R)$, strongly in $L^{p}(B_{R}(0))$ for all $p \in [1,6)$ and $R > 0$ and pointwise almost everywhere. Consequently,
\begin{align*}
E^{\TFW}_{a}(u_{a,R_{n}},m_{R_{n}}) \leq \liminf_{k \to \infty} E^{\TFW}_{a}(v_{k},m_{R_{n}}) = I^{\TFW}_{a}(m_{R_{n}}),
\end{align*}
hence $u_{a,R_{n}}$ is a minimiser of (\ref{eq: I a Rn min problem}). Define the alternate minimisation problem
\begin{align}
\inf \left\{ \, E^{\TFW}_{a}(\sqrt \rho,m_{R_{n}})  \, \bigg| \, \nabla \sqrt \rho \in L^{2}(\R), \rho \in L^{5/3}(\R), \rho \geq 0 \, \right \}. \label{eq: rho min problem}
\end{align}
Due to the strict convexity of $\rho \mapsto E^{\TFW}_{a}(\sqrt \rho,m_{R_{n}})$, it follows that $\rho_{a,R_{n}} = u^{2}_{a,R_{n}}$ is the unique minimiser of (\ref{eq: rho min problem}), hence $u_{a,R_{n}}$ is the unique minimiser of (\ref{eq: I a Rn min problem}).

Define
\begin{align}
\phi_{a,R_{n}} = \left( m_{R_{n}} - u_{a,R_{n}}^{2} \right) * Y_{a}, \label{eq: phi a Rn def}
\end{align}
then it follows that $(u_{a,R_{n}},\phi_{a,R_{n}})$ is the unique distributional solution to (\ref{eq: u phi finite Yuk pair})
\begin{align*}
&- \Delta u_{a,R_{n}} + \frac{5}{3} u_{a,R_{n}}^{7/3} - \phi_{a,R_{n}} u_{a,R_{n}} = 0, \\
&- \Delta \phi_{a,R_{n}} + a^{2}\phi_{a,R_{n}} = 4 \pi \left( m_{R_{n}} - u_{a,R_{n}}^{2} \right).
\end{align*}
Moreover, as $m_{R_{n}} - u_{a,R_{n}}^{2} \in L^{2}(\R)$ and the Fourier transform of $Y_{a}$, $\widehat{Y_{a}}$, satisfies
\begin{align*}
\widehat{Y_{a}}(k) = \frac{1}{a^{2} + |k|^{2}},
\end{align*}
it follows that 
\begin{align*}
\int_{\R} |\widehat{\phi_{a,R_{n}}}(k)|^{2} (a^{2} + |k|^{2}) \id k &= \int_{\R} |\widehat{ \left(m_{R_{n}} - u_{a,R_{n}}^{2}\right)}(k)|^{2} | \widehat{ Y_{a} }(k)|^{2} (a^{2} + |k|^{2}) \id k \\
&= \int_{\R} \frac{ |\widehat{ \left(m_{R_{n}} - u_{a,R_{n}}^{2}\right)}(k)|^{2} }{ (a^{2} + |k|^{2}) } \id k \\
&= \int_{\R} \left( \left( m_{R_{n}} - u_{a,R_{n}}^{2} \right) * Y_{a} \right) \left( m_{R_{n}} - u_{a,R_{n}}^{2} \right) \\ &= D_{a}(m_{R_{n}} - u_{a,R_{n}}^{2},m_{R_{n}} - u_{a,R_{n}}^{2}).
\end{align*}
It follows that $\phi_{a,R_{n}} \in H^{1}(\R)$ and
\begin{align}
\int_{\R} |\nabla \phi_{a,R_{n}}|^{2} + a^{2} \int_{\R} \phi_{a,R_{n}}^{2} = D_{a}(m_{R_{n}} - u_{a,R_{n}}^{2},m_{R_{n}} - u_{a,R_{n}}^{2}). \label{eq: phi a Rn H1 est}
\end{align}
Additionally, by applying Young's inequality yields
\begin{align*}
\| \phi_{a,R_{n}} \|_{L^{\infty}(\R)} &\leq \| m_{R_{n}} \|_{L^{2}(\R)} \| Y_{a} \|_{L^{2}(\R)} + \leq \| u_{a,R_{n}}^{2} \|_{L^{3}(\R)} \| Y_{a} \|_{L^{3/2}(\R)} \\ &\leq \| m_{R_{n}} \|_{L^{2}(\R)} \| Y_{a} \|_{L^{2}(\R)} + \leq \| u_{a,R_{n}} \|_{L^{6}(\R)}^{2} \| Y_{a} \|_{L^{3/2}(\R)},
\end{align*}
hence by \cite[Lemma II.25]{Lieb/Simon_TF}, $\phi_{a,R_{n}}$ is a bounded, continuous function that decays uniformly at infinity. In addition, as $m_{R_{n}} \in L^{p}(\R)$ for all $p \in [1,2]$, $Y_{a} \in L^{1}(\R)$ and $u_{a,R_{n}} \in L^{10/3}(\R)$, it follows that
\begin{align*}
\| \phi_{a,R_{n}} \|_{L^{5/3}(\R)} &\leq \| m_{R_{n}} - u_{a,R_{n}}^{2} \|_{L^{5/3}}(\R) \| Y_{a} \|_{L^{1}(\R)} \\ &\leq C \left( \| m_{R_{n}} \|_{L^{5/3}}(\R) + \| u_{a,R_{n}}^{2} \|_{L^{5/3}(\R)} \right) \\
&\leq C \left( \| m_{R_{n}} \|_{L^{5/3}}(\R) + \| u_{a,R_{n}} \|_{L^{10/3}(\R)}^{2} \right).
\end{align*}
%
%\refer{Can we prove that $u_{a,R_{n}} \in L^{2}(\R)$? }
%Since $u_{a,R_{n}} \geq 0$, we have $- \Delta u_{a,R_{n}} \leq \phi_{a,R_{n}} u_{a,R_{n}}$,
%hence
%\begin{align*}
%- \Delta u_{a,R_{n}} + u_{a,R_{n}} \leq ( 1 + \phi_{a,R_{n}} ) u_{a,R_{n}}
%\end{align*}
%As $\phi_{a,R_{n}} \in L^{\infty}(\R)$ and $u_{a,R_{n}} \in H^{1}(\R)$, the right-hand side belongs to $L^{2}(\R)$ hence by the Lax-Milgram theorem there exists a unique $g_{a,R_{n}} \in H^{1}(\R)$ satisfying
%\begin{align*}
%- \Delta g_{a,R_{n}} + g_{a,R_{n}} = (1 + \phi_{a,R_{n}}) u_{a,R_{n}}.
%\end{align*}
%Moreover, using the Green's function $g_{a,R_{n}} = \smfrac{e^{-|\cdot|}}{|\cdot|} * (1 + \phi_{a,R_{n}}) u_{a,R_{n}}$ and since $\smfrac{e^{-|\cdot|}}{|\cdot|}, (1 + \phi_{a,R_{n}}) u_{a,R_{n}} \in L^{2}(\R)$ by \cite[Lemma II.25]{Lieb/Simon_TF} $g_{a,R_{n}}$ is continuous function that decays at infinity, hence $g_{a,R_{n}} \in L^{\infty}(\R)$. It follows from the comparison principle that $u_{a,R_{n}} \leq g_{a,R_{n}}$, so $u_{a,R_{n}} \in L^{\infty}(\R)$ and decays at infinity. \\
%\refer{End of $u_{a,R_{n}} \in L^{2}(\R)$ aside!} \\
To bound $u_{a,R_{n}}$ above, recall that $u_{a,R_{n}}$ solves
\begin{align}
- \Delta u_{a,R_{n}} &=  -\frac{5}{3} u_{a,R_{n}}^{7/3} + \phi_{a,R_{n}} u_{a,R_{n}}, \label{eq: u a Rn eq}
\end{align}
and $u_{a,R_{n}} \in L^{10/3}(\R) \cap L^{6}(\R), \phi_{a,R_{n}} \in L^{5/3}(\R) \cap L^{\infty}(\R)$. It follows that the right-hand side of (\ref{eq: u a Rn eq}) belongs to $L^{2}(\R)$ and
\begin{align*}
\| -\smfrac{5}{3} u_{a,R_{n}}^{7/3} + \phi_{a,R_{n}} u_{a,R_{n}} \|_{L^{2}(\R)} &\leq \smfrac{5}{3} \| u_{a,R_{n}}^{7/3} \|_{L^{2}(\R)} + \| \phi_{a,R_{n}} u_{a,R_{n}} \|_{L^{2}(\R)} \\
&\leq \smfrac{5}{3} \| u_{a,R_{n}} \|_{L^{14/3}(\R)}^{7/3} + \| \phi_{a,R_{n}} \|_{L^{5}(\R)} \| u_{a,R_{n}} \|_{L^{10/3}(\R)} \\
&\leq \smfrac{5}{3} \| u_{a,R_{n}} \|_{L^{10/3}(\R)}^{5/6} \| u_{a,R_{n}} \|_{L^{6}(\R)}^{3/2} + \| \phi_{a,R_{n}} \|_{L^{5}(\R)} \| u_{a,R_{n}} \|_{L^{10/3}(\R)}.
\end{align*}
%\begin{align*}
%\| \smfrac{5}{3} u_{R_{n}}^{7/3} - \phi_{R_{n}} u_{R_{n}} \|_{L^{2}(B_{2}(x))} &\leq 
%C \| \smfrac{5}{3} u_{R_{n}}^{7/3} - \phi_{R_{n}} u_{R_{n}} \|_{L^{\infty}(\R)} \\
%&\leq C ( \| u_{R_{n}} \|_{L^{\infty}(\R)}^{7/3} + \| \phi_{R_{n}} \|_{L^{\infty}(\R)} \| u_{R_{n}} \|_{L^{\infty}(\R)} ),
%\end{align*}
Then for any $x \in \R$ applying the elliptic regularity estimate \cite{Evans} yields
\begin{align*}
\| u_{a,R_{n}} \|_{H^{2}(B_{1}(x))} &\leq C ( \| \smfrac{5}{3} u_{a,R_{n}}^{7/3} - \phi_{a,R_{n}} u_{a,R_{n}} \|_{L^{2}(B_{2}(x))} + \| u_{a,R_{n}} \|_{L^{2}(B_{2}(x))} ) \nonumber \\
&\leq C ( \| \smfrac{5}{3} u_{a,R_{n}}^{7/3} - \phi_{a,R_{n}} u_{a,R_{n}} \|_{L^{2}(\R)} + \| u_{a,R_{n}} \|_{L^{10/3}(B_{2}(x))} ) \\
&\leq C ( \| \smfrac{5}{3} u_{a,R_{n}}^{7/3} - \phi_{a,R_{n}} u_{a,R_{n}} \|_{L^{2}(\R)} + \| u_{a,R_{n}} \|_{L^{10/3}(\R)} ),
\end{align*}
where the constant is independent of $x \in \R$. The Sobolev embedding $H^{2}(B_{1}(x)) \hookrightarrow C^{0,1/2}(B_{1}(x))$ implies that $u_{a,R_{n}}$ is continuous and bounded as 
\begin{align*}
\| u_{a,R_{n}} \|_{L^{\infty}(B_{1}(x))} \leq \| u_{a,R_{n}} \|_{C^{0,1/2}(B_{1}(x))} \leq C \| u_{a,R_{n}} \|_{H^{2}(B_{1}(x))},
\end{align*}
hence
\begin{align}
\| u_{a,R_{n}} \|_{L^{\infty}(\R)} = \sup_{x \in \R} \| u_{a,R_{n}} \|_{L^{\infty}(B_{1}(x))} \leq \sup_{x \in \R} C \| u_{a,R_{n}} \|_{H^{2}(B_{1}(x))} < \infty.
\end{align}
It remains to show that $u_{a,R_{n}}$ decays at infinity. Recall that $u_{a,R_{n}}$ solves (\ref{eq: u a Rn eq})
\begin{align*}
- \Delta u_{R_{n}} &=  -\frac{5}{3} u_{R_{n}}^{7/3} + \phi_{R_{n}} u_{R_{n}}
\end{align*}
and also that $u_{a,R_{n}} \in L^{10/3}(\R) \cap L^{\infty}(\R), \phi_{a,R_{n}} \in L^{5/3}(\R) \cap L^{\infty}(\R)$. Define
\begin{align}
g_{a,R_{n}} := \left( -\frac{5}{3} u_{a,R_{n}}^{7/3} + \phi_{a,R_{n}} u_{a,R_{n}} \right) * \smcb. \label{eq: g a Rn def}
\end{align}
Observe that $u_{a,R_{n}}^{7/3} \in L^{10/7}(\R) \cap L^{\infty}(\R)$ and applying \Holder's inequality gives
\begin{align*}
\| \phi_{a,R_{n}} u_{a,R_{n}} \|_{L^{10/9}(\R)} &\leq \| \phi_{a,R_{n}} \|_{L^{5/3}(\R)} \| u_{a,R_{n}} \|_{L^{10/3}(\R)},
\end{align*}
hence $\phi_{a,R_{n}} u_{a,R_{n}} \in L^{10/9}(\R) \cap L^{\infty}(\R)$. It follows that $-\smfrac{5}{3} u_{R_{n}}^{7/3} + \phi_{R_{n}} u_{R_{n}} \in L^{10/7}(\R) \cap L^{\infty}(\R)$. Decompose
\begin{align*}
g_{a,R_{n}} &= \left( -\frac{5}{3} u_{a,R_{n}}^{7/3} + \phi_{a,R_{n}} u_{a,R_{n}} \right) * \left( \smcb \chi_{B_{1}(0)} \right) + \left( -\frac{5}{3} u_{a,R_{n}}^{7/3} + \phi_{a,R_{n}} u_{a,R_{n}} \right) * \left( \smcb \chi_{B_{1}(0)^{c}} \right),
\end{align*}
then as $\smcb \chi_{B_{1}(0)} \in L^{p_{1}}(\R)$ for all $p_{1} \in [1,3)$, $\smcb \chi_{B_{1}^{c}(0)} \in L^{p_{2}}(\R)$ for all $p_{2} \in (3,\infty]$ applying Young's inequality yields
\begin{align*}
\left \| g_{a,R_{n}} \right \|_{L^{\infty}(\R)} &\leq \| \smfrac{5}{3} u_{a,R_{n}}^{7/3} - \phi_{a,R_{n}} u_{a,R_{n}} \|_{L^{2}(\R)} \left \| \smcb \chi_{B_{1}(0)} \right \|_{L^{2}(\R)} \\ & \quad + \| \smfrac{5}{3} u_{a,R_{n}}^{7/3} - \phi_{a,R_{n}} u_{a,R_{n}} \|_{L^{10/7}(\R)} \left \| \smcb \chi_{B_{1}(0)^{c}} \right \|_{L^{10/3}(\R)},
\end{align*}
hence \cite[Lemma II.25]{Lieb/Simon_TF} implies that $g_{a,R_{n}}$ is a continuous, bounded function vanishing at infinity. In addition, $g_{a,R_{n}}$ solves
\begin{align}
- \Delta g_{a,R_{n}} = -\frac{5}{3} u_{a,R_{n}}^{7/3} + \phi_{a,R_{n}} u_{a,R_{n}} \label{eq: g a Rn dist solution}
\end{align}
in distribution. Combining (\ref{eq: u a Rn eq}) and (\ref{eq: g a Rn dist solution}), it follows that
\begin{align*}
- \Delta ( u_{a,R_{n}} - g_{a,R_{n}}) = 0,
\end{align*}
in distribution, so by Weyl's Lemma $u_{a,R_{n}} - g_{a,R_{n}}$ is harmonic \cite{Gilbarg/Trudinger}. As $u_{a,R_{n}} - g_{a,R_{n}} \in L^{\infty}(\R)$, Liouville's Theorem implies $u_{a,R_{n}} - g_{a,R_{n}}$ is constant \cite{Gilbarg/Trudinger}. Suppose that $u_{a,R_{n}} - g_{a,R_{n}} = c \neq 0$, then as $g_{a,R_{n}}$ decays at infinity
\begin{align*}
\lim_{x \to \infty} u_{a,R_{n}}(x) = c \neq 0,
\end{align*}
which contradicts $u_{a,R_{n}} \in L^{10/3}(\R)$. It follows that $u_{a,R_{n}} = g_{a,R_{n}}$ hence $u_{a,R_{n}}$ decays uniformly at infinity.
%We now show that this constant must be zero as $u_{a,R_{n}} - g_{a,R_{n}} \in L^{60}(\R)$. Our previous estimates imply $u_{a,R_{n}} \in L^{60}(\R)$, so it remains to show that $g_{a,R_{n}} \in L^{60}(\R)$. This follows by Young's inequality, in particular using that $1 + \smfrac{1}{60} = \smfrac{7}{10} + \smfrac{19}{60}$
%\begin{align*}
%\left \| g_{a,R_{n}} \right \|_{L^{\infty}(\R)} &\leq \| -\smfrac{5}{3} u_{a,R_{n}}^{7/3} + \phi_{a,R_{n}} u_{a,R_{n}} \|_{L^{60}(\R)} \left \| \smcb \chi_{B_{1}(0)} \right \|_{L^{1}(\R)} \\ & \quad + \| -\smfrac{5}{3} u_{a,R_{n}}^{7/3} + \phi_{a,R_{n}} u_{a,R_{n}} \|_{L^{10/7}(\R)} \left \| \smcb \chi_{B_{1}(0)^{c}} \right \|_{L^{60/19}(\R)}.
%\end{align*}
%It follows that $u_{a,R_{n}} = g_{a,R_{n}}$, hence $u_{a,R_{n}}$ is a bounded, continuous function that decays at infinity.

\emph{Step 2}
%Following the proof of \Cref{Proposition - Finite Regularity Est} up to (\ref{eq: Solovej est}), it follows that $u_{a,R_{n}}, \phi_{a,R_{n}}$ are continuous functions decaying at infinity. 
The argument in \cite{Solovej_Universality} is now adapted to show the Solovej estimate for Yukawa systems (\ref{eq: Solovej Yuk est})
\begin{align*}
\smfrac{10}{9} u_{a,R_{n}}^{4/3} &\leq \phi_{a,R_{n}} + C_{S} + a^{2}.
\end{align*}
For convenience, in the following argument $u_{a,R_{n}},\phi_{a,R_{n}},m_{a,R_{n}}$ will be denoted as $u,\phi,m$. As $u$ solves (\ref{eq: u a n eq})
\begin{align*}
- \Delta u + \frac{5}{3} u^{7/3} - \phi u = 0,
\end{align*} 
following the proof of \cite[Proposition 8]{Solovej_Universality}, $w = u^{4/3}$ is non-negative and satisfies
\begin{align}
- \Delta w + \frac{4}{3} \left( \smfrac{5}{3}w - \phi \right)w \leq 0. \label{eq: w Yuk Sol est}
\end{align}
Let $\lambda \in (0,\smfrac{5}{3})$ and define
\begin{align*}
v(x) = \lambda u^{4/3} - \phi - ( C(\lambda) + a^{2} ),
\end{align*}
where $C(\lambda) = (9/4)\pi^{2} \lambda^{-2}(\smfrac{5}{3} - \lambda)^{-1} > 0$. The expression (\ref{eq: phi a n eq}) can be written as
\begin{align}
- \Delta \phi + a^{2} \phi = 4\pi ( m - w^{3/2}). \label{eq: phi Yuk est}
\end{align}
Combining (\ref{eq: w Yuk Sol est}) and (\ref{eq: phi Yuk est}), it follows that
\begin{align*}
\Delta v(x) \geq \frac{4 \lambda}{3} \left( \smfrac{5}{3}w - \phi  \right) w - 4\pi w^{3/2} + 4\pi m - a^{2} \phi.
\end{align*}
The aim is to prove that $v \leq 0$ by showing that $S = \{ x \, | \, v(x) > 0 \}$ is empty. As $u,\phi$ are continuous functions decaying at infinity, it follows that $v$ is continuous, $S$ is bounded, open and $v = 0$ on $\partial S$. Over $S$,
\begin{align*}
\Delta v &\geq \frac{4 \lambda}{3} \left( v + \smfrac{5}{3} w - \lambda w + (C(\lambda) + a^{2} ) \right) w - 4\pi w^{3/2} + 4\pi m - a^{2} \phi \\
&\geq \frac{4 \lambda}{3} \left( \smfrac{5}{3} w - \lambda w + C(\lambda) + a^{2} \right) w - 4\pi w^{3/2} + 4\pi m - a^{2} \phi \\
&= \left( \frac{4 \lambda (\smfrac{5}{3} - \lambda)}{3} w - 4\pi w^{1/2} + \frac{4 \lambda}{3} C(\lambda) \right) w + \frac{4 \lambda}{3} a^{2} w  + 4\pi m - a^{2} \phi.
\end{align*}
The value of $C(\lambda)$ is chosen to ensure that
\begin{align*}
\frac{4 \lambda (\smfrac{5}{3} - \lambda)}{3} w - 4\pi w^{1/2} + \frac{4 \lambda}{3} C(\lambda) \geq 0,
\end{align*}
hence as $m$ is non-negative and $v \geq 0$ in $S$
\begin{align*}
\Delta v &\geq \frac{4 \lambda}{3} a^{2} w  + 4\pi m - a^{2} \phi \\
&\geq a^{2} ( \lambda w - \phi) = a^{2}( v + (C(\lambda) + a^{2} )) \geq a^{2} (C(\lambda) + a^{2} ) \geq 0.
\end{align*}
As $v$ satisfies
\begin{align*}
- \Delta v &\leq 0 \quad \text{ in } S, \\
v &= 0 \quad \text{ on } \partial S,
\end{align*}
it follows that both $v \leq 0$ and $v > 0$ on $S$, hence $S$ is non-empty and $v \leq 0$ on $\R$. So for all $\lambda \in (0,\smfrac{5}{3})$ and all $x \in \R$
\begin{align*}
\lambda u^{4/3}(x) \leq \phi(x) + C(\lambda) + a^{2}.
\end{align*}
The right-hand side is minimised by choosing $\lambda = \smfrac{10}{9}$, which yields the desired estimate (\ref{eq: Solovej Yuk est}).

\emph{Step 3} The aim is to show that there exists $a_{0} = a_{0}(\omega), R_{0} = R_{0}(\omega) > 0$ such that for all $0 < a \leq a_{0}$ and $R_{n} \geq R_{0}$, $u_{a,R_{n}} > 0$ on $\R$, by following the argument used in \cite[Proposition 2.2]{C/LB/L}.

First recall the energy minimisation problem (\ref{eq: I a Rn min problem})
\begin{align*}
I^{\TFW}_{a}(m_{R_{n}}) = \inf \left\{ \, E^{\TFW}_{a}(v,m_{R_{n}})  \, \bigg| \, \nabla v \in L^{2}(\R), v \in L^{10/3}(\R), v \geq 0 \, \right \}
\end{align*}
where
\begin{align}
E^{\TFW}_{a} (v,m_{R_{n}}) &= \int_{\R} |\nabla v|^{2} + \int_{\R} v^{10/3} + \frac{1}{2} D_{a}( m_{R_{n}} - v^{2}, m_{R_{n}} - v^{2}). \label{eq:E-TFW-a-energy-def}
\end{align}
By showing that for large $R_{n}$ and small $a > 0$
\begin{align}
I^{\TFW}_{a}(m_{R_{n}}) = E^{\TFW}_{a}(u_{a,R_{n}},m_{R_{n}}) < E^{\TFW}_{a}(0,m_{R_{n}}), \label{eq:uaRn-0-comparison-est}
\end{align}
it follows that $u_{a,R_{n}} \not\equiv 0$, hence by the Harnack inequality $u_{a,R_{n}} > 0$ on $\R$ \cite{Gilbarg/Trudinger}.
An admissible test function $\varphi_{a}$ is constructed to satisfy: for sufficiently large $R_{n} $
\begin{align*}
I^{\TFW}_{a}(m_{R_{n}}) \leq E^{\TFW}_{a}(\varphi_{a_{0}},m_{R_{n}}) < E^{\TFW}_{a}(0,m_{R_{n}}) = \frac{1}{2} D_{a}( m_{R_{n}} , m_{R_{n}}).
\end{align*}

For $\varepsilon > 0$, let $\varphi_{a} = \varepsilon \psi_{a}$ and consider the difference
\begin{align}
E^{\TFW}_{a} &(\varepsilon \psi_{a},m_{R_{n}}) - E^{\TFW}_{a} (0,m_{R_{n}}) \nonumber \\ &= \varepsilon^{2} \left( \int |\nabla \psi_{a}|^{2} - D_{a}( m_{R_{n}}, \psi_{a}^{2}) \right) + \frac{\varepsilon^{4}}{2} D_{a}( \psi_{a}^{2}, \psi_{a}^{2}) + \varepsilon^{10/3} \int \psi_{a}^{10/3}. \label{eq:E-TFW-energy-diff-eq}
\end{align}
For small $\varepsilon > 0$, the right-hand side of (\ref{eq:E-TFW-energy-diff-eq}) is shown to be negative by first proving that there exists $a_{0}, C_{0} > 0$ such that for all $0 < a \leq a_{0}$ 
\begin{align}
\int_{\R} |\nabla \psi_{a}|^{2} - D_{a}( m_{R_{n}}, \psi_{a}^{2}) \leq - \frac{C_{0}}{2} \, a < 0. \label{eq:Nabla-a-neg-eq}
\end{align}

Let $\psi_{0} \in C^{\infty}_{c}(B_{1}(0))$ satisfy $\psi_{0} \geq 0$, and $\psi_{0} = 1$ on $B_{1/2}(0)$, then define $\psi_{a}(x) = a^{3/2}\psi_{0}(ax)$, for $a \in (0,1]$. 

Using the definition of $\psi_{a}$ gives
\begin{align}
D_{a}( m_{R_{n}}, \psi_{a}^{2}) &= \int_{\R} \left( m_{R_{n}} * Y_{a} \right) \psi_{a}^{2} \geq
\frac{a^{3}}{4} \int_{B_{1/2a}(0)} \left( m_{R_{n}} * Y_{a} \right)(x) \id x \nonumber \\
&= a^{3} \int_{\R} \left( \int_{B_{1/2a}(0) \cap B_{R_{n}}(y)} m_{R_{n}}(x-y) \id x \right) \frac{e^{-a|y|}}{|y|} \id y \nonumber \\
&= a^{3} \int_{\R} \left( \int_{B_{1/2a}(-y) \cap B_{R_{n}}(0)} m_{R_{n}}(x) \id x \right) \frac{e^{-a|y|}}{|y|} \id y. \label{eq: m kj conv}
\end{align}
First consider for $R' > 0$
\begin{align*}
\int_{B_{R'}(0)} \frac{e^{-a|y|}}{|y|} \id y = 4 \pi \int_{0}^{R'} r e^{-ar} \id r = \frac{4 \pi}{a^{2}} \left( 1 - e^{-aR'} (1 + a R')  \right),
\end{align*}
hence choosing $R' = (4a)^{-1}$ ensures that
\begin{align}
\int_{B_{1/4a}(0)} \frac{e^{-a|y|}}{|y|} \id y = \frac{4 \pi}{a^{2}} \left( 1 - \smfrac{5}{4}e^{-1/4} \right) \geq \frac{\pi}{10 a^{2}}. \label{eq: Yuk exp int est}
\end{align}
Now choose $a^{*} = \min\{1, (4R_{0})^{-1}\}$ and suppose $R_{n} \geq R_{0}$. Then for all $y \in B_{1/4a}(0)$, it follows from the triangle inequality that $B_{R_{0}}(0) \subset B_{1/2a}(-y) \cap B_{R_{n}}(0)$, hence
\begin{align}
\int_{B_{1/2a}(-y) \cap B_{R_{n}}(0)} m_{R_{n}}(x) \id x \geq \int_{B_{R_{0}}(0)} m(x) \id x \geq c_{0} > 0. \label{eq: m kj int lower bound}
\end{align}
%hence choosing $R'$ of the order $a^{-1}$ ensures that
%\begin{align}
%\int_{B_{R}(0)} \frac{e^{-a|y|}}{|y|} \id y = 4 \pi \int_{0}^{R} r e^{-ar} \id r \geq \frac{1}{a^{2}}. 
%\end{align}
Applying (\ref{eq: Yuk exp int est})--(\ref{eq: m kj int lower bound}) to (\ref{eq: m kj conv}), it follows that for all $0 < a \leq a^{*}$ and $R_{n} \geq R_{0}$
\begin{align}
D_{a}( m_{R_{n}}, \psi_{a}^{2}) &= \int_{\R} \left( m_{R_{n}} * Y_{a} \right) \psi_{a}^{2} \nonumber \\ &\geq a^{3} \int_{\R} \left( \int_{B_{1/2a}(-y) \cap B_{R_{n}}(0)} m_{R_{n}}(x) \id x \right) \frac{e^{-a|y|}}{|y|} \id y \nonumber \\
&= c_{0} a^{3} \int_{B_{1/4a}(0)} \frac{e^{-a|y|}}{|y|} \id y \geq \frac{c_{0} \pi}{10} a =: C_{0} \, a. \label{eq: Technical Lemma Application}
\end{align}

Using a change of variables
\begin{align}
\int_{B_{1/a}(0)} |\nabla \psi_{a}|^{2} = a^{2} \int_{B_{1}(0)} |\nabla \psi_{0}|^{2} =: C_{1} a^{2}. \label{eq:C2-eq}
\end{align}
Now define $a_{0} = \min\{a^{*}, \smfrac{C_{0}}{2C_{1}}\}$,
%and $R_{0} = \max\{ R', C_{0}\, a_{0}^{-1} \}$. 
then for any $0 < a \leq a_{0}$ and $R_{n} \geq R_{0}$, combining (\ref{eq: Technical Lemma Application})--(\ref{eq:C2-eq}) yields (\ref{eq:Nabla-a-neg-eq})
\begin{align*}
\int |\nabla \psi_{a}|^{2} - D_{a}( m_{R_{n}}, \psi_{a}^{2}) \leq C_{1} a^{2} - C_{0} a \leq \frac{C_{0}}{2} \, a - C_{0} \, a = - \frac{C_{0}}{2} \, a < 0.
\end{align*}
Using that $a_{0}, \varepsilon \in (0,1]$, the remaining terms in (\ref{eq:E-TFW-energy-diff-eq}) can be estimated using a change of variables
\begin{align}
\frac{\varepsilon^{4}}{2} D_{a}( \psi_{a_{0}}^{2}, \psi_{a_{0}}^{2}) &+ \varepsilon^{10/3} \int \psi_{a_{0}}^{10/3} = \frac{\varepsilon^{4} a_{0}}{2} D_{0}( \psi_{0}^{2}, \psi_{0}^{2}) + \varepsilon^{10/3} a_{0}^{7} \int \psi_{0}^{10/3} \nonumber \\
&\leq \left( \frac{1}{2} D_{0}( \psi_{0}^{2}, \psi_{0}^{2}) + \int \psi_{0}^{10/3} \right) \varepsilon^{4} a_{0} =: C_{2} \varepsilon^{4} a_{0}. \label{eq:C3-epsilon4-est}
\end{align}
Applying the estimates (\ref{eq:Nabla-a-neg-eq})-(\ref{eq:C3-epsilon4-est}) to (\ref{eq:E-TFW-energy-diff-eq}) and choosing $0 < \varepsilon \leq \min\{1, (\smfrac{C_{0}}{3C_{2}})^{1/2}\}$ yields the desired result (\ref{eq:uaRn-0-comparison-est})
\begin{align*}
E^{\TFW}_{a} &(\varepsilon \psi_{a},m_{R_{n}}) - E^{\TFW}_{a} (0,m_{R_{n}}) \nonumber \leq \left( C_{2} \varepsilon^{2} - \frac{C_{0}}{2} \right) \varepsilon^{2} a_{0} < 0.
\end{align*}
\emph{Step 4} The aim is to show a uniform upper bound for $\phi_{a,R_{n}}$, which together with (\ref{eq: Solovej Yuk est}) yields the uniform estimate (\ref{eq: u phi a Rn L inf est})
\begin{align*}
\| u_{a,R_{n}} \|_{L^{\infty}(\R)}^{4/3} + \| \phi_{a,R_{n}} \|_{L^{\infty}(\R)} \leq C(M) + a^{2} \leq C(M) + 1,
\end{align*}
where the constant is independent of $a$ and $R_{n}$. This will be proved by adapting the argument used to show uniform regularity for finite systems with Coulomb interaction \cite{Paper1, C/LB/L}.

As $u_{a,R_{n}} \geq 0$, re-arranging the Solovej estimate (\ref{eq: Solovej Yuk est}) gives the uniform lower bound
\begin{align}
\phi_{a,R_{n}} \geq - ( C_{S} + a^{2} ). \label{eq: phi aRn Solovej bound below}
\end{align}
If $\phi_{a,R_{n}}$ is non-positive, then (\ref{eq: u phi a Rn L inf est}) holds as
\begin{align*}
\| u_{a,R_{n}} \|_{L^{\infty}(\R)}^{4/3} + \| \phi_{a,R_{n}} \|_{L^{\infty}(\R)} \leq 2 ( C_{S} + a^{2} ) \leq 2 ( C_{S} + 1 ).
\end{align*}
Instead, suppose that $\phi^{+}_{a,R_{n}}$ is non-zero at some point in $\R$. As shown in Step 1, $\phi_{a,R_{n}}$ is a continuous function that decays at infinity, hence there exists $x_{a,R_{n}} \in \R$ such that
\begin{align}
\phi_{a,R_{n}}^{+}(x_{a,R_{n}}) = \| \phi_{a,R_{n}}^{+} \|_{L^{\infty}(\R)} > 0. \label{eq: phi Rn sup xn = 0}
\end{align}
Without loss of generality, assume that $x_{a,R_{n}} = 0$.

In Step 1, it was shown that $u_{a,R_{n}}, \phi_{a,R_{n}} \in L^{\infty}(\R), \nabla u_{a,R_{n}} \in L^{2}(\R), \phi_{a,R_{n}} \in H^{1}(\R)$. Consequently, applying \cite[Lemma 6.1]{Paper1} implies that \newline $L_{a,R_{n}} = - \Delta + \frac{5}{3} u_{a,R_{n}}^{4/3} - \phi_{a,R_{n}}$ is a non-negative operator.

Choose $\varphi \in C^{\infty}_{c}(B_{1}(0))$ satisfying $0 \leq \varphi \leq 1$, $\varphi = 1$ on $B_{1/2}(0)$ and $\int_{\R} \varphi^{2} = 1$, then for $y \in \R$, define $\varphi_{y} \in C^{\infty}_{c}(B_{1}(y))$ by $\varphi_{y} = \varphi(\cdot - y)$. As $L_{a,R_{n}}$ is non-negative
\begin{align*}
\langle \varphi_{y}, L_{a,R_{n}} \varphi_{y} \rangle = \int_{\R} |\nabla \varphi_{y}|^{2} + \int_{\R} \left( \frac{5}{3} u_{a,R_{n}}^{4/3} - \phi_{a,R_{n}} \right) \varphi_{y}^{2} \geq 0,
\end{align*}
which can be re-arranged and expressed using convolutions as
\begin{align}
\frac{5}{3} \left( u_{a,R_{n}}^{4/3} * \varphi^{2} \right) &\geq  \left( \phi_{a,R_{n}} * \varphi^{2} - \int_{\R} |\nabla \varphi|^{2} \right)_{+} \nonumber \\
&= \left( \phi_{a,R_{n}} * \varphi^{2} - C \right)_{+} \label{eq: phi Rn first convolution est}
\end{align}
Observe that $\phi_{a,R_{n}} * \varphi^{2}$ solves
\begin{align}
- \Delta \left( \phi_{a,R_{n}} * \varphi^{2} \right) + a^{2}\left( \phi_{a,R_{n}} * \varphi^{2} \right) &= 4 \pi \left( m_{R_{n}} * \varphi^{2} - u^{2}_{a,R_{n}} * \varphi^{2}  \right). \label{eq: Delta phi Rn convolution est}
\end{align}
The first term can be estimated uniformly
\begin{align}
\left(m_{R_{n}} * \varphi^{2} \right)(x) &= \int_{B_{1}(x)} m_{R_{n}}(y) \varphi^{2}(x - y) \id y \nonumber \\ &\leq \int_{B_{1}(x)} m(y) \id y \leq C_{0} \| m \|_{L^{2}_{\unif}(\R)} \leq C_{0} M. \label{eq: m Rn convolution est}
\end{align}
%\begin{align}
%\| m_{a,R_{n}} * \varphi^{2} \|_{L^{\infty}(\R)} &\leq  \| m_{a,R_{n}} \|_{L^{\infty}(\R)} \| \varphi^{2} \|_{L^{1}(\R)} = \| m_{a,R_{n}} \|_{L^{\infty}(\R)} \leq M. \label{eq: m Rn convolution est}
%\end{align}
For the second term, using the convexity of $t \mapsto t^{3/2}$ for $t \geq 0$ and the fact that $\int \varphi^{2} = 1$, applying Jensen's inequality and (\ref{eq: phi Rn first convolution est}) implies that
\begin{align}
4 \pi \, u_{a,R_{n}}^{2} * \varphi^{2}(x) &\geq \frac{5}{3} u_{a,R_{n}}^{2} * \varphi^{2}(x) \nonumber \\
&= \frac{5}{3} \int_{\R} u_{a,R_{n}}^{2}(x-y) \varphi^{2}(y) \id y \nonumber \\
&= \frac{5}{3} \int_{\R} \left(u_{a,R_{n}}^{4/3}(x-y)\right)^{3/2} \varphi^{2}(y) \id y \nonumber \\
&\geq \frac{5}{3} \left( \int_{\R} u_{a,R_{n}}^{4/3}(x-y) \varphi^{2}(y) \id y \right)^{3/2} \nonumber \\
&= \frac{5}{3}(u^{4/3}_{a,R_{n}} * \varphi^{2})^{3/2} \geq \left( \phi_{a,R_{n}} * \varphi^{2} - C  \right)^{3/2}_{+}. \label{eq: phi Rn convolution est}
\end{align}
Combining the estimates (\ref{eq: Delta phi Rn convolution est})--(\ref{eq: phi Rn convolution est}) yields
\begin{align*}
- \Delta \left( \phi_{a,R_{n}} * \varphi^{2} \right) + a^{2}\left( \phi_{a,R_{n}} * \varphi^{2} \right) + \left( \phi_{a,R_{n}} * \varphi^{2} - C  \right)^{3/2}_{+} \leq C_{0} M.
\end{align*}
Observe that as $\phi_{a,R_{n}}$ is a continuous function that decays at infinity, $\phi_{a,R_{n}} * \varphi^{2}$ also shares these properties. Now consider the set
\begin{align*}
S = \{ \, x \in \R \, | \, \phi_{a,R_{n}} * \varphi^{2} - C > 0 \, \},
\end{align*}
it follows that $S$ is open and bounded and that $\phi_{a,R_{n}} * \varphi^{2} - C = 0$ on $\partial S$. Observe that the constant function $h = (C_{0}M)^{2/3}$ satisfies
\begin{align*}
- \Delta h + a^{2}(h + C) + h^{3/2}_{+} \geq h^{3/2}_{+} = C_{0}M \quad &\text{ on } S, \\
0 = \, \phi_{a,R_{n}} * \varphi^{2} - C \leq h \quad &\text{ in } \partial S,
\end{align*}
so by the maximum principle $\phi_{a,R_{n}} * \varphi^{2} \leq C(1 + M^{2/3})$
over $S$, and also on $S^{c}$, hence
\begin{align}
\phi_{a,R_{n}} * \varphi^{2} \leq C(1 + M^{2/3}). \label{eq: phi Rn M convolution est}
\end{align}
Applying (\ref{eq: phi aRn Solovej bound below}), it follows that
\begin{align}
\phi_{a,R_{n}}^{+} * \varphi^{2} &= \phi_{a,R_{n}}^{-} * \varphi^{2} + \phi_{a,R_{n}} * \varphi^{2} \leq C_{S} + a^{2} + C(1 + M^{2/3})  = C(1 + M^{2/3}) + a^{2}. \label{eq: phi aRn plus conv est}
\end{align}
Additionally,
\begin{align}
-\Delta \phi_{a,R_{n}}^{+} &\leq -\Delta \phi_{a,R_{n}}^{+} + a^{2} \phi_{a,R_{n}}^{+} = \left( -\Delta \phi_{a,R_{n}} + a^{2} \phi_{a,R_{n}} \right) \chi_{\{\phi_{a,R_{n}} > 0\}} \nonumber \\ &= 4 \pi \left( m_{R_{n}} - u_{a,R_{n}}^{2} \right) \chi_{\{\phi_{a,R_{n}} > 0\}} \leq 4 \pi m_{R_{n}} \chi_{\{\phi_{a,R_{n}} > 0\}}  \leq 4 \pi m_{R_{n}}. \label{eq: phi aRn plus Lap est}
\end{align}
From this point onwards, following the proof of \cite[Proposition 6.2]{Paper1} verbatim with the estimates (\ref{eq: phi aRn plus conv est})--(\ref{eq: phi aRn plus Lap est}) gives
\begin{align}
\| \phi_{a,R_{n}}^{+} \|_{L^{\infty}(\R)} \leq C(1 + M) + a^{2}. \label{eq: phi aRn plus Linf est}
\end{align}
Combining (\ref{eq: phi aRn Solovej bound below})--(\ref{eq: phi aRn plus Linf est}) with the Solovej estimate (\ref{eq: Solovej Yuk est}), yields the desired estimate (\ref{eq: u phi a Rn L inf est})
\begin{align*}
\| u_{a,R_{n}} \|_{L^{\infty}(\R)}^{4/3} + \| \phi_{a,R_{n}} \|_{L^{\infty}(\R)} \leq C(1+M) + a^{2} \leq C(1+M).
\end{align*}
Then, as in the proof of \cite[Proposition 6.2]{Paper1}, applying elliptic regularity estimates to the system (\ref{eq: u phi finite Yuk pair}) yields the desired estimates (\ref{eq: uRn a H4 unif est})--(\ref{eq: phiRn a H2 unif est}).
%\begin{equation*}
%\| u_{a,R_{n}} \|_{H^{4}_{\unif}(\R)} \leq C(M), \end{equation*}
%\vspace{-15pt}
%\begin{equation*}
%\| \phi_{a,R_{n}} \|_{H^{2}_{\unif}(\R)} \leq C(M). \qedhere
%\end{equation*}  \qed
\begin{align*}
\| u_{a,R_{n}} \|_{H^{4}_{\unif}(\R)} &\leq C(M), \\
\| \phi_{a,R_{n}} \|_{H^{2}_{\unif}(\R)} &\leq C(M). \qedhere
\end{align*}
\end{proof}

%The proof of \Cref{Proposition - Finite Yukawa Regularity Est All a} adapts the argument used to show \Cref{Proposition - Finite Yukawa Regularity Est}.

\begin{proof}[Proof of \Cref{Proposition - Finite Yukawa Regularity Est All a}]
The proof follows the steps used to show \Cref{Proposition - Finite Yukawa Regularity Est}. Steps 1, 2 and 4 hold verbatim and Step 3 is modified to instead show that for any $a_{0} > 0$ and $m \in \mathcal{M}_{L^{2}}(M,\omega)$, there exists $R_{0} = R_{0}(a_{0},\omega) > 0$ such that for any $0 < a \leq a_{0}$ and $R_{n} \geq R_{0}$, the unique minimiser $u_{a,R_{n}}$ of (\ref{eq: I a Rn min problem}) satisfies
\begin{align}
u_{a,R_{n}} > 0 \, \text{ on } \, \R. \label{eq: u pos all a}
\end{align}
Recall the energy minimisation problem (\ref{eq: I a Rn min problem})
\begin{align*}
I^{\TFW}_{a}(m_{R_{n}}) = \inf \left\{ \, E^{\TFW}_{a}(v,m_{R_{n}})  \, \bigg| \, \nabla v \in L^{2}(\R), v \in L^{10/3}(\R), v \geq 0 \, \right \}
\end{align*}
where
\begin{align*}
E^{\TFW}_{a} (v,m_{R_{n}}) &= \int_{\R} |\nabla v|^{2} + \int_{\R} v^{10/3} + \frac{1}{2} D_{a}( m_{R_{n}} - v^{2}, m_{R_{n}} - v^{2}). 
%\label{eq:E-TFW-a-energy-def}
\end{align*}
A family of test functions $\varphi_{R_{n}}$ is now constructed to satisfy: for large $R_{n}$ 
\begin{align}
I^{\TFW}_{a}(m_{R_{n}}) \leq E^{\TFW}_{a}(\varphi_{R_{n}},m_{R_{n}}) < E^{\TFW}_{a}(0,m_{R_{n}}) = \frac{1}{2} D_{a}( m_{R_{n}} , m_{R_{n}}). \label{eq: phi Rn construction}
\end{align}
It follows from (\ref{eq: phi Rn construction}) that
\begin{align}
I^{\TFW}_{a}(m_{R_{n}}) = E^{\TFW}_{a}(u_{a,R_{n}},m_{R_{n}}) < E^{\TFW}_{a}(0,m_{R_{n}}), \label{eq:uaRn-all-a-0-comparison-est}
\end{align}
which implies that $u_{a,R_{n}} \not\equiv 0$, hence by the Harnack inequality $u_{a,R_{n}} > 0$ on $\R$ \cite{Gilbarg/Trudinger}, hence (\ref{eq: u pos all a}) holds.

Let $\psi_{R_{n}} \in C^{\infty}_{c}(B_{4R_{n}}(0))$ satisfy $\psi_{R_{n}} \geq 0$ and $\psi_{R_{n}} = 1$ on $B_{2R_{n}}(0)$. Then let $\varepsilon > 0$ and consider the difference
\begin{align}
E^{\TFW}_{a} &(\varepsilon \psi_{R_{n}},m_{R_{n}}) - E^{\TFW}_{a} (0,m_{R_{n}}) \nonumber \\ &= \varepsilon^{2} \left( \int |\nabla \psi_{R_{n}}|^{2} - D_{a}( m_{R_{n}}, \psi_{R_{n}}^{2}) \right) + \frac{\varepsilon^{4}}{2} D_{a}( \psi_{R_{n}}^{2}, \psi_{R_{n}}^{2}) + \varepsilon^{10/3} \int \psi_{R_{n}}^{10/3}. \label{eq:E-TFW-psiRn-energy-diff-eq}
\end{align}
Applying (\ref{eq: TL2}) of \Cref{Lemma - Technical Lemma 2}, there exists $R_{0} > 0$ such that for any $R_{n} \geq R_{0}$
\begin{align}
\int_{\R} |\nabla \psi_{R_{n}}|^{2} - D_{a}(m_{R_{n}}, \psi_{R_{n}}^{2}) \leq - C_{0} R_{n}^{3}. \label{eq: TL2 repeat est}
\end{align}

The remaining terms in (\ref{eq:E-TFW-psiRn-energy-diff-eq}) can be estimated for $0 < \varepsilon \leq 1$, using Young's inequality for convolutions and Cauchy-Schwarz, by
\begin{align}
\frac{\varepsilon^{4}}{2} D_{a}( \psi_{R_{n}}^{2}, \psi_{R_{n}}^{2}) + \varepsilon^{4} \int \psi_{R_{n}}^{10/3} &\leq \frac{\varepsilon^{4}}{2} D_{a}( \chi_{B_{2R_{n}}(0)}, \chi_{B_{2R_{n}}(0)}) + \varepsilon^{4} \int_{B_{2R_{n}}(0)} 1 \nonumber \\
&\leq \left( \frac{1}{2} \| Y_{a} \|_{L^{1}(\R)} \| \chi_{B_{2R_{n}}(0)} \|_{L^{2}(\R)}^{2} + \| \chi_{B_{2R_{n}}(0)} \|_{L^{1}(\R)} \right) \varepsilon^{4} \nonumber \\
&\leq C( 1 + a^{-2}) R_{n}^{3} \varepsilon^{4} =: C_{3} \varepsilon^{4} R_{n}^{3}. \label{eq: all a pos est}
\end{align}
Combining the estimates (\ref{eq: TL2 repeat est})--(\ref{eq: all a pos est}) and choosing $0 < \varepsilon \leq \varepsilon_{0} := \min\{1, (\smfrac{C_{0}}{2C_{3}})^{1/2}\}$ ensures that
\begin{align*}
E^{\TFW}_{a} &(\varepsilon \psi_{R_{n}},m_{R_{n}}) - E^{\TFW}_{a} (0,m_{R_{n}}) \leq \left( - C_{0}  + C_{3} \varepsilon^{2} \right) \varepsilon^{2} R_{n}^{3} < 0,
\end{align*}
hence the desired estimate (\ref{eq: phi Rn construction}) holds.
\end{proof}

\begin{proof}[Proof of \Cref{Proposition - Yukawa Regularity Est}]
First suppose that $\spt(m)$ is bounded, then by \Cref{Proposition - Finite Yukawa Regularity Est} there exists $a_{0} > 0$ such that for all $0 < a \leq a_{0}$ and sufficiently large $R_{n}$, $m = m_{R_{n}}$ and hence $(u_{a},\phi_{a}) = (u_{a,R_{n}},\phi_{a,R_{n}})$ solves (\ref{eq: u phi Yuk eq pair}) and satisfies the desired estimate (\ref{eq:Yuk-u-H4-phi-H2-unif-est}).

Now suppose $\spt(m)$ is unbounded, then the estimates (\ref{eq: uRn a H4 unif est})--(\ref{eq: phiRn a H2 unif est}) of \Cref{Proposition - Finite Yukawa Regularity Est} guarantee that for all $0 < a \leq a_{0}$ and $R_{n}$ sufficiently large, the sequences $u_{a,R_{n}}, \phi_{a,R_{n}}$ are bounded uniformly in $H^{2}_{\unif}(\R)$. Consequently, there exist $u_{a},  \phi_{a} \in H^{2}_{\unif}(\R) \cap L^{\infty}(\R)$ such that along a subsequence $u_{a,R_{n}}, \phi_{a,R_{n}}$ converges to $u_{a}, \phi_{a}$, weakly in $H^{2}(B_{R}(0))$, strongly in $H^{1}(B_{R}(0))$ for all $R>0$ and pointwise almost everywhere. It follows from the pointwise convergence that $u_{a} \geq 0$ and
\begin{align*}
\| u_{a} \|_{L^{\infty}(\R)} + \| \phi_{a} \|_{L^{\infty}(\R)} &\leq C(M).
\end{align*}
Passing to the limit of the equations (\ref{eq: u phi finite Yuk pair}) in distribution shows the limit $(u_{a},\phi_{a})$ solves
\begin{align*}
&- \Delta u_{a} + \frac{5}{3} u_{a}^{7/3} - \phi_{a} u_{a} = 0, \\
&- \Delta \phi_{a} + a^{2}\phi_{a} = 4 \pi ( m - u_{a}^{2} ).
\end{align*}
Following the argument used to prove (\ref{eq: uRn a H4 unif est})--(\ref{eq: phiRn a H2 unif est}) in this instance yields the desired estimate (\ref{eq:Yuk-u-H4-phi-H2-unif-est}) holds
\begin{equation*}
\| u_{a} \|_{H^{4}_{\unif}(\R)} + \| \phi_{a} \|_{H^{2}_{\unif}(\R)} \leq C(M). \qedhere
\end{equation*}
\end{proof}

\begin{proof}[Proof of \Cref{Proposition - Yukawa Regularity Est All a}]
This holds from applying \Cref{Proposition - Finite Yukawa Regularity Est All a} and following the proof of \Cref{Proposition - Yukawa Regularity Est} in the unbounded case verbatim.
\end{proof}

\begin{proposition}
\label{Proposition - Uniform Yuk inf u estimate}
There exists $a_{\rm c} = a_{\rm c}(M,\omega) > 0$ and $c_{a_{\rm c},M,\omega} > 0$ such that for all $m \in
\mathcal{M}_{L^{2}}(M,\omega)$ and $0 < a \leq a_{\rm c}$ the corresponding Yukawa ground state $(u_{a},\phi_{a})$ is unique and the electron density $u_{a}$ satisfies
\begin{align}
\inf_{x \in \R} u_{a}(x) &\geq c_{a_{\rm c},M,\omega} > 0. \label{eq:u-Yuk-inf-est}
\end{align}

\end{proposition}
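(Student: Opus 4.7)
The plan is to argue by contradiction for the lower bound and then deduce uniqueness from a stability estimate. \Cref{Proposition - Yukawa Regularity Est All a} applied with $a_{0} = 1$ gives, for every $m \in \mathcal{M}_{L^{2}}(M,\omega)$ and $0 < a \leq 1$, a Yukawa ground state $(u_{a},\phi_{a})$ with $u_{a} \geq 0$ and $\| u_{a} \|_{H^{4}_{\unif}(\R)} + \| \phi_{a} \|_{H^{2}_{\unif}(\R)} \leq C(M)$. Sobolev embedding then yields $\| \phi_{a} \|_{L^{\infty}(\R)} + \| \nabla \phi_{a} \|_{L^{\infty}(\R)} \leq K(M)$, uniformly in $a \in (0, 1]$.

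Suppose the claimed lower bound fails. Then there exist sequences $a_{n} \in (0,1]$, $m_{n} \in \mathcal{M}_{L^{2}}(M,\omega)$ and $x_{n} \in \R$ with $a_{n} \to a_{\infty} \in [0,1]$ and $u_{a_{n}}(x_{n}) \to 0$. Translating so that $x_{n} = 0$, the uniform regularity bounds combined with Rellich--Kondrachov compactness produce a subsequence along which $(u_{a_{n}}, \phi_{a_{n}})$ converges in $C^{2}_{\loc}(\R)$ to a limit $(u_{\infty}, \phi_{\infty})$, and $m_{n}(\cdot + x_{n}) \rightharpoonup m_{\infty}$ weakly in $L^{2}_{\loc}(\R)$. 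Testing weak convergence against $\chi_{B_{R}(y)}$ preserves the lower bound, so $\int_{B_{R}(y)} m_{\infty}(z) \id z \geq \omega_{0} R^{3} - \omega_{1}$ for every $y \in \R$ and $R > 0$. Passing to the distributional limit in (\ref{eq: u phi Yuk eq pair}), $(u_{\infty}, \phi_{\infty})$ solves the Yukawa TFW system with parameter $a_{\infty}$, while $u_{\infty} \geq 0$ and $u_{\infty}(0) = 0$.

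Since $u_{\infty}$ satisfies the linear equation $-\Delta u_{\infty} + \bigl( \smfrac{5}{3} u_{\infty}^{4/3} - \phi_{\infty} \bigr) u_{\infty} = 0$ with bounded coefficient, the strong minimum principle \cite{Gilbarg/Trudinger} forces $u_{\infty} \equiv 0$ on $\R$. Hence $-\Delta \phi_{\infty} + a_{\infty}^{2} \phi_{\infty} = 4\pi m_{\infty}$, with $\| \phi_{\infty} \|_{L^{\infty}} + \| \nabla \phi_{\infty} \|_{L^{\infty}} \leq K(M)$. Integrating over $B_{R}(0)$ and using the divergence theorem,
\begin{align*}
4\pi( \omega_{0} R^{3} - \omega_{1} ) \leq 4\pi \int_{B_{R}(0)} m_{\infty} \leq \int_{\partial B_{R}(0)} |\nabla \phi_{\infty}| + a_{\infty}^{2} \int_{B_{R}(0)} |\phi_{\infty}| \leq C K(M) R^{2} + \smfrac{4\pi}{3} a_{\infty}^{2} K(M) R^{3}.
\end{align*}
Dividing by $R^{3}$ and letting $R \to \infty$ gives $\omega_{0} \leq \tfrac{1}{3} a_{\infty}^{2} K(M)$. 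Choosing $a_{\rm c} = a_{\rm c}(M,\omega) \in (0,1]$ so that $a_{\rm c}^{2} K(M) < 3\omega_{0}$ yields the desired contradiction (since $a_{\infty} \leq a_{\rm c}$), establishing the uniform lower bound for $0 < a \leq a_{\rm c}$.

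For uniqueness, given two Yukawa ground states $(u_{a}^{(1)}, \phi_{a}^{(1)})$ and $(u_{a}^{(2)}, \phi_{a}^{(2)})$ corresponding to the same $m \in \mathcal{M}_{L^{2}}(M,\omega)$, both inherit the uniform lower bound just proved, and the argument underlying \Cref{Theorem - One inf pointwise stability estimate Yuk alt} with $m_{1} = m_{2} = m$ (so $R_{m} \equiv 0$) forces $u_{a}^{(1)} \equiv u_{a}^{(2)}$ and $\phi_{a}^{(1)} \equiv \phi_{a}^{(2)}$. The main obstacle I anticipate is justifying that the structural lower bound on $m$ survives weak convergence and arranging the quantitative constants so that the threshold $a_{\rm c}$ and the lower bound $c_{a_{\rm c},M,\omega}$ depend only on $M$ and $\omega$.
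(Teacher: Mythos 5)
Your overall strategy matches the paper's: pass to a thermodynamic limit along a sequence of counterexamples, use Harnack (or equivalently the strong minimum principle on the limit) to kill the electron density, and then derive a contradiction from the linear equation for the potential combined with the lower bound encoded in $\mathcal{M}_{L^{2}}(M,\omega)$. The paper's phrasing of the negation forces $a_{n} \downarrow 0$, so its limit equation is the Coulomb equation $-\Delta \tilde\phi = 4\pi\tilde m$ and the contradiction is immediate ($CR$ versus $\omega_{0}R^{3}$); your version keeps $a_{n} \to a_{\infty}$ general and therefore needs the quantitative threshold $a_{\rm c}^{2} K(M) < 3\omega_{0}$. That is a legitimate and arguably more transparent variant.

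However, there is a genuine error in the step producing $K(M)$: you assert that $\| \phi_{a} \|_{H^{2}_{\unif}(\R)} \leq C(M)$ gives $\| \nabla \phi_{a} \|_{L^{\infty}(\R)} \leq K(M)$ by Sobolev embedding. In dimension three, $H^{2} \hookrightarrow C^{0,1/2}$, which controls $\| \phi_{a} \|_{L^{\infty}}$, but $H^{2} \not\hookrightarrow W^{1,\infty}$ (one only has $\nabla\phi_{a} \in H^{1}_{\unif} \hookrightarrow L^{6}_{\unif}$). Since $m \in L^{2}_{\unif}$ gives no better than $H^{2}_{\unif}$ for $\phi_{a}$, the gradient is not uniformly bounded, and your divergence-theorem step $\int_{\partial B_{R}(0)} |\nabla\phi_{\infty}| \leq CK(M)R^{2}$ is not justified as written. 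The simplest fix is the one the paper uses: instead of integrating over $B_{R}(0)$ directly, test the equation $-\Delta\phi_{\infty} + a_{\infty}^{2}\phi_{\infty} = 4\pi m_{\infty}$ against $\varphi(\cdot/R)$ with a fixed cutoff $\varphi \in C^{\infty}_{\rm c}(B_{2}(0))$, $\varphi \equiv 1$ on $B_{1}(0)$; the Laplacian lands on $\varphi$, and only $\|\phi_{\infty}\|_{L^{\infty}}$ (which is available) is needed, yielding $4\pi(\omega_{0}R^{3}-\omega_{1}) \leq CR + Ca_{\infty}^{2}R^{3}$ and hence the same conclusion $\omega_{0} \leq C' a_{\infty}^{2}$. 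An alternative repair, if you wish to keep the boundary-integral flavour, is to average $\int_{\partial B_{r}}\nabla\phi_{\infty}\cdot n$ over $r\in[R,R+1]$ and use the $L^{2}_{\unif}$ bound on $\nabla\phi_{\infty}$ to select a good radius.
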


\begin{proof}[Proof of \Cref{Proposition - Uniform Yuk inf u estimate}]
The proof of \Cref{Proposition - Uniform Yuk inf u estimate} closely follows the proof of \cite[Proposition 6.2]{Paper1} and \cite[Theorem 6.10]{C/LB/L}. The estimate (\ref{eq:u-Yuk-inf-est}) is shown by contradiction, so suppose that for any $a_{\rm c} > 0$
\begin{align*}
\inf_{0 < a \leq a_{\rm c}} \inf_{m \in \mathcal{M}_{L^{2}}(M,\omega)} \inf_{x \in \R} u_{a}(x) = 0,
\end{align*}
hence there exists sequences $a_{n} \downarrow 0$ satisfying $a_{n} \leq a_{1}$ for all $n \in \mathbb{N}$, $(m_{n}) \subset \mathcal{M}_{L^{2}}(M,\omega)$ and $(x_{n}) \subset \R$ such that for all $n \in \mathbb{N}$ the ground state $(u_{n}, \phi_{n})$, corresponding to $m_{n}$ with Yukawa parameter $a_{n}$, satisfies
\begin{align}
u_{n}(x_{n}) \leq \frac{1}{n}. \label{eq: uk contr hyp est}
\end{align}
%We now argue as in \refer{refer to Paper 1 Proof of Uniform Inf Est Prop 3.2!}. 
As $\smfrac{5}{3} u_{n}^{4/3} - \phi_{n} u_{n} \in L^{2}_{\loc}(\R)$, $u_{n} \in H^{1}_{\unif}(\R)$ and $u_{n} > 0$ solves
\begin{align*}
L_{n}u_{n} := \left( - \Delta + \frac{5}{3} u_{n}^{4/3} - \phi_{n} \right) u_{n} = 0,
\end{align*}
applying the Harnack inequality \cite{Trudinger_MeasurableCoefficients}, and
observing that the coefficients of $L_{n}$ are uniformly estimated by \Cref{Proposition - Yukawa Regularity Est All a}, this yields a uniform Harnack
constant, hence for all $R > 0$, there exists $C = C(R,a_{1}, M) > 0$ such
that for all $n \in \mathbb{N}$
\begin{align*}
\sup_{x \in B_{R}(x_{n})} u_{n}(x) \leq C \inf_{x \in B_{R}(x_{n})} u_{n}(x) \leq \frac{C}{n}.
\end{align*}
It follows that the sequence of functions $u_{n}(\cdot + x_{n})$ converges
uniformly to zero on compact sets. Consider the ground state $(u_{n},\phi_{n})$
corresponding to the nuclear distribution $m_{n}$.

By the Harnack inequality, it follows that $u_{n}(\cdot + x_{n})$ converges uniformly to 0 on compact subsets. Recall that $\phi_{n}$ satisfies
\begin{align*}
- \Delta \phi_{n} + a_{n}^{2} \phi_{n} = 4 \pi ( m_{n} - u_{n}^{2} )
\end{align*}
in distribution. In addition, $\phi_{n}$ and $m_{n}$ satisfy
\begin{align*}
\| m_{n}(\cdot + x_{n}) \|_{L^{2}_{\unif}(\R)} +
\| \phi_{n}(\cdot + x_{n}) \|_{H^{2}_{\unif}(\R)} &\leq C(a_{1},M).
\end{align*}
It follows that along a subsequence $\phi_{n}(\cdot + x_{n})$ converges to $\wt \phi$, weakly in $H^{2}(B_{R}(0))$, strongly in $H^{1}(B_{R}(0))$ for all $R > 0$ and pointwise almost everywhere. Also, $m_{n}(\cdot + x_{n})$ converges to $\wt m$, weakly in $L^{2}(B_{R}(0))$ for all $R > 0$. By the  Lebesgue-Besicovitch Differentiation Theorem \cite{Evans/Gariepy}, $\wt m \in \mathcal{M}_{L^{2}}(M,\omega)$. As $a_{n} \downarrow 0$, passing to the limit of
\begin{align*}
- \Delta \phi_{n}(\cdot + x_{n}) + a_{n}^{2} \phi_{n}(\cdot + x_{n}) = 4 \pi \left( m_{n}(\cdot + x_{n}) - u_{n}^{2}(\cdot + x_{n}) \right)
\end{align*}
shows that $\wt \phi$ is a distributional solution of
\begin{align}
- \Delta \wt \phi = 4\pi \wt m. \label{eq: phi bar H2 Yuk eq}
\end{align}
The argument of \cite[Theorem 6.10]{C/LB/L} is now used to show that for all $R > 0$
\begin{align}
\int_{B_{R}(0)} \wt m(z) \id z \leq C R. \label{eq: m bar BR contradiction est}
\end{align}
As $\wt m \in \mathcal{M}_{L^{2}}(M,\omega)$, this leads to the contradiction that for all $R > 0$
\begin{align*}
\omega_{0} R^{3} - \omega_{1} \leq \int_{B_{R}(0)} \wt m(z) \id z \leq C R.
\end{align*}
To show (\ref{eq: m bar BR contradiction est}) choose $\varphi \in C^{\infty}_{\textnormal{c}}(B_{2}(0))$ such that $0 \leq \varphi \leq 1$ and $\varphi = 1$ on $B_{1}(0)$. Let $R>0$, then testing (\ref{eq: phi bar H2 Yuk eq}) with $\varphi(\cdot/R)$ gives
\begin{align}
- \frac{1}{R^{2}} \int_{B_{2R}(0)} \wt \phi(z) (\Delta \varphi) (z/R) \id z = 4 \pi \int_{B_{2R}(0)} \wt m(z) \varphi(z/R) \id z. \label{eq: overline m contr est}
\end{align}
The left-hand side can be estimated by
\begin{align}
\frac{1}{R^{2}} \bigg | \int_{B_{2R}(0)} \wt \phi(z) (\Delta \varphi) (z/R) \id z \bigg | \leq \| \wt \phi \|_{L^{\infty}(\R)} \| \Delta \varphi \|_{L^{\infty}} \frac{|B_{2R}(0)|}{R^{2}} \leq C R, \label{eq: wt m R est}
\end{align}
where the constant $C > 0$ is independent of $R$. As $\wt m \geq 0$, combining (\ref{eq: overline m contr est})--(\ref{eq: wt m R est}) yields (\ref{eq: m bar BR contradiction est})
\begin{align*}
\int_{B_{R}(0)} \wt m(z) \id z \leq \int_{B_{2R}(0)} \wt m(z) \varphi(z/R) \id z \leq C R.
\end{align*}
The contradiction ensures that there exists $a_{\rm c} > 0$ and $c_{a_{\rm c},M,\omega} > 0$ such that for all $m \in \mathcal{M}_{L^{2}}(M,\omega)$ and $0 < a \leq a_{\rm c}$, the corresponding Yukawa electron density $u_{a}$ satisfies
\begin{equation*}
\inf_{x \in \R} u_{a}(x) \geq c_{a_{\rm c},M,\omega} > 0. \qedhere
\end{equation*}
Consequently, for $0 < a \leq a_{\rm c}$, the electron density satisfies $\inf u_{a} > 0$, hence the arguments of \cite[Chapter 6]{C/LB/L} can be applied verbatim to guarantee the uniqueness of the ground state $(u_{a},\phi_{a})$.
\end{proof}

\begin{remark}
\Cref{Theorem - Yukawa Coulomb Comparison} provides an additional proof of \Cref{Proposition - Uniform Yuk inf u estimate}. Let $a_{0} > 0$ and $m \in \mathcal{M}_{L^{2}}(M,\omega),$ then for any $0 < a \leq a_{0}$, \cite[Propositions 3.1 and 3.2]{Paper1} and \Cref{Proposition - Yukawa Regularity Est All a} guarantees that there exist corresponding Coulomb and Yukawa ground states $(u,\phi)$, $(u_{a},\phi_{a})$, respectively satisfying $\inf u \geq c_{M,\omega} > 0$ and $u_{a} \geq 0$. Then applying (\ref{eq: Yukawa Coulomb pointwise k pos est}) of \Cref{Theorem - Yukawa Coulomb Comparison} implies
\begin{align*}
u_{a}(x) \geq u(x) - \| u_{a} - u \|_{L^{\infty}(\R)} \geq c_{M,\omega} - C' a^{2},
\end{align*}
hence for all $0 < a \leq a_{\rm c} := \min\{ a_{0}, ( \smfrac{c_{M,\omega}}{2C'} )^{1/2} \}$
\begin{equation*}
\inf_{x \in \R} u_{a}(x) \geq c_{M,\omega} - C' a^{2} \geq \frac{1}{2} c_{M,\omega} > 0. \qedhere 
% \label{eq: inf u Yuk alt proof}
\end{equation*} \qed

\end{remark}

%We remark that both \Cref{Proposition - Uniform Yuk inf u estimate} and the estimate (\ref{eq: inf u Yuk alt proof}) above state the existence of the constant $a_{\rm c} > 0$, \emph{without} providing an explicit estimate. It is possible to estimate $a_{\rm c}$ by adapting \cite[Lemma 4.14 and Remark 4.16]{C/LB/L}, however due to the length of the argument, we postpone it's proof to the Appendix. 
The proof of \Cref{Proposition - Uniform Yuk inf u estimate all a} requires the following result, which extends the lower bound on $u_{a}$ from $0 < a \leq a_{\rm c}$ to arbitrary $a > 0$.
\begin{restatable}{proposition}{Proposition}
\label{Proposition - Uniform Yuk inf u estimate remainder}
Let $a_{0} > a_{\rm c} > 0$ and $m \in \mathcal{M}_{L^{2}}(M,\omega)$, then for all $0 < a \leq a_{0}$ the corresponding Yukawa ground state $(u_{a},\phi_{a})$ is unique and there exists $c_{a_{0},M,\omega} > 0$ such that the electron density $u_{a}$ satisfies
\begin{align}
\inf_{x \in \R} u_{a}(x) &\geq c_{a_{0},M,\omega} > 0. \label{eq:u-Yuk-inf-est-remainder}
\end{align}
\end{restatable}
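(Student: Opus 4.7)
The plan is to argue by contradiction, adapting the scheme of \Cref{Proposition - Uniform Yuk inf u estimate} to the regime $a^{*} \in (a_{\rm c}, a_0]$, where the Poisson--test contradiction of the previous proposition fails: in this range the limit equation $-\Delta \tilde\phi + (a^{*})^{2}\tilde\phi = 4\pi\tilde m$ only forces $\int_{B_R}\tilde m \lesssim R^{3}$, which is compatible with $\tilde m \in \mathcal{M}_{L^{2}}(M,\omega)$. By translation invariance of $\mathcal{M}_{L^{2}}(M,\omega)$ together with \Cref{Proposition - Uniform Yuk inf u estimate}, it suffices to construct sequences $a_n \in (a_{\rm c}/2, a_0]$, $m_n \in \mathcal{M}_{L^{2}}(M,\omega)$ with $a_n \to a^{*} \in (a_{\rm c}, a_0]$ and $u_n(0) \to 0$, writing $(u_n, \phi_n)$ for the Yukawa ground state of $(a_n, m_n)$, and derive a contradiction. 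The uniform $H^{4}_{\unif}, H^{2}_{\unif}$ estimates from \Cref{Proposition - Yukawa Regularity Est All a}, together with Harnack's inequality applied with a uniform constant (the coefficients of $-\Delta + \smfrac{5}{3}u_n^{4/3} - \phi_n$ being uniformly bounded), yield $u_n \to 0$ locally uniformly. Passing to a further subsequence one obtains $m_n \rightharpoonup \tilde m \in \mathcal{M}_{L^{2}}(M,\omega)$ in $L^{2}_{\loc}(\R)$ and $\phi_n \to \tilde\phi = Y_{a^{*}} * \tilde m$ locally uniformly, solving the Yukawa Poisson equation above.

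The new ingredient is a second-order variational inequality. For each $n$, \Cref{Proposition - Finite Yukawa Regularity Est All a} supplies strictly positive finite-volume minimisers $u_{n, R}$ of $E^{\TFW}_{a_n}(\cdot, m_n\chi_{B_R})$; for any $\varphi \in C^{\infty}_{\rm c}(\R)$ and $|t|$ small, $u_{n, R} + t\varphi$ is admissible, so the second-order Taylor expansion at $u_{n,R}$ gives
\begin{align*}
\int_{\R}|\nabla\varphi|^{2} + \smfrac{35}{9}\int_{\R} u_{n, R}^{4/3}\varphi^{2} - \int_{\R}\phi_{n, R}\varphi^{2} + 2 D_{a_n}(u_{n, R}\varphi, u_{n, R}\varphi) \geq 0.
\end{align*}
I would then pass $R \to \infty$ using locally uniform convergence $(u_{n,R}, \phi_{n,R}) \to (u_n, \phi_n)$ on $\spt\varphi$ (from the uniform $H^{4}_{\unif}, H^{2}_{\unif}$ bounds and Rellich--Kondrachov), and afterwards $n \to \infty$; both $u$-dependent terms vanish thanks to $u_n \to 0$ uniformly on $\spt\varphi$ and $\|Y_{a_n}\|_{L^{1}} \leq 16\pi\,a_{\rm c}^{-2}$, producing
\begin{align*}
\int_{\R}|\nabla\varphi|^{2} \geq \int_{\R} \tilde\phi\, \varphi^{2} \qquad \forall\, \varphi \in C^{\infty}_{\rm c}(\R),
\end{align*}
that is, the quadratic form associated to $-\Delta - \tilde\phi$ is non-negative on $C^{\infty}_{\rm c}(\R)$.

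To close the argument, observe that $\tilde m \in \mathcal{M}_{L^{2}}(M,\omega)$ gives $\int_{B_{R_{*}}(x)}\tilde m \geq \omega_{1}$ for every $x \in \R$ with $R_{*} = (2\omega_{1}/\omega_{0})^{1/3}$; since $Y_{a^{*}}$ is radially decreasing, this yields
\begin{align*}
\tilde\phi(x) = \int_{\R} Y_{a^{*}}(x - y)\,\tilde m(y)\,\id y \geq \tfrac{e^{-a^{*} R_{*}}}{R_{*}}\, \omega_{1} =: c_{*} > 0.
\end{align*}
Testing the form inequality against the rescaling $\varphi_R(x) = \psi(x/R)$ of a fixed nonzero $\psi \in C^{\infty}_{\rm c}(B_{1}(0))$ produces
\begin{align*}
0 \leq \int|\nabla\varphi_R|^{2} - \int \tilde\phi\,\varphi_R^{2} \leq R^{3}\big( R^{-2}\|\nabla\psi\|_{L^{2}}^{2} - c_{*}\|\psi\|_{L^{2}}^{2}\big),
\end{align*}
which is strictly negative for $R$ sufficiently large, the desired contradiction. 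Uniqueness of $(u_a, \phi_a)$ then follows, as in the previous proposition, from the arguments of \cite[Chapter 6]{C/LB/L} once $\inf u_a > 0$ is known.

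The main obstacle is legitimising the second variation inequality for the infinite-volume object $(u_n, \phi_n)$: the TFW Yukawa energy is not finite in infinite volume, so second-order optimality cannot be invoked directly. The remedy is to establish the inequality at finite $R$, where $u_{n, R}$ is a genuine minimiser in the sense of \textnormal{(\ref{eq: I a Rn min problem})}, and then to pass $R \to \infty$ exploiting the compact support of $\varphi$ and the uniform $H^{2}$ regularity, before extracting the $n \to \infty$ limit.
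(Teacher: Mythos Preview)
Your argument is correct and takes a genuinely different route from the paper's.

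The paper (via Lemma~\ref{Lemma - Yukawa Finite Lower Bound}) works directly with the contradicting sequence $u_k \to 0$: it introduces a local energy $E(v;k,R)$ on $B_R$ with the Dirichlet constraint $v|_{\partial B_R}=u_k$, engineered so that $u_k$ is its unique minimiser; since $u_k\to 0$ one has $E(u_k;k,R)\to 0$, and the contradiction comes from an explicit competitor $\varphi_{\varepsilon,k}=\varepsilon\psi_{R'}+\xi u_k$ (built from the test function of Lemma~\ref{Lemma - Technical Lemma 2}) whose energy is $\le -C_1<0$ uniformly in $k$. Your approach instead transports the second-order minimality inequality from the finite-volume problem through the thermodynamic limit to obtain the quadratic-form bound $\int|\nabla\varphi|^2\ge\int\tilde\phi\,\varphi^2$, and then contradicts it via the direct pointwise estimate $\tilde\phi=Y_{a^*}*\tilde m\ge c_*>0$ together with the scaling test $\varphi_R(x)=\psi(x/R)$.

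Both arguments ultimately encode that the effective nuclear potential is too attractive for $u\equiv 0$ to be second-order stable. Your route is more economical: it bypasses the bespoke local functional and competitor construction, replacing Lemma~\ref{Lemma - Technical Lemma 2} by the elementary bound $Y_{a^*}*\tilde m\ge (e^{-a^*R_*}/R_*)\,\omega_1$. The paper's route has the compensating advantage that it works entirely with the given infinite-volume $u_k$, without needing to track that the ground state arises as a subsequential limit of finite-volume minimisers (which is what legitimises your passage of the second variation to $R\to\infty$; since uniqueness is part of what is being proved, you do need this provenance). One small remark: the limit $a^*$ of your sequence $a_n\in(a_{\rm c},a_0]$ could a priori equal $a_{\rm c}$, but this is harmless since your lower bound on $\tilde\phi$ and the scaling contradiction only require $a^*>0$.
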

Due to the length of the argument, the proof of \Cref{Proposition - Uniform Yuk inf u estimate remainder} is postponed to the Appendix, which can be found on Page \pageref{Appendix}.

\begin{proof}[Proof of \Cref{Proposition - Uniform Yuk inf u estimate all a}]
Combining \Cref{Proposition - Uniform Yuk inf u estimate} and \Cref{Proposition - Uniform Yuk inf u estimate remainder} yields the desired result.
\end{proof}

\begin{proof}[Proof of \Cref{Corollary - General Est Ck Yuk version}]
This is identical to the proof of \cite[Corollary 6.3]{Paper1}, using the estimates (\ref{eq: uRn a H4 unif est})-(\ref{eq: phiRn a H2 unif est}) to provide the initial regularity.
\end{proof}

\subsection{Proof of main results}
\label{Subsection - Proofs of Pointwise Stability Estimates}
The proofs of Theorems \ref{Theorem - Yukawa Coulomb Comparison}, \ref{Theorem - One inf pointwise stability estimate Yuk alt} and \ref{Theorem - Exponential Est Integral Yuk RHS} closely follow the proofs of \cite[Theorems 3.4 and 3.5]{Paper1}, which adapts the uniqueness of the TFW equations \cite{C/LB/L, Blanc_Uniqueness}.

First, two alternative sets of assumptions on nuclear
distributions $m_1, m_2$ are given. In the following, $(u_{0},\phi_{0})$ denotes the corresponding Coulomb ground state solving (\ref{eq:u-phi-eq-pair}), i.e the ground state with Yukawa parameter $a = 0$.
\begin{itemize}
\item[(A)] 
Let $k = 0$, $m_{1} \in \mathcal{M}_{L^{2}}(M,\omega)$, $m_{2} : \R \to \mathbb{R}_{\geq 0}$ satisfy 
\begin{align*}
\| m_{2} \|_{L^{2}_{\unif}(\R)} \leq M',
\end{align*}
then by \Cref{Proposition - Yukawa Regularity Est} there exist $a' = a'(\omega,m_{2}) > 0$ such that for all $0 \leq a_{1} \leq a_{2} \leq a'$ there exists $(u_{1},\phi_{1}) = (u_{1,a_{1}},\phi_{1,a_{1}})$ $(u_{2},\phi_{2}) = (u_{2,a_{2}},\phi_{2,a_{2}})$
solving either (\ref{eq:u-phi-eq-pair}) or (\ref{eq: u phi Yuk eq pair}) corresponding to $m_{2}$, satisfying $\inf u_{1} > 0$, $u_{2} \geq 0$ and
 \begin{align}
 \| u_{2} \|_{H^{4}_{\unif}(\R)} &+ \| \phi_{2} \|_{H^{2}_{\unif}(\R)} \leq C(M'). \label{eq: u2 phi2 reg A* est}
 \end{align} 
 In addition, assume either $m_{2} \not\equiv 0$ and $u_{2} > 0$ or $m_{2} = u_{2} = \phi_{2} = 0$. 

Observe that (A) assumes that $u_{2} > 0$, while Theorems \ref{Theorem - Yukawa Coulomb Comparison} (with $k = 0$) and \ref{Theorem - One inf pointwise stability estimate Yuk alt} only require either $u_{a} \geq 0$ or $u_{2,a} \geq 0$. The
restriction $u_{2} > 0$ will be lifted via a thermodynamic limit argument in the third part of its proof on page \pageref{proof-case3}.

\item[(B)] Let $a_{0} > 0$, $k \in \mathbb{N}_{0}$,
$m_{1}, m_{2} \in \mathcal{M}_{H^{k}}(M, \omega)$, $0 \leq a_{1} \leq a_{2} \leq a_{0}$ and let
$(u_{1},\phi_{1}) = (u_{1,a_{1}},\phi_{1,a_{1}}), (u_{2},\phi_{2}) = (u_{2,a_{2}},\phi_{2,a_{2}})$ denote the corresponding ground states.
(Note that (B) implies (A), with $a' = a_{0}$ and $M' = C(a_{0},M)$.)
\end{itemize}
% [I'm stating the following results in full here, so that I can insert them directly into the following lemmas. Also, I'm specifying the values for $k$ here, as opposed to in the lemmas, I hope this is clear.]

In addition, for both (A) and (B), define
\begin{align*}
w = u_{1} - u_{2}, \quad \psi = \phi_{1} - \phi_{2},
\end{align*} 
and suppose that there exists $R \in H^{k'}_{\unif}(\R)$, where $k' \in \{k,k+2\}$, such that $(w,\psi)$ solves
\begin{subequations}
\label{eq: w psi R1 R2 pair eq}
\begin{align}
&- \Delta w + \frac{5}{3} \left( {u_{1}}^{7/3} - u_{2}^{7/3} \right) - \phi_{1} u_{1} + \phi_{2} u_{2} = 0, \label{eq: w Yuk eq} \\
&- \Delta \psi + a_{1}^{2} \psi = 4 \pi \left( u_{2}^{2} - u_{1}^{2} \right) + R. \label{eq: psi R Yuk eq}
\end{align}
\end{subequations}

\begin{lemma}
\label{Lemma - Exp Est Yuk Main Est}
Suppose that either \textnormal{(A)} or \textnormal{(B)} holds, then there exist $C = C_{A}(M,M',\omega)$, $\gamma = \gamma_{A}( M,M',\omega) > 0$ or $C = C_{B}(a_{0},k', M,\omega), \gamma = \gamma_{B}(a_{0}, M,\omega) > 0$, independent of both $a_{1},a_{2}$, such that for any $\xi \in H_{\gamma}$
\begin{align}
\int_{\R} \bigg( \sum_{|\alpha_{1}| \leq k+4} |\partial^{\alpha_{1}} w|^{2} &+ \sum_{|\alpha_{2}| \leq k'+2} |\partial^{\alpha_{2}} \psi|^{2} \bigg) \xi^{2} \leq C \int_{\R} \sum_{|\beta| \leq k'} |\partial^{\beta} R|^{2} \xi^{2}. \label{eq: w and psi R Yuk lemma est}
\end{align}
In particular, for any $y \in \R$,
\begin{align}
\sum_{|\alpha_{1}| \leq k+2} &|\partial^{\alpha_{1}} w(y)|^{2} + \sum_{|\alpha_{2}| \leq k'} |\partial^{\alpha_{2}} \psi(y)|^{2} 
\leq C \int_{\R} \sum_{|\beta| \leq k'} |\partial^{\beta} R(x)|^{2} e^{-2\gamma |x - y|} \id x. \label{eq: w and psi pointwise rhs exp integral final lemma est}
\end{align}
Further, 
%if $a_{1} = 0$, then $C = C_{A}(M,M',\omega), \gamma = \gamma_{A}(M,M',\omega)$ and 
if both $a_{1} = a_{2} = 0$, then $C = C_{B}(k',M,\omega), \gamma = \gamma_{B}(M,\omega)$.
\end{lemma}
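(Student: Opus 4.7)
The plan is to adapt the energy-method proof of \cite[Theorems 3.4 and 3.5]{Paper1} to the Yukawa setting, exploiting the fact that the extra term $a_1^2 \psi^2 \xi^2$ that arises from the Yukawa operator carries a good sign and so can be discarded without cost. First I would linearise the nonlinear couplings via the identities
\[
\phi_1 u_1 - \phi_2 u_2 = u_1 \psi + \phi_2 w, \qquad u_1^{7/3} - u_2^{7/3} = F(u_1,u_2)\, w, \qquad u_1^2 - u_2^2 = (u_1+u_2)\, w,
\]
where $F(u_1,u_2) = \tfrac{7}{3}\int_0^1 (t u_1 + (1-t) u_2)^{4/3}\, dt$. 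The system \eqref{eq: w psi R1 R2 pair eq} then reads
\[
-\Delta w + V w = u_1 \psi, \qquad -\Delta \psi + a_1^2 \psi + 4\pi (u_1+u_2) w = R, \qquad V := \tfrac{5}{3}F - \phi_2.
\]

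The base case $k=0$ would follow by testing the first equation against $w\xi^2$ and the second against $(4\pi)^{-1}\psi\xi^2$ and adding. The coupling term $\int u_1 w \psi \xi^2$ partly cancels, leaving $\int u_2 w\psi \xi^2$ to be split by Cauchy--Schwarz. Cross terms of the form $\int w\xi \nabla w \cdot \nabla\xi$ and $\int \psi \xi \nabla \psi \cdot \nabla\xi$ are handled using $|\nabla\xi|\leq \gamma|\xi|$ and the weighted Young inequality, being partly absorbed by the gradient terms and partly producing a $\gamma^2 \int (w^2+\psi^2)\xi^2$ remainder. The central coercivity statement I would establish is: after absorption,
\[
\int \bigl(|\nabla w|^2 + w^2\bigr)\xi^2 + \int \bigl(|\nabla\psi|^2 + \psi^2\bigr)\xi^2 \leq C \int R \psi\, \xi^2,
\]
whose right-hand side is then dominated by $C\int R^2 \xi^2 + \tfrac{1}{2}\int \psi^2 \xi^2$. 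The coercivity relies on the non-negativity of the linearised TFW operator $L_1 = -\Delta + \tfrac{5}{3}u_1^{4/3} - \phi_1$ (a consequence of $u_1$ being a minimiser), the uniform lower bound $\inf u_1 \geq c>0$ supplied by \Cref{Proposition - Uniform Yuk inf u estimate all a} in case (B) and by hypothesis in case (A), and the uniform $L^\infty$ bounds on $u_i, \phi_i$ from \Cref{Proposition - Yukawa Regularity Est All a}. The Yukawa term $a_1^2 \int \psi^2\xi^2$ has the correct sign and is simply dropped, which is precisely what enables the constants to be independent of $a_1, a_2$.

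To reach \eqref{eq: w and psi R Yuk lemma est} for general $k$ I would bootstrap: differentiate both equations by $\partial^\alpha$ for $|\alpha|\leq k'$, test the resulting system with $(\partial^\alpha w)\xi^2$ and $(\partial^\alpha \psi)\xi^2$, and control the commutators with $V$, $u_1$ and $(u_1+u_2)$ using \Cref{Corollary - General Est Ck Yuk version} together with the lower-order estimates already obtained. Standard interior elliptic regularity applied to the two equations themselves then lifts the derivative count to $|\alpha_1|\leq k+4$ for $w$ and $|\alpha_2|\leq k'+2$ for $\psi$. The pointwise bound \eqref{eq: w and psi pointwise rhs exp integral final lemma est} follows by fixing $y\in\R$, taking $\xi(x) = e^{-\gamma|x-y|}$ (which belongs to $H_\gamma$), and combining \eqref{eq: w and psi R Yuk lemma est} with Morrey's inequality on $B_1(y)$.

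The main obstacle will be tracking that both $C$ and $\gamma$ can be chosen independently of $a_1, a_2\in[0,a_0]$; this forces every intermediate constant, particularly the coercivity constant and the Harnack-type lower bound on $u_i$, to be traced back to the uniform-in-$a$ results \Cref{Proposition - Yukawa Regularity Est All a}, \Cref{Proposition - Uniform Yuk inf u estimate all a} and \Cref{Corollary - General Est Ck Yuk version}. The final claim that in the fully Coulomb case $a_1=a_2=0$ the constants reduce to $C_B(k',M,\omega)$, $\gamma_B(M,\omega)$ is automatic, since setting $a_1=a_2=0$ removes the only terms where $a_0$ could have entered.
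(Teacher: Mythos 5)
Your outline gets several things right: the $a_1^2 \int \psi^2\xi^2$ term is dropped because it has a good sign, the constants are traced back to \Cref{Proposition - Yukawa Regularity Est All a}, \Cref{Proposition - Uniform Yuk inf u estimate all a} and \Cref{Corollary - General Est Ck Yuk version}, higher $k$ is reached by bootstrapping, and the pointwise bound comes from $\xi(x)=e^{-\gamma|x-y|}$. However, your core energy estimate has a genuine gap, for two related reasons.

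First, your asymmetric decomposition $\phi_1 u_1 - \phi_2 u_2 = u_1\psi + \phi_2 w$ creates a problem that the paper's symmetric decomposition avoids. Testing the $w$-equation with $w\xi^2$ and the $\psi$-equation with $(4\pi)^{-1}\psi\xi^2$ and adding, the $\int u_1 w\psi\xi^2$ cancels but leaves $\int u_2 w\psi\xi^2$ on the left with indefinite sign. After Cauchy--Schwarz this produces $\epsilon\int\psi^2\xi^2$, and there is nothing on the left-hand side to absorb it: the only $\psi$-terms available are $a_1^2\int\psi^2\xi^2$ (whose prefactor vanishes as $a_1\to 0$, exactly the regime in which the lemma must hold uniformly) and $\int|\nabla\psi|^2\xi^2$, which carries no zeroth-order coercivity. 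The paper instead writes $\phi_1 u_1 - \phi_2 u_2 = \tfrac12(u_1+u_2)\psi + \tfrac12(\phi_1+\phi_2)w$ and works with the symmetrized non-negative operator $L = -\Delta + \tfrac56(u_1^{4/3}+u_2^{4/3}) - \tfrac12(\phi_1+\phi_2)$; with the test weight $\tfrac{1}{8\pi}\psi\xi^2$ on the $\psi$-equation, the coupling $\tfrac12\int(u_1^2-u_2^2)\psi\xi^2$ cancels \emph{exactly}, leaving no residual cross-term. Note also that your appeal to non-negativity of $L_1 = -\Delta + \tfrac53 u_1^{4/3} - \phi_1$ does not match the coefficient $\tfrac53 F - \phi_2$ your decomposition produces, since the discrepancy $\tfrac53(F-u_1^{4/3}) + \psi$ involves $\psi = \phi_1-\phi_2$, which has indefinite sign.

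Second, and more fundamentally, you state without justification that "after absorption" one obtains $\int(|\nabla\psi|^2 + \psi^2)\xi^2 \leq C\int R\psi\xi^2$. The presence of $\int\psi^2\xi^2$ on the left is precisely the subtle point: the paper explicitly flags that the naive argument applied to the Yukawa $\psi$-equation produces only $a_1^2\int\psi^2\xi^2 \leq C(\int R\psi\xi^2 + \int(w^2+\psi^2)|\nabla\xi|^2)$, whose left-hand constant degenerates as $a_1\to 0$, and that the $a_1$-independent coercivity for $\int\psi^2\xi^2$ must instead be imported from the Coulomb-case argument of \cite[Lemma 6.7]{Paper1}, which is structurally different from a direct absorption. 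Your sketch skips this entirely, and your extra residual $\int u_2 w\psi\xi^2$ makes that missing coercivity even more indispensable.
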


One of the key steps in proving \Cref{Lemma - Exp Est Yuk Main Est} is showing
\begin{align}
\int_{\R} \psi^{2} \xi^{2} \leq C \left( \int_{\R} R \psi \xi^{2} + \int_{\R} ( w^{2} + \psi^{2} ) |\nabla \xi|^{2} \right), \label{eq: psi integral est}
\end{align}
where the constant $C$ is independent of $a_{1}, a_{2}$. However, due to the presence of the additional term in (\ref{eq: psi R Yuk eq}), the argument in \cite[Lemma 6.4]{Paper1} directly yields
\begin{align}
a_{1}^{2} \int_{\R} \psi^{2} \xi^{2} \leq C \left(  \int_{\R} R \psi \xi^{2} + \int_{\R} ( w^{2} + \psi^{2} ) |\nabla \xi|^{2} \right), \label{eq: psi integral est a}
\end{align}
where the left-hand constant tends to $0$ as $a_{1} \to 0$. Instead, (\ref{eq: psi integral est}) is obtained by closely following the proof in the Coulomb setting.

In the following proof, all integrals are taken over $\R$.

\begin{proof}[Proof of \Cref{Lemma - Exp Est Yuk Main Est}]
The argument closely follows the proof of \cite[Lemma 6.7]{Paper1}. This proof describes the key steps of the argument and additional details are provided in \cite{Paper1}.

\emph{Case 1.} Suppose (B) holds, so $m_{1}, m_{2} \in \mathcal{M}_{H^{k}}(M,\omega)$, so by \Cref{Corollary - General Est Ck Yuk version} (or \cite[Corollary 3.3]{Paper1} if either $a_{i} = 0$) for $i \in \{1,2\}$
\begin{align*}
\| u_{i} \|_{H^{k+4}_{\unif}(\R)} + \| \phi_{i} \|_{H^{k+2}_{\unif}(\R)} \leq C(a_{0},k,M,\omega)
\end{align*}
and by \Cref{Proposition - Uniform Yuk inf u estimate} $\inf u_{1}, \inf u_{2} \geq c_{a_{\rm c},M,\omega} > 0$ (if for $i \in \{1,2\}$ $a_{i} = 0$ then by \cite[Proposition 3.2]{Paper1} $\inf u_{i} \geq c_{M,\omega} > 0$). Let $\xi \in H^{1}(\R)$, then testing (\ref{eq: w Yuk eq}) with $w \xi^{2}$ and re-arranging yields
\begin{align}
\int |\nabla (w\xi)|^{2} &+ {\frac{5}{6}} \int ( u_{1}^{4/3} + u_{2}^{4/3}) w^{2} \xi^{2} - {\frac{1}{2}} \int (\phi_{1} + \phi_{2}) w^{2} \xi^{2} + \nu \int w^{2} \xi^{2} \nonumber \\ 
&\leq \int w^{2} |\nabla \xi|^{2} + {\frac{1}{2}} \int \psi ( u_{1}^{2} - u_{2}^{2} ) \xi^{2}, \label{eq: before Yuk l}
\end{align}
where $\nu = \smfrac{1}{2} (u_{1}^{4/3} + u_{2}^{4/3}) \geq \smfrac{1}{2} c_{a_{\rm c}, M, \omega}^{4/3} > 0$ (or $\nu \geq \smfrac{1}{2} c_{M, \omega}^{4/3} > 0$ when $a_{1} = a_{2} = 0$). As $u_{1}, u_{2} > 0$, \cite[Lemma 6.2]{Paper1} implies that
\begin{align*}
L = - \Delta + \smfrac{5}{6} (u_{1}^{4/3} + u_{2}^{4/3}) - \frac{1}{2} (\phi_{1} + \phi_{2})
\end{align*}
is a non-negative operator, hence (\ref{eq: before Yuk l}) can be expressed as
\begin{align}
\langle w \xi, L (w \xi) \rangle + \nu \int w^{2} \xi^{2} \leq \int w^{2} |\nabla \xi|^{2} + {\frac{1}{2}} \int \psi ( u_{1}^{2} - u_{2}^{2} ) \xi^{2}, \label{eq: before Yuk 2}
\end{align}
Then, testing (\ref{eq: psi R Yuk eq}) with $\psi \xi^{2}$ and re-arranging and using $a_{1} \geq 0$ gives
\begin{align}
\int |\nabla (\psi \xi)|^{2} \leq \int |\nabla (\psi \xi)|^{2} + a_{1}^{2} \int \psi^{2} \xi^{2} \leq \int R \psi \xi^{2} + 4\pi \int \psi (u_{2}^{2} - u_{1}^{2}) \xi^{2}. \label{eq: before psi Yuk eq}
\end{align}
Combining (\ref{eq: before Yuk 2}) and (\ref{eq: before psi Yuk eq}) and further re-arrangement yields
\begin{align}
\langle w \xi, L (w \xi) \rangle + \nu \int w^{2} \xi^{2} + \frac{1}{8 \pi} \int |\nabla \psi|^{2} \xi^{2} \leq C \left( \int R \psi \xi^{2} + \int ( w^{2} + \psi^{2} ) |\nabla \xi|^{2} \right). \label{eq: before Yuk 3}
\end{align}
From this point, the proof of \cite[Lemma 6.7]{Paper1} follows verbatim to show the estimate: there exists $C,\gamma > 0$ such that for all $\xi \in H_{\gamma}$ 
\begin{align}
\int_{\R} \bigg( \sum_{|\alpha_{1}| \leq k+4} |\partial^{\alpha_{1}} w|^{2} + \sum_{|\alpha_{2}| \leq k+2} |\partial^{\alpha_{2}} \psi|^{2} \bigg) \xi^{2} \leq C \int_{\R} \sum_{|\beta| \leq k} |\partial^{\beta} R|^{2} \xi^{2}. \label{eq: w and psi Yuk k lemma est}
\end{align}
If $k' = k$, then this is the desired estimate (\ref{eq: w and psi R Yuk lemma est}). Alternatively, if $k' = k+ 2$, the remaining estimate is shown by adapting the proof of \cite[Lemma 6.6]{Paper1}. Recall (\ref{eq: psi R Yuk eq}), that $\psi$ solves
\begin{align}
- \Delta \psi = - a_{1}^{2} \psi + 4 \pi \left( u_{2}^{2} - u_{1}^{2} \right) + R \in H^{k+2}_{\unif}(\R), \label{eq: psi Yuk k' eq}
\end{align}
hence by standard elliptic regularity \cite{Evans} $\psi \in H^{k+4}_{\unif}(\R)$. It follows that
\begin{align}
\int \sum_{|\alpha| \leq k+2} |\partial^{\alpha} \Delta \psi|^{2} \xi^{2} \leq C(k',M,\omega) \int \sum_{|\beta| \leq k+2} \left( |\partial^{\beta} \psi|^{2} + |\partial^{\beta} R|^{2} + |\partial^{\beta} w|^{2} \right) \xi^{2}. \label{eq: k' est 2}
\end{align}
In addition, applying integration by parts, for any $k_{1} \leq k + 2$
\begin{align}
\sum_{|\alpha| = k_{1} + 2} &\int | \partial^{\alpha} \psi|^{2} \xi^{2} \leq C \bigg( \int \sum_{|\beta_{1}| = k_{1}} |\partial^{\beta_{1}} \Delta \psi|^{2} \xi^{2} + \int \sum_{|\beta_{2}| = k_{1}+1} |\partial^{\beta_{2}} \psi|^{2} \xi^{2} \bigg), \label{eq: k' est 3}
\end{align}
hence combining (\ref{eq: w and psi Yuk k lemma est})--(\ref{eq: k' est 3}) for $k_{1} = k+2$ gives
\begin{align}
\sum_{|\alpha| = k + 4} \int | \partial^{\alpha} \psi|^{2} \xi^{2} &\leq C \bigg( \int \sum_{|\beta_{1}| = k+2} |\partial^{\beta_{1}} \Delta \psi|^{2} \xi^{2} + \int \sum_{|\beta_{2}| = k+3} |\partial^{\beta_{2}} \psi|^{2} \xi^{2} \bigg) \nonumber \\
&\leq C \bigg( \int \sum_{|\beta_{1}| = k+2} |\partial^{\beta_{1}} \Delta \psi|^{2} \xi^{2} + \int \sum_{|\beta_{2}| = k+2} |\partial^{\beta_{2}} \psi|^{2} \xi^{2} \bigg) \nonumber \\
&\leq C \int \sum_{|\beta| \leq k+2} \left( |\partial^{\beta} \psi|^{2} + |\partial^{\beta} R|^{2} + |\partial^{\beta} w|^{2} \right) \xi^{2} \nonumber \\ &\leq C \int_{\R} \sum_{|\beta| \leq k+2} |\partial^{\beta} R|^{2} \xi^{2}. \label{eq: k' est final}
\end{align}
Inserting (\ref{eq: k' est final}) into (\ref{eq: w and psi Yuk k lemma est}) yields the desired estimate (\ref{eq: w and psi R Yuk lemma est})
\begin{align*}
\int_{\R} \bigg( \sum_{|\alpha_{1}| \leq k+4} |\partial^{\alpha_{1}} w|^{2} + \sum_{|\alpha_{2}| \leq k'} |\partial^{\alpha_{2}} \psi|^{2} \bigg) \xi^{2} \leq C \int_{\R} \sum_{|\beta| \leq k'} |\partial^{\beta} R|^{2} \xi^{2}.
\end{align*}
Let $y \in \R$, then applying (\ref{eq: w and psi Yuk k lemma est}) with $\xi(x) = e^{-\gamma|x - y|} \in H_{\gamma}$ and following the proof of \cite[Lemma 6.6]{Paper1} yields the remaining estimate (\ref{eq: w and psi pointwise rhs exp integral final lemma est}).

\emph{Case 2.} Suppose (A) holds, then by \Cref{Proposition - Finite Yukawa Regularity Est}
\begin{align*}
\| u_{1} \|_{H^{4}_{\unif}(\R)} + \| \phi_{1} \|_{H^{2}_{\unif}(\R)} &\leq C(M), \\
\| u_{2} \|_{H^{4}_{\unif}(\R)} + \| \phi_{2} \|_{H^{2}_{\unif}(\R)} &\leq C(M'),
\end{align*}
and $\inf u_{1} \geq c_{a',M,\omega} > 0$ (if $a_{1} = 0$ then $\inf u_{1} \geq c_{M,\omega} > 0$) and $u_{2} \geq 0$. Other than this, the argument of Case 1 holds verbatim to obtain (\ref{eq: w and psi R Yuk lemma est})--(\ref{eq: w and psi pointwise rhs exp integral final lemma est}).
\end{proof}

\begin{proof}[Proof of \Cref{Corollary - Yukawa Coulomb Comparison 2}]
As $m \in \mathcal{M}_{H^{k}}(M,\omega)$, applying \Cref{Lemma - Exp Est Yuk Main Est}(B) with $0 < a_{1} \leq a_{2} \leq a_{0}$ and $R = (a_{2}^{2} - a_{1}^{2}) \phi_{2} \in H^{k+2}_{\unif}(\R)$. Then applying \Cref{Lemma - Exp Est Yuk Main Est} case (B) with $\xi(x) = e^{-\gamma |x - y|} \in H_{\gamma}$ yields
\begin{align*}
\sum_{|\alpha| \leq k+2} \left( |\partial^{\alpha} w(y)|^{2} + |\partial^{\alpha} \psi(y)|^{2} \right) \leq C (a_{2}^{2} - a_{1}^{2}) \int_{\R} \sum_{|\beta| \leq k+2} |\partial^{\beta} \phi_{2}(x)|^{2} e^{-2\gamma|x- y|} \id x.
\end{align*}
As $\phi_{2} \in H^{k+2}_{\unif}(\R)$, and for all $z \in \R$ and $A \subset B_{1}(z)$, $\sup_{x \in A} e^{-2\gamma|x|} \leq C \inf_{x \in A} e^{-2\gamma|x|}$, it follows that
\begin{align*}
\sum_{|\alpha| \leq k+2} \left( |\partial^{\alpha} w(y)|^{2} + |\partial^{\alpha} \psi(y)|^{2} \right) &\leq C (a_{2}^{2} - a_{1}^{2}) \int_{\R} \sum_{|\beta| \leq k+2} |\partial^{\beta} \phi_{2}(x)|^{2} e^{-2\gamma|x- y|} \id x \\ &\leq C (a_{2}^{2} - a_{1}^{2}) \| \phi_{2} \|_{H^{k+2}_{\unif}(\R)}^{2} \int_{\R} e^{-2\gamma|x- y|} \id x \leq C (a_{2}^{2} - a_{1}^{2}),
\end{align*}
where the final constant is independent of $y \in \R$, hence the desired estimate (\ref{eq: Yukawa Yukawa pointwise k pos est}) holds.
\end{proof}

\begin{proof}[Proof of \Cref{Theorem - Yukawa Coulomb Comparison}]
For $0 < a \leq a_{0}$, applying \Cref{Corollary - Yukawa Coulomb Comparison 2} with $a_{1} = 0, a_{2} = a$ yields the desired estimate (\ref{eq: Yukawa Coulomb pointwise k pos est}).
%If $k > 0$, then the proof of \Cref{Corollary - Yukawa Coulomb Comparison 2} follows verbatim, with $0 = a_{1} < a_{2} \leq a' = a_{0}$. If $k = 0$, other than applying \Cref{Lemma - Exp Est Yuk Main Est}(A) with $0 = a_{1} < a_{2} \leq a' = a_{0}$, the proof is identical to the case $k > 0$.
\end{proof}

\begin{proof}[Proof of \Cref{Theorem - Exponential Est Integral Yuk RHS}]
Let $0 < a \leq a_{0}$, then as $m_{1}, m_{2} \in \mathcal{M}_{H^{k}}(M,\omega)$ for $k \in \mathbb{N}_{0}$, applying \Cref{Lemma - Exp Est Yuk Main Est}(B) with $a_{1} = a_{2} = a$ and $R = 4\pi (m_{1} - m_{2}) \in H^{k}_{\unif}(\R)$ yields the desired estimate (\ref{eq: w and psi partial xi global est}).
\end{proof}

\begin{proof}[Proof of \Cref{Theorem - One inf pointwise stability estimate Yuk alt}]
The proof closely follows and adapts the argument used to show \cite[Theorem 3.4]{Paper1}.

As $m_{1} \in \mathcal{M}_{L^{2}}(M,\omega)$, by \Cref{Proposition - Uniform Yuk inf u estimate all a} for all $a > 0$ there exists 
%$a_{\rm c} = a_{\rm c}(\omega) > 0$ such that for all $0 < a \leq a_{\rm c}$ there exists 
a unique ground state $(u_{1,a},\phi_{1,a})$ corresponding to $m_{1}$. It remains to show that $m_{2}$ and its corresponding solution satisfy the conditions of \Cref{Lemma - Exp Est Yuk Main Est}(A).

\emph{Case 1.} Suppose $\spt(m_{2})$ is bounded and $m_{2} \not\equiv 0$. Since $m_{2} \in L^{2}_{\unif}(\R)$, it follows that $m_{2} \in L^{1}(\R)$ and since $m_{2} \geq 0$ and $m_{2} \not\equiv 0$, it follows that $\int m_{2} > 0$. For $a > 0$, consider the minimisation problem
\begin{align*}
I^{\TFW}_{a}(m_{2}) = \inf \left\{ \, E^{\TFW}_{a}(v,m_{2})  \, \bigg| \, v \in H^{1}(\R), v \geq 0\, \right \}.
\end{align*}
By \Cref{Proposition - Finite Yukawa Regularity Est}, there exists $a_{0} = a_{0}(m_{2}) > 0$ such that for all $0 < a \leq a_{0}$, the minimisation problem yields a unique solution $(u_{2,a},\phi_{2,a})$ of (\ref{eq: u phi Yuk eq pair}), satisfying $u_{2,a} > 0$ and (\ref{eq: u2 phi2 reg est})
\begin{align*}
\| u_{2,a} \|_{H^{4}_{\unif}(\R)} + \| \phi_{2,a} \|_{H^{2}_{\unif}(\R)} &\leq C(M'),
\end{align*}
independently of $a$. Consequently, applying \Cref{Lemma - Exp Est Yuk Main Est}(A) with $0 < a_{1} = a_{2} \leq a' \leq 1$ and $R = 4\pi( m_{1} - m_{2}) \in H^{k}_{\unif}(\R)$ yields the desired estimate (\ref{eq: w and psi partial xi global onesided est}).

\emph{Case 2.} Suppose $m_{2} = u_{2} = \phi_{2} = 0$, then by definition $(u_{2},\phi_{2})$ solve (\ref{eq:u-phi-eq-pair}) and (A) is satisfied, so applying \Cref{Lemma - Exp Est Yuk Main Est}(A) with $0 < a_{1} = a_{2} \leq a' = 1$ and $R = 4\pi( m_{1} - m_{2}) \in H^{k}_{\unif}(\R)$ yields the desired estimate (\ref{eq: w and psi partial xi global onesided est}).

\emph{Case 3.} \label{proof-case3} Suppose $\spt(m_{2})$ is unbounded. By \Cref{Proposition - Finite Yukawa Regularity Est}, there exists $a_{0} = a_{0}(m_{2}) > 0$ such that for all $0 < a \leq a_{0}$, there exists $(u_{2,a},\phi_{2,a})$ solving (\ref{eq: u phi Yuk eq pair}) and satisfying $u_{2,a} \geq 0$. As it is not guaranteed that $u_{2,a} > 0$, it is not possible to apply \Cref{Lemma - Exp Est Yuk Main Est}(A) directly to compare $(u_{1,a},\phi_{1,a})$ with $(u_{2,a},\phi_{2,a})$. Instead, by following the proof of \Cref{Proposition - Finite Yukawa Regularity Est}, a thermodynamic limit argument is used to construct a sequence of functions $(u_{2,a,R_{n}},\phi_{2,a,R_{n}})$ which satisfy (A) for sufficiently large $R_{n}$ and converge to $(u_{2,a},\phi_{2,a})$ as $R_{n} \to \infty$.

Let $R_{n} \uparrow \infty$ and define $m_{2,R_{n}} := m_{2} \cdot \chi_{B_{R_{n}}(0)}$, then as $m_{2} \in L^{2}_{\unif}(\R)$, $m_{2} \geq 0$ and $m_{2} \not \equiv 0$, it follows that $m_{2,R_{n}} \in L^{1}(\R)$ and for sufficiently large $R_{n}$, $\int m_{2,R_{n}} > 0$. By \Cref{Proposition - Finite Yukawa Regularity Est}, there exists $R_{0} = R_{0}(m_{2}), a_{0} = a_{0}(m_{2}) > 0$ such that for all $R_{n} \geq R_{0}$ and $0 < a \leq a_{0}$ the minimisation problem
\begin{align*}
I^{\TFW}_{a}(m_{2,R_{n}}) = \inf \left\{ \, E^{\TFW}_{a}(v,m_{2,R_{n}})  \, \bigg| \, v \in H^{1}(\R), v \geq 0, \int_{\R} v^{2} = \int_{\R} m_{2,R_{n}} \, \right \},
\end{align*}
defines a unique solution $(u_{2,a,R_{n}},\phi_{2,a,R_{n}})$ to (\ref{eq: u phi Yuk eq pair}), satisfying $u_{2,a,R_{n}} > 0$ and
\begin{align}
\| u_{2,a,R_{n}} \|_{H^{4}_{\unif}(\R)} + \| \phi_{2,a,R_{n}} \|_{H^{2}_{\unif}(\R)} &\leq C(M'), \label{eq: u 2Rn unif est}
\end{align}
where the constant is independent of $a$, $a_{0}$ and $R_{n}$. Passing to the limit in (\ref{eq: u 2Rn unif est}), there exist $u_{2,a} \in H^{4}_{\unif}(\R), \phi_{2,a} \in H^{2}_{\unif}(\R)$ such that, respectively, along a subsequence $u_{2,a,R_{n}}, \phi_{2,a,R_{n}}$ converges to $u_{2,a}, \phi_{2,a}$, weakly in $H^{4}(B_{R}(0))$ and $H^{2}(B_{R}(0))$, strongly in $H^{2}(B_{R}(0))$ and $L^{2}(B_{R}(0))$ for all $R>0$ and for all $|\alpha| \leq 2$, $\partial^{\alpha} u_{2,a,R_{n}}, \phi_{2,a,R_{n}}$ converges to $\partial^{\alpha} u_{2,a}, \phi_{2,a}$ pointwise. It follows that $(u_{2,a}, \phi_{2,a})$ is a solution of (\ref{eq: u phi Yuk eq pair}) corresponding to $m_{2}$, satisfying $u_{2,a} \geq 0$ and (\ref{eq: u2 phi2 reg est})
\begin{equation*}
\| u_{2,a} \|_{H^{4}_{\unif}(\R)} + \| \phi_{2,a} \|_{H^{2}_{\unif}(\R)} \leq C(M').
\end{equation*}
In addition, for $0 < a \leq a' = a_{0}$, $(u_{1}',\phi_{1}') = (u_{1,a},\phi_{1,a})$ and $(u_{2}',\phi_{2}') = (u_{2,a,R_{n}},\phi_{2,a,R_{n}})$ satisfy (A) for all $R_{n} \geq R_{0}$, so by \Cref{Lemma - Exp Est Yuk Main Est} that there exist $C,\gamma > 0$, independent of $a$, $a_{0}$ and $R_{n}$, such that for $R_{n} \geq R_{0}$ and any $\xi \in H_{\gamma}$
\begin{align}
\int_{\R} \bigg( \sum_{|\alpha_{1}| \leq 4} |\partial^{\alpha_{1}} (u_{1,a} - u_{2,a,R_{n}})|^{2} &+ \sum_{|\alpha_{2}| \leq 2} |\partial^{\alpha_{2}} (\phi_{1,a} - \phi_{2,a,R_{n}})|^{2} \bigg) \xi^{2} \nonumber \\ &\leq C \int_{\R} (m_{1} - m_{2,R_{n}})^{2} \xi^{2}, \label{eq: u1 u2Rn xi est}
\end{align}
and for any $y \in \R$,
\begin{align}
\sum_{|\alpha_{1}| \leq 2} |\partial^{\alpha_{1}} (u_{1,a} - u_{2,a,R_{n}})(y)|^{2} &+ |(\phi_{1,a} - \phi_{2,a,R_{n}})(y)|^{2} \nonumber \\ &\leq C \int_{\R} |(m_{1} - m_{2,R_{n}})(x)|^{2} e^{-2\gamma |x - y|} \id x. \label{eq: u1 u2Rn exp est}
\end{align}
Using the pointwise convergence of $(u_{2,a,R_{n}},\phi_{2,a,R_{n}})$ to $(u_{2,a},\phi_{2,a})$, applying the Dominated Convergence Theorem and sending $R_{n} \to \infty$ in (\ref{eq: u1 u2Rn xi est})--(\ref{eq: u1 u2Rn exp est}) gives the desired estimates (\ref{eq: w and psi partial xi global onesided est})--(\ref{eq: w and psi pointwise rhs exp integral onesided est}).
\end{proof}

\subsection{Proof of Applications}

Proving \Cref{Theorem - Forcing Yuk and Coulomb Comparison} first requires establishing the existence, uniqueness and regularity of solutions to the linearised TFW Yukawa equations.

Fix $Y = (Y_j)_{j \in \mathbb{N}} \in \mathcal{Y}_{L^{2}}(M,\omega)$ and let
$m = m_{Y} \in \mathcal{M}_{L^{2}}(M,\omega)$.  Let
$V \in \R \smallsetminus \{0\}$, $k \in \mathbb{N}$ and for $h \in [0,1]$ define
\begin{align}
Y^{h} = \{ \, Y_{j} + \delta_{jk} h V  \, | \, j \in \mathbb{N} \, \}, \label{eq: Y h def}
\end{align}
%\begin{align}
%Y_{h} = \left( Y \smallsetminus \{ \ell_{0} \} \right) \cup \{ \ell_{0} + hV \} , \label{eq: Yh def}
%\end{align}
and the associated nuclear configuration
\begin{align}
m_{h}(x) = m(x) + \eta(x - Y_{k} - h V) - \eta(x - Y_{k}). \label{eq: mh def}
\end{align}

By \cite[Lemma 6.7]{Paper1}, there exist $(M',\omega')$ such that $m_{h} \in \mathcal{Y}_{L^{2}}(M',\omega')$ for all $h \in [0,1]$, hence by \Cref{Proposition - Yukawa Regularity Est All a} for all $a > 0$ there exists a corresponding ground state $(u_{a,h},\phi_{a,h})$. Also, let $(u_{a},\phi_{a}) = (u_{a,0},\phi_{a,0})$. \Cref{Corollary - Exponential Estimates Yuk Consequences} is now used to compare $(u_{a,h},\phi_{a,h})$ with $(u_{a},\phi_{a})$ to rigorously linearise the TFW Yukawa equations.

\begin{lemma}
\label{Lemma - Linearised TFW Results}
Let $a_{0} > 0$, $Y \in \mathcal{Y}_{L^{2}}(M,\omega)$ and let $m = m_{Y} \in \mathcal{M}_{L^{2}}(M,\omega)$. Also, let $k \in \mathbb{N}$, $V \in \R \smallsetminus\{0\}$ and $h_{0} = \min\{ 1, |V|^{-1}\}$. For $h \in [0,h_{0}]$ define
\begin{align*}
m_{h}(x) = m(x) + \eta(x - Y_{k} - h V) - \eta(x - Y_{k}),
\end{align*}
then for all $0 < a \leq a_{0}$ and $h \in [0,h_{0}]$ there exists a unique Yukawa ground state $(u_{a,h},\phi_{a,h})$ corresponding to $m_{h}$. There exist $C = C(a_{0},M',\omega')$, $\gamma_{0} = \gamma_{0}(a_{0},M',\omega') > 0$, independent of $a$, $h$ and $|V|$, such that for all $0 < a \leq a_{0}$ and $h \in [0,h_{0}]$
\begin{align}
\sum_{|\alpha| \leq 2} \big( |\partial^{\alpha}(u_{a,h} - u_{a})(x)| + |\partial^{\alpha}(\phi_{a,h} - \phi_{a})(x)| \big) + |(m_{h} - m)(x)| &\leq C h e^{-\gamma |x - Y_{k}|}, \label{eq: uh - u0 exp est}
\\ \|u_{a,h} - u_{a}\|_{H^{4}(\R)} + \|\phi_{a,h} - \phi_{a}\|_{H^{2}(\R)} \leq C \| m_{h} - m \|_{L^{2}(\R)} &\leq C h. \label{eq: uh - u0 L2 est}
\end{align}
Moreover, for all $0 < a \leq a_{0}$, the limits
\begin{align*}
\ou_{a} = \lim_{h \to 0} \frac{u_{a,h} - u_{a}}{h}, \quad \ophi_{a} = \lim_{h \to 0} \frac{\phi_{a,h} - \phi_{a}}{h}, \quad \om = \lim_{h \to 0} \frac{m_{h} - m}{h},
\end{align*}
exist and $(\ou_{a},\ophi_{a})$ is the unique solution to the linearised TFW Yukawa equations
\begin{subequations}
\label{eq: ou ophi Yuk eq pair}
\begin{align}
&- \Delta \ou_{a} + \left( \frac{35}{9} u^{4/3}_{a} - \phi_{a} \right) \ou_{a} - u_{a} \ophi_{a} = 0, \label{eq: ou Yuk eq} \\
&- \Delta \ophi_{a} + a^{2} \ophi_{a} = 4 \pi \left( \om - 2u_{a} \ou_{a} \right). \label{eq: ophi Yuk eq}
\end{align}
\end{subequations}
Moreover, $\ou_{a} \in H^{4}(\R), \ophi_{a} \in H^{2}(\R), \om \in C^{\infty}_{\textnormal{c}}(\R)$ and satisfy
\begin{align}
&\sum_{|\alpha| \leq 2} \big( |\partial^{\alpha}\ou_{a}(x)| + |\partial^{\alpha}\ophi_{a}(x)| \big) + |\om(x)| \leq C e^{- \gamma |x - Y_{k}|}, \label{eq: ou ophi Yuk exp est} \\
&\|\ou_{a}\|_{H^{4}(\R)} + \|\ophi_{a}\|_{H^{2}(\R)} \leq C \| \om \|_{L^{2}(\R)}, \label{eq: ou ophi Yuk L2 est}
\end{align}
where $C = C(a_{0},M',\omega')$, $\gamma_{0} = \gamma_{0}(a_{0},M',\omega') > 0$ are independent of $a$ and $|V|$.

\end{lemma}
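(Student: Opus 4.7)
The plan is to apply \Cref{Proposition - Yukawa Regularity Est All a} to obtain existence and uniqueness of the $(u_{a,h}, \phi_{a,h})$, then use \Cref{Corollary - Exponential Estimates Yuk Consequences} on the difference with $(u_a, \phi_a)$ to prove (\ref{eq: uh - u0 exp est})--(\ref{eq: uh - u0 L2 est}), and finally pass to the limit $h \to 0$ to obtain the linearised system.

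For the first step, the observation $m_h \in \mathcal{M}_{L^{2}}(M', \omega')$ uniformly in $h$ (established in the text preceding the statement of the lemma via \cite[Lemma 6.7]{Paper1}) together with \Cref{Proposition - Yukawa Regularity Est All a} yields unique Yukawa ground states $(u_{a,h}, \phi_{a,h})$ with uniform $H^{4}_{\unif} \times H^{2}_{\unif}$ bounds for $0 < a \leq a_{0}$. Since $\eta \in C^{\infty}_{\rm c}(B_{R_{0}}(0))$ and $h|V| \leq 1$, the perturbation $m_h - m$ is smooth, supported in $B_{R_{0}+1}(Y_{k})$, and a Taylor expansion in $h$ gives $\|m_h - m\|_{H^{k}(\R)} \leq C h \|\eta\|_{H^{k+1}(\R)}$ for every $k \in \mathbb{N}_{0}$. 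Applying \Cref{Corollary - Exponential Estimates Yuk Consequences} (items (1) and (3)), after recentring at $Y_{k}$ by translation invariance, produces both (\ref{eq: uh - u0 exp est}) and (\ref{eq: uh - u0 L2 est}); the $h$-linear prefactor is extracted by tracking the $L^{2}$ (respectively pointwise) size of $m_h - m$ through the underlying integral estimate (\ref{eq: w and psi Yuk pointwise rhs exp integral est}).

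To extract the linearised system I would set $\bar u_{a,h} := (u_{a,h} - u_{a})/h$, $\bar \phi_{a,h} := (\phi_{a,h} - \phi_{a})/h$ and $\bar m_{h} := (m_{h} - m)/h$. Estimate (\ref{eq: uh - u0 L2 est}) provides uniform $H^{4}(\R) \times H^{2}(\R)$ bounds, and clearly $\bar m_{h} \to \bar m := -V \cdot \nabla \eta(\cdot - Y_{k})$ in $C^{\infty}_{\rm c}(\R)$. The factorisation
\begin{align*}
\frac{u_{a,h}^{7/3} - u_{a}^{7/3}}{h} = \frac{7}{3} \, \bar u_{a,h} \int_{0}^{1} \bigl( u_{a} + t(u_{a,h} - u_{a}) \bigr)^{4/3} \id t,
\end{align*}
combined with the $C^{2}$-convergence $(u_{a,h}, \phi_{a,h}) \to (u_{a}, \phi_{a})$ on $\R$ implied by (\ref{eq: uh - u0 exp est}), allows me to pass to a weak limit along a subsequence $h_{n} \to 0$ in the subtracted Yukawa equations and exhibit any cluster point $(\bar u_{a}, \bar \phi_{a})$ as a solution of (\ref{eq: ou ophi Yuk eq pair}) with source $\bar m$.

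The main obstacle is promoting subsequential convergence to full convergence, which reduces to uniqueness for (\ref{eq: ou ophi Yuk eq pair}) in $H^{4}(\R) \times H^{2}(\R)$. Testing the homogeneous version ($\bar m = 0$) of (\ref{eq: ou Yuk eq}) against $\bar u$, eliminating the mixed term $\int u_{a} \bar u \bar \phi$ via (\ref{eq: ophi Yuk eq}) tested against $\bar \phi$, and then combining as in the proof of \Cref{Lemma - Exp Est Yuk Main Est}, the strict lower bound $\inf u_{a} \geq c_{a_{0}, M', \omega'} > 0$ from \Cref{Proposition - Uniform Yuk inf u estimate all a} together with positivity of $-\Delta + \tfrac{5}{3} u_{a}^{4/3} - \phi_{a}$ from \cite[Lemma 6.1]{Paper1} (whose potential is smaller than the coefficient $\tfrac{35}{9} u_{a}^{4/3} - \phi_{a}$ appearing in (\ref{eq: ou Yuk eq}) by $\tfrac{20}{9} u_{a}^{4/3}$) yields an identity of the form $\int |\nabla \bar \phi|^{2} + a^{2} \bar \phi^{2} + c \, u_{a}^{4/3} \bar u^{2} \leq 0$, forcing $\bar u = \bar \phi = 0$. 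Hence the full limit exists, and the estimates (\ref{eq: ou ophi Yuk exp est})--(\ref{eq: ou ophi Yuk L2 est}) are inherited from (\ref{eq: uh - u0 exp est})--(\ref{eq: uh - u0 L2 est}) by passing to the pointwise and weak limits, respectively.
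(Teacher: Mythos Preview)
Your proposal is correct and follows essentially the same route as the paper: apply the uniform regularity and lower-bound results to $(u_{a,h},\phi_{a,h})$, use the pointwise and global stability estimates (your \Cref{Corollary - Exponential Estimates Yuk Consequences}, which the paper invokes via \cite[Lemma 6.8]{Paper1}) to obtain \eqref{eq: uh - u0 exp est}--\eqref{eq: uh - u0 L2 est}, extract a subsequential limit satisfying \eqref{eq: ou ophi Yuk eq pair}, and upgrade to full convergence via uniqueness of the linearised system. Your uniqueness argument---exploiting the $\tfrac{20}{9}u_a^{4/3}$ gap between the linearised coefficient and the non-negative operator $-\Delta + \tfrac{5}{3}u_a^{4/3} - \phi_a$, combined with $\inf u_a > 0$ from \Cref{Proposition - Uniform Yuk inf u estimate all a}---is exactly the mechanism the paper uses (the paper places this step first, you place it last, but that is cosmetic).
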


%\fn{[Could show the stronger pointwise stability estimate for the Yukawa ground state?]}
%We first show the uniqueness of the linearised Yukawa solution $(\ou_{a},\ophi_{a})$ to (\ref{eq: ou ophi Yuk eq pair}), in fact we show the stronger result
%
%\begin{lemma}
%\label{Lemma - Linearised Yukawa Exp Est}
%Suppose $m \in \mathcal{M}_{L^{2}}(M',\omega')$ and for $0 < a \leq a_{\rm c}'$, $(u_{a},\phi_{a})$ is the corresponding Yukawa ground state. Then suppose there exists $(w,\psi) \in H^{1}(\R) \times H^{1}(\R)$ solves
%\begin{subequations}
%\label{eq: w psi Yuk eq pair}
%\begin{align}
%&- \Delta w + \left( \frac{35}{9} u^{4/3}_{a} - \phi_{a} \right) w - u_{a} \psi = R_{1}, \label{eq: w R1 Yuk eq} \\
%&- \Delta \psi + a^{2} \psi =  - 8 \pi u_{a} \psi + R_{2}. \label{eq: psi R2 Yuk eq}
%\end{align}
%\end{subequations}
%
%\end{lemma}

\begin{proof}[Proof of \Cref{Lemma - Linearised TFW Results}]

The first step is to show the uniqueness of the linearised Yukawa solution $(\ou_{a},\ophi_{a})$ to (\ref{eq: ou ophi Yuk eq pair}). Let $0 < a \leq a_{0}$ and suppose $(w,\psi) \in H^{1}(\R) \times H^{1}(\R)$ solves
\begin{subequations}
\label{eq: w psi Yuk eq pair}
\begin{align}
&- \Delta w + \left( \frac{35}{9} u^{4/3}_{a} - \phi_{a} \right) w - u_{a} \psi = 0, \label{eq: w Yuk uniq eq} \\
&- \Delta \psi + a^{2} \psi =  - 8 \pi u_{a} \psi. \label{eq: psi Yuk uniq eq}
\end{align}
\end{subequations}
Testing (\ref{eq: w Yuk uniq eq}) with $w$ yields
\begin{align*}
\int_{\R} |\nabla w|^{2} + \int_{\R} \left( \frac{35}{9} u^{4/3}_{a} - \phi_{a} \right) w^{2} = \int_{\R} u_{a} w \psi.
\end{align*}
Then as $u_{a} > 0$, by \cite[Lemma 6.2]{Paper1} $L_{a} = - \Delta + \smfrac{35}{9} u^{4/3}_{a} - \phi_{a}$ is a non-negative operator. In addition, by \Cref{Proposition - Uniform Yuk inf u estimate all a} $\inf u_{a} \geq c_{a_{0}, M', \omega'} > 0$, hence there exists $c_{0} > 0$ such that
\begin{align}
c_{0} \int_{\R} w^{2} &\leq \frac{10}{9} \int_{\R} u^{4/3}_{a} w^{2} \leq \langle w , L_{a} w \rangle + \frac{10}{9} \int_{\R} u^{4/3}_{a} w^{2} \nonumber \\ &= \int_{\R} |\nabla w|^{2} + \int_{\R} \left( \frac{35}{9} u^{4/3}_{a} - \phi_{a} \right) w^{2} = \int_{\R} u_{a} w \psi. \label{eq: w Yuk c0 est}
\end{align}
Then testing (\ref{eq: psi Yuk uniq eq}) with $\smfrac{1}{8\pi} \psi$ gives
\begin{align}
\frac{1}{8\pi} \left( \int_{\R} |\nabla \psi|^{2} + a^{2} \int_{\R} \psi^{2} \right) = - \int_{\R} u_{a} w \psi, \label{eq: psi 8pi Yuk est}
\end{align}
and adding (\ref{eq: w Yuk c0 est})--(\ref{eq: psi 8pi Yuk est}) yields
\begin{align*}
0 \leq c_{0} \int_{\R} w^{2} + \frac{1}{8\pi} \left( \int_{\R} |\nabla \psi|^{2} + a^{2} \int_{\R} \psi^{2} \right) \leq 0,
\end{align*}
hence $w = \psi = 0$ almost everywhere, so (\ref{eq: ou ophi Yuk eq pair}) has a unique solution in $H^{1}(\R) \times H^{1}(\R)$.

Now, \Cref{Proposition - Yukawa Regularity Est} and \Cref{Proposition - Uniform Yuk inf u estimate} imply that for $0 < a \leq a_{0}$ and $h \in [0,h_{0}]$ the ground state $(u_{a,h},\phi_{a,h})$ satisfies
\begin{align}
&\| u_{a,h} \|_{H^{4}_{\unif}(\R)} + \| \phi_{a,h} \|_{H^{2}_{\unif}(\R)} \, \leq C(a_{0},M'), \label{eq: uh phih reg est} \\
&\inf_{x \in \R} u_{a,h}(x) \geq c_{a_{0},M',\omega'} > 0, \label{eq: inf uh est}
\end{align}
independently of $a$, $h$ and $|V|$. Then following the proof of \cite[Lemma 6.8]{Paper1}, for all $0 < a \leq a_{0}$ and $h \in [0,h_{0}]$, the estimates (\ref{eq: uh - u0 exp est})--(\ref{eq: uh - u0 L2 est}) hold. In addition, there exist $\ou_{a} \in H^{4}(\R)$ and $\ophi_{a} \in H^{2}(\R)$ such that along a subsequence $h_{n}$ (which may depend on $a$) such that $\smfrac{u_{a,h_{n}}- u_{a}}{h_{n}}, \smfrac{\phi_{a,h_{n}}- \phi_{a}}{h_{n}}$ converge to
$\ou_{a} \in H^{4}(\R), \ophi_{a} \in H^{2}(\R)$ respectively, weakly in $H^{4}(\R)$ and
$H^{2}(\R)$, strongly in $H^{3}(B_{R}(0))$ and $H^{1}(B_{R}(0))$ for all $R > 0$
and pointwise almost everywhere, along with their derivatives. In addition, it follows that $(\ou_{a},\ophi_{a})$ satisfy (\ref{eq: ou ophi Yuk exp est})--(\ref{eq: ou ophi Yuk L2 est}).

To verify that $(\ou_{a},\ophi_{a})$ are independent of the sequence chosen, passing to the limit in the equations
\begin{align*}
&- \Delta \left( \frac{ u_{a,h_{n}} - u_{a} }{h_{n}}  \right) + \frac{5}{3} \frac{ u_{a,h_{n}}^{7/3} - u_{a}^{7/3} }{h_{n}} - \frac{ \phi_{a,h_{n}} u_{a,h_{n}} - \phi_{a} u_{a} }{h_{n}}  = 0, \\
&- \Delta \left( \frac{ \phi_{a,h_{n}} - \phi_{a} }{h_{n}} \right) + a^{2} \left( \frac{ \phi_{a,h_{n}} - \phi_{a} }{h_{n}} \right) = 4 \pi \left( \frac{ m_{h_{n}} - m }{h_{n}} - \frac{ u_{a,h_{n}}^{2} - u_{a}^{2} }{h_{n}}  \right),
\end{align*}
gives that $(\ou_{a},\ophi_{a})$ solve the linearised Yukawa equations (\ref{eq: ou ophi Yuk eq pair}) pointwise, 
\begin{align*}
&- \Delta \ou_{a} + \left( \frac{35}{9} u_{a}^{4/3} - \phi_{a} \right) \ou_{a} - u_{a} \ophi_{a} = 0, \\
&- \Delta \ophi_{a} + a^{2} \ophi_{a} = 4 \pi \left( \om - 2u_{a} \ou_{a} \right), \\
  \text{where} &\quad
\om(x) = \lim_{h_{n} \to 0} \frac{(m_{h_{n}}- m)(x)}{h_{n}} = - \nabla \eta (x - Y_{k}) \cdot V.
\end{align*}
Clearly $\om$ is independent of the sequence $h_{n}$, so as $(\ou_{a}, \ophi_{a})$ is the unique solution to the linearised Yukawa system (\ref{eq: ou ophi Yuk eq pair}), it is independent of the sequence $(h_{n})$. It then follows that $\smfrac{u_{a,h}- u_{a}}{h}, \smfrac{\phi_{a,h}- \phi_{a}}{h}$ converge to $\ou_{a}, \ophi_{a}$ as $h \to 0$ as stated above.
%Then the desired results hold using (\ref{eq: uh phih reg est})--(\ref{eq: inf uh est}) and following the proof of \cite[Lemma 6.8]{Paper 1} verbatim.
\end{proof}

%\Cref{Theorem - Forcing Yuk and Coulomb Comparison} is now proved using \Cref{Lemma - Linearised TFW Results}. 

\begin{proof}[Proof of \Cref{Theorem - Forcing Yuk and Coulomb Comparison}]
Let $0 < a \leq a_{0}$ and $h \in [0,h_{0}]$, then recall (\ref{eq: TFW energy 2})
\begin{align*}
\mathcal{E}_{2,a}(Y^{h};\cdot) = |\nabla u_{a,h}|^{2} + u_{a,h}^{10/3} + \smfrac{1}{8\pi} \left(  |\nabla \phi_{a,h}|^{2} + a^{2} \phi_{a,h}^{2} \right).
\end{align*}
Applying \Cref{Lemma - Linearised TFW Results} and using the pointwise convergence of $u_{a,h},\phi_{a,h}, \smfrac{u_{a,h} - u_{a}}{h}, \smfrac{\phi_{a,h} - \phi_{a}}{h}$ to $u_{a},\phi_{a},\ou_{a},\ophi_{a}$ as $h \to 0$, along with their derivatives, it follows that
\begin{align*}
\frac{\mathcal{E}_{2,a}(Y^{h};\cdot) - \mathcal{E}_{2,a}(Y;\cdot)}{h} \to 2 \nabla u_{a} \cdot \nabla \ou_{a} + \frac{10}{3} u_{a}^{7/3} \ou_{a} + \frac{1}{4\pi} \left( \nabla \phi_{a} \cdot \nabla \ophi_{a} + a^{2} \phi_{a} \ophi_{a} \right).
\end{align*} 
As $u_{a} \in W^{1,\infty}(\R)$, $\phi_{a} \in L^{\infty}(\R)$ and $\nabla \phi_{a} \in L^{2}_{\unif}(\R)$ and (\ref{eq: ou ophi Yuk exp est}) holds
\begin{align*}
\sum_{|\alpha| \leq 2} \big( |\partial^{\alpha}\ou_{a}(x)| + |\partial^{\alpha}\ophi_{a}(x)| \big) + |\om(x)| \leq C e^{- \gamma_{0} |x - Y_{k}|},
\end{align*}
it follows that $\partial_{Y_{k}} \mathcal{E}_{2,a} \in L^{1}(\R)$ and
\begin{align}
\int_{\R} \frac{\partial \mathcal{E}_{2,a}(Y;x)}{\partial Y_{k}} \id x &= 2 \int_{\R} \nabla u_{a} \cdot \nabla \ou_{a} + \frac{10}{3} \int_{\R} u_{a}^{7/3} \ou_{a} \nonumber \\ & \quad + \frac{1}{4\pi} \int_{\R} \left( \nabla \phi_{a} \cdot \nabla \ophi_{a} + a^{2} \phi_{a} \ophi_{a} \right). \label{eq: E2 Yuk partial int}
\end{align}
An identical argument shows that $\partial_{Y_{k}} \mathcal{E}_{1,a} \in L^{1}(\R)$ and
\begin{align}
\int_{\R} \frac{\partial \mathcal{E}_{1,a}(Y;x)}{\partial Y_{k}} \id x &= 2 \int_{\R} \nabla u_{a} \cdot \nabla \ou_{a} + \frac{10}{3} \int_{\R} u_{a}^{7/3} \ou_{a} \nonumber \\ & \quad + \frac{1}{2} \int_{\R} \left( \phi_{a} ( \om - 2 u_{a} \ou_{a} ) + \ophi_{a} ( m - u_{a}^{2} ) \right). \label{eq: E1 Yuk partial int}
\end{align}
Using that $\phi_{a}$ and $\ophi_{a}$ solve (\ref{eq:phi-Yuk-inf-eq}) and (\ref{eq: ophi Yuk eq}), respectively,
\begin{align}
\frac{1}{2} \int_{\R} \ophi_{a} ( m - u_{a}^{2} ) &= \frac{1}{8 \pi} \int_{\R} \ophi_{a} ( - \Delta \phi_{a} + a^{2} \phi_{a} ) = \frac{1}{8\pi} \int_{\R} \left( \nabla \phi_{a} \cdot \nabla \ophi_{a} + a^{2} \phi_{a} \ophi_{a} \right) \nonumber \\ &= \frac{1}{8 \pi} \int_{\R} \phi_{a} ( - \Delta \ophi_{a} + a^{2} \ophi_{a} ) = \frac{1}{2} \int_{\R} \phi_{a} ( \om - 2u_{a} \ou_{a} ). \label{eq: E1 E2 Yuk equiv}
\end{align}
Combining (\ref{eq: E2 Yuk partial int})--(\ref{eq: E1 E2 Yuk equiv}) and using that $u_{a}$ solves (\ref{eq:u-Yuk-inf-eq}), $- \Delta u_{a} + \frac{5}{3} u_{a}^{7/3} - \phi_{a} u_{a} = 0$, the estimate (\ref{eq: Yuk force equivalence}) follows
\begin{align*}
&\int_{\R} \frac{\partial \mathcal{E}_{1,a}(Y;x)}{\partial Y_{k}} \id x = \int_{\R} \frac{\partial \mathcal{E}_{2,a}(Y;x)}{\partial Y_{k}} \id x \\ &= 2 \left( \int_{\R} \nabla u_{a} \cdot \nabla \ou_{a} + \frac{5}{3} \int_{\R} u_{a}^{7/3} \ou_{a} - \int_{\R} \phi_{a} u_{a} \ou_{a} \right) + \int_{\R} \phi_{a} \om = \int_{\R} \phi_{a} \om.
\end{align*}
Now recall the corresponding result for the Coulomb case \cite[(4.21)]{Paper1}, that $\partial_{Y_{k}} \mathcal{E}_{1}$, $\partial_{Y_{k}} \mathcal{E}_{2} \in L^{1}(\R)$ and
\begin{align*}
\int_{\R} \frac{\partial \mathcal{E}_{1}(Y;x)}{\partial Y_{k}} \id x = \int_{\R} \frac{\partial \mathcal{E}_{2}(Y;x)}{\partial Y_{k}} \id x = \int_{\R} \phi \,\om.
\end{align*}
Applying (\ref{eq: Yukawa Coulomb pointwise k pos est}) of \Cref{Theorem - Yukawa Coulomb Comparison} and (\ref{eq: ou ophi Yuk exp est}) of \Cref{Lemma - Linearised TFW Results} yields the desired estimate (\ref{eq: Yuk and Coulomb force comparison est}), for $i \in \{1,2\}$
\begin{align*}
&\left| \int_{\R} \left( \frac{\partial \mathcal{E}_{i,a}}{\partial Y_{k}} - \frac{\partial \mathcal{E}_{i}}{\partial Y_{k}} \right)(Y;x) \id x \right| \\ &\quad \qquad \leq \int_{\R} | \phi_{a} - \phi | | \om| \leq C \| \phi_{a} - \phi \|_{L^{\infty}(\R)} \int_{\R} e^{- \gamma |x - Y_{k}|} \id x \leq Ca^{2}. \qedhere
\end{align*}

\end{proof}

\begin{proof}[Proof of \Cref{Proposition - Infinite Finite Ground state comparison}]
This holds directly from applying \Cref{Theorem - One inf pointwise stability estimate Yuk alt} and following the proof of \cite[Proposition 4.1]{Paper1} verbatim.
\end{proof}

\begin{proof}[Proof of \Cref{Corollary - Exponential Estimates Yuk Consequences}]
This holds directly from applying \Cref{Theorem - Exponential Est Integral Yuk RHS} and following the proof of \cite[Corollary 4.2]{Paper1} verbatim.
\end{proof}

\begin{proof}[Proof of \Cref{Corollary - Neutrality Yuk Estimate}]
This holds directly from applying \Cref{Theorem - Exponential Est Integral Yuk RHS} with $k = 0$ and following the proof of \cite[Theorem 4.3]{Paper1} verbatim.
\end{proof}

\section{Appendix}
\label{Appendix}

The purpose of this section is to prove \Cref{Proposition - Uniform Yuk inf u estimate remainder}.
% estimate the constant $a_{\rm c} > 0$ that appears in \Cref{Proposition - Uniform Yuk inf u estimate}, that ensures $\inf u_{a} > 0$ for $0 < a \leq a_{\rm c}$. The main result of this section is that $a_{\rm c} = a_{0} = a_{0}(\omega) > 0$, where $a_{0}$ the constant appearing in \Cref{Remark - a_{0}}, which can be explicitly estimated. In particular, this suggests that for any $a > 0$ such that the Yukawa ground state $(u_{a},\phi_{a})$ exists, the electron density satisfies $\inf u_{a} > 0$.
%
%We recall \Cref{Proposition - Uniform Yuk inf u estimate remainder}:
\Proposition*

%\begin{proposition}
%\label{Proposition - Appendix Inf Est}
%There exists $a_{0} = a_{0}(\omega) > 0$ and $c_{a_{0},M,\omega} > 0$ such that for any $m \in \mathcal{M}_{L^{2}}(M,\omega)$ and $0 < a \leq a_{0}$ the corresponding Yukawa ground state $(u_{a},\phi_{a})$ is unique and the electron density $u_{a}$ satisfies
%\begin{align}
%\inf_{x \in \R} u_{a}(x) \geq c_{a_{0},M,\omega} > 0. \label{eq: Appendix inf u est}
%\end{align}
%
%\end{proposition}

The proof of \Cref{Proposition - Uniform Yuk inf u estimate remainder} adapts the argument described in \cite[Remark 4.16,  Lemma 4.14]{C/LB/L}, which shows that the periodic Yukawa ground state is bounded below and hence unique. The proof requires the following result.

\begin{lemma}
\label{Lemma - Yukawa Finite Lower Bound}
For any $a_{0} > 0$ and $m \in \mathcal{M}_{L^{2}}(M,\omega)$, there exists $R_{0} = R_{0}(a_{0},\omega)$, $\nu_{a_{0},M,\omega} >0$ such that for all $0 < a \leq a_{0}$ and $R_{n} \geq R_{0}$
%
%There exists $\nu_{a_{0},M,\omega}>0$ such that for any $m \in \mathcal{M}_{L^{2}}(M,\omega),$ $0 < a \leq a_{0}$ and $R_{n} \geq R_{0}$
\begin{align}
\inf_{x \in B_{1}(0)} u_{a,R_{n}}(x) \geq \nu_{a_{0},M,\omega} > 0. \label{eq: Yuk finite inf est}
\end{align}
\end{lemma}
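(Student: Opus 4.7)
The plan is to argue by contradiction, in the spirit of \Cref{Proposition - Uniform Yuk inf u estimate} but with a two-case split to cover arbitrary $a \in (0,a_0]$. Suppose the claim fails; then there exist sequences $a_n \in (0,a_0]$, $R_n \uparrow \infty$, $m_n \in \mathcal{M}_{L^{2}}(M,\omega)$ and $x_n \in \overline{B_{1}(0)}$ such that $u_n := u_{a_n,R_n}$ (corresponding to $m_n$) satisfies $u_n(x_n) \to 0$. Passing to subsequences, $a_n \to a_\infty \in [0,a_0]$, $x_n \to x_0 \in \overline{B_{1}(0)}$, and the uniform bounds of \Cref{Proposition - Finite Yukawa Regularity Est All a} combined with Rellich's theorem yield $(u_n,\phi_n) \to (u_\infty,\phi_\infty)$ in $C^{2}_{\loc} \times C^{0}_{\loc}$ and $m_n \chi_{B_{R_n}(0)} \rightharpoonup m_\infty \in \mathcal{M}_{L^{2}}(M,\omega)$ weakly in $L^{2}_{\loc}$. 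Since the first-order coefficient $V_n = \tfrac{5}{3}u_n^{4/3} - \phi_n$ in $-\Delta u_n + V_n u_n = 0$ is uniformly bounded in $L^{\infty}$, iterated Harnack with uniform constants upgrades $u_n(x_n) \to 0$ to $\sup_{B_{R}(0)} u_n \to 0$ for every fixed $R > 0$; hence $u_\infty \equiv 0$, and passing the $\phi$-equation to the limit gives $-\Delta \phi_\infty + a_\infty^{2} \phi_\infty = 4\pi m_\infty$ in distribution.

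If $a_\infty = 0$, the argument of \Cref{Proposition - Uniform Yuk inf u estimate} applies verbatim: testing the limit equation against $\varphi(\cdot/R)$ with $\varphi \in C^{\infty}_{\rm c}(B_{2}(0))$ and $\varphi \equiv 1$ on $B_{1}(0)$ yields $\int_{B_{R}(0)} m_\infty \leq CR$ uniformly in $R$, contradicting $\int_{B_{R}(0)} m_\infty \geq \omega_0 R^{3} - \omega_1$ for $R$ large.

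For $a_\infty > 0$ the same trick fails, because the absorption term $a_\infty^{2}\int \phi_\infty \varphi(\cdot/R)$ is itself of order $R^{3}$ and swallows the source growth; I argue instead on the finite level via a local positive lower bound on $\phi_n$ together with an eigenvalue comparison. For $x \in B_{R}(0)$ and $n$ so large that $a_n \geq a_\infty/2$ and $R_n \gg R$, restricting the convolution to a ball $B_{R^{*}}(x) \subset B_{R_n}(0)$ and using $m_n \in \mathcal{M}_{L^{2}}(M,\omega)$ gives
\begin{align*}
(m_{R_n} * Y_{a_n})(x) \geq \frac{e^{-a_0 R^{*}}}{R^{*}} \bigl( \omega_0 (R^{*})^{3} - \omega_1 \bigr) \geq c_{1}(a_0,\omega) > 0,
\end{align*}
once $R^{*} = R^{*}(\omega)$ is chosen large enough. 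Splitting $u_{n}^{2} * Y_{a_n}$ at a radius $R^{**}$, the near part is bounded by $C(R^{**})^{2}\sup_{B_{R+R^{**}}(0)} u_n^{2}$, which tends to $0$ by iterated Harnack, while the tail is bounded by $C\|u_n\|_{L^{\infty}}^{2} \, R^{**} e^{-a_\infty R^{**}/2}$, which is $\leq c_{1}/4$ for $R^{**} = R^{**}(a_\infty,M)$ large. Hence $\phi_n \geq c_{1}/2$ on $B_{R}(0)$ for $n$ large, and since $\tfrac{5}{3}(\sup_{B_R(0)} u_n)^{4/3} \to 0$, the equation \eqref{eq: u a n eq} yields $-\Delta u_n \geq (c_{1}/4) u_n$ on $B_{R}(0)$.

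Let $(\lambda_1(B_{R}),\varphi_1)$ denote the first Dirichlet eigenpair of $-\Delta$ on $B_{R}(0)$, so that $\varphi_1 > 0$ in $B_{R}(0)$, $\varphi_1 = 0$ on $\partial B_{R}(0)$, and $\nabla \varphi_1 \cdot \nu < 0$ on $\partial B_{R}(0)$ by Hopf's lemma. Green's identity gives
\begin{align*}
\lambda_1(B_{R}) \int_{B_{R}(0)} u_n \varphi_1 &= \int_{B_{R}(0)} \varphi_1 (-\Delta u_n) - \int_{\partial B_{R}(0)} u_n \, \nabla \varphi_1 \cdot \nu \, d\sigma \\
&\geq \tfrac{c_{1}}{4} \int_{B_{R}(0)} u_n \varphi_1 - \int_{\partial B_{R}(0)} u_n \, \nabla \varphi_1 \cdot \nu \, d\sigma,
\end{align*}
and since $u_n > 0$ on $\R$ by \Cref{Proposition - Finite Yukawa Regularity Est All a} and $\nabla \varphi_1 \cdot \nu < 0$ on $\partial B_{R}(0)$, the last boundary integral is strictly positive. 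This forces $(\lambda_1(B_{R}) - c_{1}/4)\int u_n \varphi_1 > 0$, which fails as soon as $R$ is chosen large enough that $\lambda_1(B_{R}) = \pi^{2}/R^{2} < c_{1}/4$, yielding the required contradiction. The main obstacle is precisely the case $a_\infty > 0$: the Yukawa absorption term absorbs the source growth of $m_\infty$ in the limit $\phi$-equation, so the testing argument of \Cref{Proposition - Uniform Yuk inf u estimate} cannot detect the incompatibility, and one must instead exploit the strict positivity of the finite-level minimiser $u_n$ through the subsolution/eigenvalue argument above.
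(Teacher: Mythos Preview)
Your argument is correct and takes a genuinely different route from the paper's proof. The paper splits the range $0<a\leq a_{0}$ at the threshold $a_{\rm c}$ of \Cref{Proposition - Uniform Yuk inf u estimate}: for $0<a\leq a_{\rm c}$ it transfers the already-established infinite lower bound to $u_{a,R_{n}}$ via the thermodynamic-limit estimate of \Cref{Proposition - Infinite Finite Ground state comparison}; for $a_{\rm c}<a\leq a_{0}$ it sets up a local variational problem on $B_{R}(0)$ with boundary data $u_{k}$, shows that $u_{k}$ is its unique minimiser with energy tending to $0$, and then uses \Cref{Lemma - Technical Lemma 2} to build a competitor $\varphi_{\varepsilon,k}$ with strictly negative energy, obtaining the contradiction. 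Your approach instead splits according to $a_{\infty}=\lim a_{n}$: the case $a_{\infty}=0$ reproduces the limit-equation testing of \Cref{Proposition - Uniform Yuk inf u estimate} verbatim, while for $a_{\infty}>0$ you exploit the convolution representation $\phi_{n}=(m_{R_{n}}-u_{n}^{2})*Y_{a_{n}}$ to obtain a uniform positive lower bound for $\phi_{n}$ on large balls, and then conclude by a Barta-type eigenvalue comparison against the first Dirichlet eigenfunction. The trade-off is that your argument is shorter and avoids both the local energy construction and the test-function machinery of \Cref{Lemma - Technical Lemma 2}, at the cost of relying on the explicit convolution formula for $\phi_{n}$ (which is available here precisely because the Yukawa kernel is integrable when $a_{\infty}>0$); the paper's variational route is more in the spirit of \cite[Lemma~4.14]{C/LB/L} and would adapt more readily if one did not have such a representation. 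One cosmetic point: the quantity $\tfrac{e^{-a_{0}R^{*}}}{R^{*}}(\omega_{0}(R^{*})^{3}-\omega_{1})$ is not increasing in $R^{*}$, so ``large enough'' should read ``chosen so that it is positive'', e.g.\ $R^{*}>(\omega_{1}/\omega_{0})^{1/3}$.
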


Then, sending $R_{n} \to \infty$ in (\ref{eq: Yuk finite inf est}), it follows that for all $0 < a \leq a_{0}$
\begin{align*}
\inf_{x \in B_{1}(0)} u_{a}(x) \geq \nu_{a_{0},M,\omega} > 0,
\end{align*} 
hence $u_{a} > 0$. Then following the proof of \cite[Lemma 4.14]{C/LB/L} gives the desired estimate (\ref{eq:u-Yuk-inf-est-remainder}). As the argument used in \cite[Lemma 4.14]{C/LB/L} is also necessary to show \Cref{Lemma - Yukawa Finite Lower Bound}, it is followed closely in this instance and for the proof of \Cref{Proposition - Uniform Yuk inf u estimate remainder}, only the necessary changes in the argument are described.

\begin{proof}[Proof of \Cref{Lemma - Yukawa Finite Lower Bound}]
It is first shown that there exists $R_{0}' > 0$ such that for all $0 < a \leq a_{\rm c}$
\begin{align}
\inf_{R_{n} \geq R_{0}'} \inf_{m \in \mathcal{M}_{L^{2}}(M,\omega)} \inf_{x \in B_{1}(0)} u_{a,R_{n},m}(x) \geq \frac{c_{a_{\rm c}, M, \omega}}{2} > 0, \label{eq: u pos contr initial assumption ac}
\end{align}
then it remains to show that there exists $R_{0} > 0$ such that for all $a_{\rm c} < a \leq a_{0}$
\begin{align}
\inf_{R_{n} \geq R_{0}} \inf_{m \in \mathcal{M}_{L^{2}}(M,\omega)} \inf_{x \in B_{1}(0)} u_{a,R_{n},m}(x) = c_{a_{0},M,\omega} > 0. \label{eq: u pos contr initial assumption true}
\end{align}

By \Cref{Proposition - Uniform Yuk inf u estimate}, for any $m \in \mathcal{M}_{L^{2}}(M,\omega)$, $0 < a \leq a_{\rm c}$, the Yukawa ground state electron density $u_{a}$ satisfies
\begin{align*}
\inf_{x \in \R} u_{a}(x) \geq c_{a_{\rm c}, M, \omega} > 0,
\end{align*}
and by \Cref{rem:rdl-estimate} following \Cref{Proposition - Infinite Finite Ground state comparison}
\begin{align*}
\| u_{a} - u_{a,R_{n}} \|_{L^{\infty}(B_1(0))} \leq C' e^{- \gamma ( R_{n} - 1 )}.
\end{align*}
It follows that (\ref{eq: u pos contr initial assumption ac}) holds for $R_{n} \geq R_{0}' := 1 + \gamma^{-1} \log(2C' c_{a_{\rm c},M,\omega}^{-1})$ and any $x \in B_{1}(0)$
\begin{align*}
u_{a, R_{n}}(x) \geq u_{a}(x) - C' e^{- \gamma ( R_{n} - 1 )} \geq c_{a_{\rm c}, M, \omega} - \frac{c_{a_{\rm c}, M, \omega}}{2} \geq \frac{c_{a_{\rm c}, M, \omega}}{2} > 0.
\end{align*}
The estimate (\ref{eq: u pos contr initial assumption true}) is shown by contradiction, so suppose that for all $R_{0} > 0$
%there exists a strictly decreasing null sequence $a_{k}$ such that
\begin{align}
\inf_{a_{\rm c} < a \leq a_{0}} \inf_{R_{n} \geq R_{0}} \inf_{m \in \mathcal{M}_{L^{2}}(M,\omega)} \inf_{x \in B_{1}(0)} u_{a,R_{n},m}(x) = 0, \label{eq: u pos contr initial assumption}
\end{align}
where $u_{a,R_{n},m}$ solves (\ref{eq:u-Yuk-inf-eq}) corresponding to $m_{R_{n}} = m \cdot \chi_{B_{R_{n}}(0)}$. 

 Hence for each $k \in \mathbb{N}$ there exist sequences $(a_{k} ) \subset (a_{\rm c}, a_{0}]$, $R_{n_{k}} \uparrow \infty$, $\widetilde{ m }_{k} \in \mathcal{M}_{L^{2}}(M,\omega)$ and $x_{k} \in B_{1}(0)$ such that $m_{k,R_{n_{k}}} = \widetilde{ m }_{k} \cdot \chi_{B_{R_{n_{k}}}(0)}$ satisfies for all $k \in \mathbb{N}$
\begin{align*}
u_{a_{k},R_{n_{k}},\widetilde{ m}_{k} }(x_{k}) \leq \frac{1}{k}.
\end{align*}
For convenience, in this argument $u_{a_{k},R_{n_{k}},\widetilde{m}_{k}}$ and $m_{k,R_{n_{k}}}$ are referred to as $u_{k}$ and $m_{k}$, respectively. By the Harnack inequality, for fixed $k \in \mathbb{N}$ and any $R' \geq 1$ there exists $C(R',a_{0},M) > 0$ such that
\begin{align}
\sup_{x \in B_{R'}(0)} u_{k}(x) \leq C \inf_{x \in B_{R'}(0)} u_{k}(x) \leq \frac{C(R',a_{0},M)}{k}, \label{eq: uj Harnack est}
\end{align}
so it follows that $u_{k}$ converges uniformly to $0$ on any compact subset as $k \to \infty$. For $R > 0$ and $k \in \mathbb{N}$, define the energy functional acting on $v$ satisfying $\nabla v \in L^{2}(B_{R}(0))$ and $v \in L^{10/3}(B_{R}(0))$ by
\begin{align}
E(v;k,R) &= \int_{B_{R}(0)} |\nabla v|^{2} + \int_{B_{R}(0)} v^{10/3} - \int_{B_{R}(0)} \left( m_{k} * Y_{a_{k}} \right) v^{2}  \bigg. \nonumber \\
& \quad + \frac{1}{2} \int_{B_{R}(0)} \left( v^{2} \cdot \chi_{B_{R}(0)} * Y_{a_{k}} \right) v^{2} + \int_{B_{R}(0)} \left( u_{k}^{2} \cdot \chi_{B_{R}(0)^{\rm c}} * Y_{a_{k}} \right) v^{2}. \label{eq: E k,R def}
\end{align}
Then consider the corresponding variational problem
\begin{align}
I(k,R) &= \inf \bigg \{ E(v;k,R) \, \bigg| \, \, \nabla v \in L^{2}(B_{R}(0)), v \in L^{10/3}(B_{R}(0)),  v|_{\partial B_{R}(0)} = u_{k} \, \bigg \}. \label{eq: I k,j,R variational problem}
\end{align}
The construction of the energy and the boundary condition of (\ref{eq: I k,j,R variational problem}) ensures that $u_{k}$ is the unique minimiser of (\ref{eq: I k,j,R variational problem}) for each $R > 0$. To prove this, observe that $E(v;k,R)$ can be expressed as
\begin{align*}
E(v;k,R) &= \int_{B_{R}(0)} |\nabla v|^{2} + \int_{B_{R}(0)} v^{10/3} + \int_{B_{R}(0)} \left( u_{k}^{2} \cdot \chi_{B_{R}(0)^{\rm c}} * Y_{a_{k}} \right) v^{2} \\
& \quad + \frac{1}{2} D_{a_{k}} \left( m_{k} - v^{2} \chi_{B_{R}(0)}, m_{k} - v^{2} \chi_{B_{R}(0)} \right) - \frac{1}{2} D_{a_{k}} \left( m_{k}, m_{k} \right).
\end{align*}
As $Y_{a_{k}}$ and the Yukawa interaction term are non-negative, it follows that
\begin{align*}
E(v;k,R) \geq \int_{B_{R}(0)} |\nabla v|^{2} + \int_{B_{R}(0)} v^{10/3} - \frac{1}{2} D_{a_{k}} \left( m_{k}, m_{k} \right) \geq - \frac{1}{2} D_{a_{k}} \left( m_{k}, m_{k} \right)
%\geq - C a_{k}^{-2} \| m_{k} \|_{L^{2}(\R)}^{2} 
> - \infty,
\end{align*}
so as $E(v;k,R)$ is bounded below, $I(k,R)$ is well-defined. Any minimising sequence $v_{n}$ satisfies
\begin{align*}
\| \nabla  v_{n} \|_{L^{2}(B_{R}(0))}^{2} + \| v_{n} \|_{L^{10/3}(B_{R}(0))}^{10/3} \leq C(k,R,a_{0},M),
\end{align*}
hence there exists $v_{k,R}$ such that $\nabla v_{k,R} \in L^{2}(\R), v_{k,R} \in L^{10/3}(\R)$. Moreover, along a subsequence $\nabla v_{n}$ converges to $\nabla v_{k,R}$ weakly in $L^{2}(\R)$, $v_{n}$ converges to $v_{k,R}$, weakly in $L^{6}(\R)$ and $L^{10/3}(\R)$, strongly in $L^{p}(B_{R}(0))$ for all $p \in [1,6)$ and $R > 0$ and pointwise almost everywhere. Moreover, $v_{k,R}$ satisfies
\begin{align*}
E(v_{k,R};k,R) = I(k,R),
\end{align*}
and solves
\begin{align}
- \Delta v_{k,R} + \frac{5}{3} v_{k,R}^{7/3} + &\left( m_{k} - v_{k,R}^{2} \cdot \chi_{B_{R}(0)} - u_{k}^{2} \cdot \chi_{B_{R}(0)^{\rm c}}  \right) v_{k,R} = 0, \label{eq: v k,j,R EL eq} \\
& \quad v_{k,R} = u_{k} \quad \text{ on } \partial B_{R}(0). \nonumber
\end{align}
It is straightforward to verify that $u_{k}$ solves (\ref{eq: v k,j,R EL eq}). Define the alternate minimisation problem
\begin{align}
\inf \left\{ \, E(\sqrt \rho;k,R)  \, \bigg| \, \nabla \sqrt \rho \in L^{2}(\R), \rho \in L^{5/3}(\R), \rho \geq 0 \, \right \}. \label{eq: rho min problem 2}
\end{align}
Due to the strict convexity of $\rho \mapsto E(\sqrt \rho;k,R)$, it follows that $\rho_{k} = u_{k}^{2}$ is the unique minimiser of (\ref{eq: rho min problem 2}), hence $u_{k}$ is the unique minimiser of (\ref{eq: I a Rn min problem}).
%Using $\rho = v^{2}$, and observing that $E(v;k,j,R) = E(\sqrt \rho;k,j,R)$ is strictly convex in $\rho$, it follows that $u_{k,j}$ is the unique minimiser of $I(k,j,R)$. The proof of strict convexity is standard \refer{cite B/B/L}, but we show this for the Yukawa interaction term. Let $t \in (0,1)$ and $\rho = t \rho_{1} + (1-t) \rho_{2}$, where $\rho_{1} \neq \rho_{2}$, then
%\begin{align*}
%D_{a_{k}} &\left( m_{k} - \rho \chi_{B_{R}(0)}, m_{k} - \rho \chi_{B_{R}(0)} \right) \\ &= \bigintssss_{\R} \frac{|(\widehat{m_{k}} - \widehat{\rho} * \widehat{ \chi_{B_{R}(0) }})(p)|^{2}}{|p^{2} + a_{k}^{2}|} \id p \\ &= \bigintssss_{\R} \frac{|t(\widehat{m_{k}} - \widehat{\rho_{1}} * \widehat{ \chi_{B_{R}(0) }})(p) + (1-t)(\widehat{m_{k}} - \widehat{\rho_{2}} * \widehat{ \chi_{B_{R}(0)} } )(p)|^{2}}{|p^{2} + a_{k}^{2}|} \id p \\
%&< t \bigintssss_{\R} \frac{|(\widehat{m_{k}} - \widehat{\rho_{1}} * \widehat{ \chi_{B_{R}(0) }})(p)|^{2}}{|p^{2} + a_{k}^{2}|} \id p + (1-t) \bigintssss_{\R} \frac{|(\widehat{m_{k}} - \widehat{\rho_{2}} * \widehat{ \chi_{B_{R}(0)} } )(p)|^{2}}{|p^{2} + a_{k}^{2}|} \id p \\
%&= t D_{a_{k}} \left( m_{k} - \rho_{1} \chi_{B_{R}(0)}, m_{k} - \rho_{1} \chi_{B_{R}(0)} \right) + (1-t) D_{a_{k}} \left( m_{k} - \rho_{2} \chi_{B_{R}(0)}, m_{k} - \rho_{2} \chi_{B_{R}(0)} \right).
%\end{align*}

As $u_{k} \to 0$ uniformly as $k \to \infty$, it follows that for any fixed $R > 0$
\begin{align}
E(u_{k};k,R) \to 0 \quad \text{ as } \quad k \to \infty. \label{eq: E u k,j j limit zero est}
\end{align}
To verify (\ref{eq: E u k,j j limit zero est}), observe that
\begin{align*}
E(u_{k};k,R) &= \int_{B_{R}(0)} |\nabla u_{k}|^{2} + \int_{B_{R}(0)} u_{k}^{10/3} - \int_{B_{R}(0)} \left( m_{k} * Y_{a_{k}}  \right) u_{k}^{2}  \bigg. \nonumber \\
& \quad + \frac{1}{2} \int_{B_{R}(0)} \left( u_{k}^{2} \cdot \chi_{B_{R}(0)} * Y_{a_{k}} \right) u_{k}^{2} + \int_{B_{R}(0)} \left( u_{k}^{2} \cdot \chi_{B_{R}(0)^{c}} * Y_{a_{k}} \right) u_{k}^{2}.
\end{align*}
%As $a_{k}$ is a strictly decreasing, null sequence, there exists $k_{0} \in \mathbb{N}$ such that for all $k \geq k_{0}$ $0 < a_{k} \leq 1$, hence by the Solovej estimate (\ref{eq: Yuk Solovej est})
%\begin{align*}
%0 \leq \theta_{k,j} \leq C_{S} + a_{k}^{2} \leq C_{S} + 1,
%\end{align*}
%hence is bounded independently of $k,j \in \mathbb{N}$.
Clearly
\begin{align}
0 \leq \int_{B_{R}(0)} u_{k}^{10/3} \leq C R^{3} \| u_{k} \|_{L^{\infty}(B_{R}(0))}^{10/3} \to 0 \text{ as } k \to \infty. \label{eq: E to 0 1}
\end{align}
The term $m_{k} * Y_{a_{k}}$ can be estimated by
\begin{align}
\left \| m_{k} * Y_{a_{k}} \right \|_{L^{\infty}(\R)} \leq C(a_{\rm c},M), \label{eq: m k,j conv est}
\end{align}
where the constant $C(a_{\rm c},M)$ is independent of $k \in \mathbb{N}$. From (\ref{eq: m k,j conv est}) it follows that
\begin{align}
\left| \int_{B_{R}(0)} \left( m_{k,j} * Y_{a_{k}} \right) u_{k}^{2} \right| &\leq \left \| m_{k} * Y_{a_{k}} \right \|_{L^{\infty}(\R)} \int_{B_{R}(0)} u_{k}^{2} \nonumber \\
&\leq C a_{\rm c}^{-3} M R^{3}  \| u_{k} \|_{L^{\infty}(B_{R}(0))}^{2} \to 0 \text{ as } k \to \infty. \label{eq: E to 0 2}
\end{align}
To show (\ref{eq: m k,j conv est}), let $\Gamma \subset \R$ be a semi-open unit cube centred at the origin, so $\R = \{ \, \Gamma + i \, | \, i \in \mathbb{Z}^{3} \, \}$. For any  $x \in \R$
\begin{align}
\left |  \left(m_{k} * Y_{a_{k}} \right)(x) \right | &\leq  \int_{\R} |m_{k}(x-y)| \frac{ e^{-a_{k}|y|} }{|y|} \id y = \sum_{i \in \mathbb{Z}^{3}} \int_{\Gamma + i} |m_{k}(x-y)| \frac{ e^{-a_{k}|y|} }{|y|} \id y \nonumber \\
&\leq C \sum_{i \in \mathbb{Z}^{3}} \| m_{k} \|_{L^{2}_{\unif}(\R)} \left \| \smfrac{ e^{-a_{k}|\cdot|} }{|\cdot|} \right \|_{L^{2}(\Gamma + i)} \leq C M \sum_{i \in \mathbb{Z}^{3}} \left \| \smfrac{ e^{-a_{k}|\cdot|} }{|\cdot|} \right \|_{L^{2}(\Gamma + i)} \nonumber \\
&\leq C M \sum_{i \in \mathbb{Z}^{3}} e^{-a_{k}|i|} \leq \frac{CM}{a_{k}^{3}} \leq \frac{CM}{a_{\rm c}^{3}}. \label{eq: C ac M est}
\end{align} 
%It is simple to check that for any $i \in \mathbb{Z}^{3}$, $\left \| \smfrac{ e^{-a_{k}|\cdot|} }{|\cdot|} \right \|_{L^{2}(\Gamma + i)} \leq C e^{-a_{k}|i|}$, hence
%\begin{align*}
%\left \|  m_{k,j} * \smfrac{e^{-a_{k}|\cdot|}}{|\cdot|} \right \|_{L^{\infty}(\R)} &\leq C M \sum_{i \in \mathbb{Z}^{3}} e^{-a_{k}|i|} \leq \frac{CM}{a_{k}^{3}}.
%\end{align*} 
As the estimate (\ref{eq: C ac M est}) is independent of $k \in \mathbb{N}$ and $x \in \R$, (\ref{eq: m k,j conv est}) holds. Estimating the remaining terms gives
\begin{align}
\frac{1}{2} \int_{B_{R}(0)} \left( u_{k}^{2} \cdot \chi_{B_{R}(0)} * Y_{a_{k}} \right) u_{k}^{2} &\leq \| u_{k} \|_{L^{\infty}(B_{R}(0))}^{4} D_{a_{k}}(\chi_{B_{R}(0)},\chi_{B_{R}(0)}) \\ &\leq C a_{\rm c}^{-2} R^{3} \| u_{k} \|_{L^{\infty}(B_{R}(0))}^{4} \to 0 \text{ as } k \to \infty, \label{eq: uk Yak est}
\end{align}
\begin{align}
\int_{B_{R}(0)} \left( u_{k}^{2} \cdot \chi_{B_{R}(0)^{c}} * Y_{a_{k}} \right) u_{k}^{2} &\leq \left \|  u_{k}^{2} \cdot \chi_{B_{R}(0)^{c}} * Y_{a_{k}} \right \|_{L^{\infty}(\R)} \int_{B_{R}(0)} u_{k}^{2} \nonumber \\
&\leq C R^{3} \| u_{k} \|_{L^{\infty}(\R)}^{2} \left \| Y_{a_{k}} \right \| _{L^{1}(\R)} \| u_{k} \|_{L^{\infty}(B_{R}(0))}^{2} \nonumber \\
&\leq \frac{C(a_{0},M) R^{3}}{a_{\rm c}^{2}} \| u_{k} \|_{L^{\infty}(B_{R}(0))}^{2} \to 0 \text{ as } k \to \infty. \label{eq: E to 0 3}
\end{align}
For the final term, integration by parts yields
\begin{align}
\int_{B_{R}(0)} |\nabla u_{k} |^{2} &= - \int_{B_{R}(0)} u_{k} \Delta u_{k} + \int_{\partial B_{R}(0)} u_{k} \frac{\partial u_{k}}{\partial n} \nonumber \\
&\leq C \| u_{k} \|_{W^{2,\infty}(\R)} ( R^{3} + R^{2} ) \| u_{k} \|_{L^{\infty}\left(\overline{B_{R}(0)}\right)} \nonumber \\
&\leq C(a_{0},M) R^{3} \| u_{k} \|_{L^{\infty}(B_{R}(0))} \to 0 \text{ as } k \to \infty. \label{eq: E to 0 4}
\end{align}
Collecting (\ref{eq: E to 0 1})--(\ref{eq: E to 0 4}), it follows that for fixed $R > 0$, $E(u_{k};k,R) \to 0$ as $k \to \infty$.
A family of test functions $\varphi_{\varepsilon,k} \in H^{1}(B_{R}(0))$ is now constructed, satisfying the boundary condition $\varphi_{\varepsilon,k}|_{\partial B_{R}(0)} = u_{k}$ of (\ref{eq: I k,j,R variational problem}) such that for sufficiently large $R > 0$ and small $\varepsilon > 0$, there exists a constant $C_{1} > 0$ such that for all large $k \in \mathbb{N}$
\begin{align}
E(\varphi_{\varepsilon,k};k,R) \leq -C_{1} < 0, \label{eq: varphi epsilon contr statement}
\end{align}
contradicting the fact that $E(u_{k};k,R) \to 0$ as $k \to \infty$, as (\ref{eq: varphi epsilon contr statement}) implies
\begin{align*}
E(u_{k};k,R) \leq E(\varphi_{\varepsilon,k};k,R) \leq - C_{1} < 0.
\end{align*}
\Cref{Lemma - Technical Lemma 2} will be used to prove (\ref{eq: varphi epsilon contr statement}) by showing that there exists $R_{0}' \geq R_{0}$ and $k_{1} \in \mathbb{N}$ such that choosing $R_{n} = R_{0}'$ and $k \geq k_{1}$ ensures
\begin{align}
\int_{B_{4R_{0}'}(0)} |\nabla \psi_{R_{0}'}|^{2} + \int_{B_{4R_{0}'}(0)} \left( \left( u_{k}^{2} \cdot \chi_{B_{4R_{0}'}(0)^{\rm c}} - m_{k} \right) * Y_{a_{k}} \right) \psi_{R_{0}'}^{2} \leq -1. \label{eq: psi R1 eps est}
\end{align}
Recall \Cref{Lemma - Technical Lemma 2}, that there exists $C_{0} = C_{0}(a_{\rm c}, a_{0},\omega) > 0$ and $R_{0} = R_{0}(a_{\rm c}, a_{0},\omega) > 0$ such that for any $a_{\rm c} < a \leq a_{0}$ and $R_{n} \geq R_{0}$ 
\begin{align}
\int_{\R} |\nabla \psi_{R_{n}}|^{2} - D_{a}(m_{R_{n}}, \psi_{R_{n}}^{2}) \leq - C_{0} R_{n}^{3}, \label{eq: TL2-repeat}
\end{align}
The following term can be estimated and decomposed as
\begin{align}
\int_{B_{4R_{n}}(0)} \left( \left( u_{k}^{2} \cdot \chi_{B_{4R_{n}}(0)^{\rm c}} \right) * Y_{a_{k}} \right) \psi_{R_{n}}^{2} &\leq \int_{B_{4R_{n}}(0)} \left( \left( u_{k}^{2} \cdot \chi_{B_{4R_{n}}(0)^{\rm c}} \right) * Y_{a_{k}} \right) \nonumber \\
= \int_{B_{4R_{n}}(0)} \left( \left( u_{k}^{2} \cdot \chi_{B_{8R_{n}}(0)^{\rm c}} \right) * Y_{a_{k}} \right) &+ \int_{B_{4R_{n}}(0)} \left( \left( u_{k}^{2} \cdot \chi_{B_{8R_{n}}(0) \smallsetminus B_{4R_{n}}(0)} \right) * Y_{a_{k}} \right). \label{eq: split est}
\end{align}
The first term of (\ref{eq: split est}) can be expressed as
\begin{align*}
\int_{B_{4R_{n}}(0)} \left( \left( u_{k}^{2} \cdot \chi_{B_{8R_{n}}(0)^{\rm c}} \right) * Y_{a_{k}} \right) &= \int_{B_{8R_{n}}(0)^{\rm c}} u_{k}^{2}(y) \left( \int_{B_{4R_{n}}(0)} \frac{e^{-a_{k}|x-y|}}{|x-y|} \id x \right) \id y.
\end{align*}
By the triangle inequality $|x-y| \geq \smfrac{|y|}{2}$, hence
\begin{align*}
\int_{B_{4R_{n}}(0)} &\left( \left( u_{k}^{2} \cdot \chi_{B_{8R_{n}}(0)^{\rm c}} \right) * Y_{a_{k}} \right) \\ &\leq \| u_{k} \|_{L^{\infty}(\R)}^{2} \int_{B_{8R_{n}}(0)^{\rm c}} \left( \int_{B_{4R_{n}}(0)} \frac{e^{-a_{\rm c}|y|/2}}{|y|} \id x \right) \id y = C R_{n}^{3} \int_{B_{8R_{n}}(0)^{\rm c}} \frac{e^{-a_{\rm c}|y|/2}}{|y|} \id y \\
&= Ca_{\rm c}^{-2} R_{n}^{3} \left( 1+ 4a_{\rm c}R_{n} \right) e^{-4a_{\rm c}R_{n}} \leq Ca_{\rm c}^{-2} R_{n}^{3} e^{-2a_{\rm c}R_{n}}.
\end{align*}
As $e^{-2a_{\rm c}R_{n}} \to 0$ as $R_{n} \to \infty$, there exists $R_{2} > 0$ such that for $R_{n} \geq R_{2}$
\begin{align}
\int_{B_{4R_{n}}(0)} &\left( \left( u_{k}^{2} \cdot \chi_{B_{8R_{n}}(0)^{\rm c}} \right) * Y_{a_{k}} \right) \leq Ca_{\rm c}^{-2} R_{n}^{3} e^{-2a_{\rm c}R_{n}} \leq \frac{C_{0}}{4} R_{n}^{3}. \label{eq: split est part 1}
\end{align}
Now define $R_{0}' = \max\{R_{0},R_{2}, (2C_{0})^{-1/3}\}$ and choose $R_{n} = R_{0}'$. The second term of (\ref{eq: split est}) can be estimated using Young's inequality for convolutions
\begin{align*}
\int_{B_{4R_{0}'}(0)} &\left( \left( u_{k}^{2} \cdot \chi_{B_{8R_{0}'}(0) \smallsetminus B_{4R_{0}'}(0)} \right) * Y_{a_{k}} \right) \leq \int_{B_{4R_{0}'}(0)} \left( \left( u_{k}^{2} \cdot \chi_{B_{8R_{0}'}(0)} \right) * Y_{a_{k}} \right) \\ &\leq CR_{0}'^{3} \| Y_{a_{k}} \|_{L^{1}(\R)} \| u_{k} \|_{L^{\infty}(B_{8R_{0}'}(0))}^{2} \leq C a_{\rm c}^{-2} R_{0}'^{3} \| u_{k} \|_{L^{\infty}(B_{8R_{0}'}(0))}^{2}.
\end{align*}
As $u_{k} \to 0$ on compact sets, there exists $k_{1} \in \mathbb{N}$ such that $k \geq k_{1}$ ensures that
\begin{align}
\int_{B_{4R_{0}'}(0)} &\left( \left( u_{k}^{2} \cdot \chi_{B_{8R_{0}'}(0) \smallsetminus B_{4R_{0}'}(0)} \right) * Y_{a_{k}} \right) \leq C a_{\rm c}^{-2} R_{0}'^{3} \| u_{k} \|_{L^{\infty}(B_{8R_{0}'}(0))}^{2} \leq \frac{C_{0}}{4} R_{0}'^{3}. \label{eq: split est part 2}
\end{align}
Choose $R_{n} = R_{0}'$ and recall that $R_{n_{k}} \uparrow \infty$, hence there exists $k_{2} \in \mathbb{N}$ such that $R_{n_{k}} \geq R_{0}'$ for all $k \geq k_{2}$, so it follows that $m_{k} \geq m_{R_{n}}$. Collecting the estimates (\ref{eq: TL2-repeat}), (\ref{eq: split est})--(\ref{eq: split est part 2}) with $R_{n} = R_{0}'$ and observing that $\smfrac{C_{0}}{4} R_{0}'^{3} \geq 1$ yields the desired estimate (\ref{eq: psi R1 eps est})
\begin{align*}
\int_{B_{4R_{0}'}(0)} &|\nabla \psi_{R_{0}'}|^{2} + \int_{B_{4R_{0}'}(0)} \left( \left( u_{k}^{2} \cdot \chi_{B_{4R_{0}'}(0)^{\rm c}} - m_{k} \right) * Y_{a_{k}} \right) \psi_{R_{0}'}^{2} \\ &\leq \int_{\R} |\nabla \psi_{R_{0}'}|^{2} - D_{a}(m_{R_{0}'}, \psi_{R_{0}'}^{2}) + \int_{B_{4R_{0}'}(0)} \left( \left( u_{k}^{2} \cdot \chi_{B_{8R_{0}'}(0)^{\rm c}} \right) * Y_{a_{k}} \right) \\ & \qquad + \int_{B_{4R_{0}'}(0)} \left( \left( u_{k}^{2} \cdot \chi_{B_{8R_{0}'}(0) \smallsetminus B_{4R_{0}'}(0)} \right) * Y_{a_{k}} \right) \\
&\leq - C_{0}R_{0}'^{3} + \frac{C_{0}}{4}R_{0}'^{3} + \frac{C_{0}}{4}R_{0}'^{3} = - \frac{C_{0}}{2}R_{0}'^{3} \leq -1.
\end{align*}
Now choose $R = 4 R_{0}' + 2$ such that $\psi = \psi_{R_{0}'} \in C^{\infty}_{c}(B_{R - 2}(0))$ satisfies the estimate (\ref{eq: psi R1 eps est}) for all $a_{\rm c} < a \leq a_{0}$. Then let $\xi \in C^{\infty}(\R)$ satisfy $0 \leq \xi \leq 1$, $\xi = 1$ on $B_{R-1}^{c}(0)$, $\xi = 0$ on $B_{R-2}(0)$ and for $\varepsilon > 0$, define $\varphi_{\varepsilon,k} \in H^{1}(\R)$ by
\begin{align}
\varphi_{\varepsilon,k}(x) = \varepsilon \psi(x) + \xi(x) u_{k}(x). \label{eq: varphi eps def}
\end{align}
It follows from the definition that $\varphi_{\varepsilon,k}$ satisfies the boundary condition from (\ref{eq: I k,j,R variational problem}), that $\varphi_{\varepsilon,k}|_{\partial B_{R}(0)} = u_{k}$. Observe that as $\psi$ and $\xi \cdot u_{k}$ have disjoint support, the energy $E(\varphi_{\varepsilon,k};k,R)$ can be decomposed as
\begin{align*}
E(\varphi_{\varepsilon,k};k,R) &= E(\varepsilon \psi;k,R) + E( \xi u_{k} ;k,R) \\ & \quad + \varepsilon^{2} \int_{B_{R}(0)} \left( (\xi u_{k})^{2} \cdot \chi_{B_{R}(0)} * Y_{a_{k}} \right) \psi^{2}.
\end{align*}
%We now show that $\varepsilon$ can be chosen to ensure that
%%for all $j \geq j_{0}$
%\begin{align}
%E(\varepsilon \psi;k,R) + \varepsilon^{4} \leq - C_{1} < 0, \label{eq: E phi eps neg est}
%\end{align}
%and also
%\begin{align}
%E( \xi u_{k} ;k,R) + \varepsilon^{2} \int_{B_{R}(0)} \left( (\xi u_{k})^{2} \cdot \chi_{B_{R}(0)} * Y_{a_{k}} \right) \psi^{2} \to 0 \text{ as } k \to \infty. \label{eq:e E u k,j j limit zero}
%\end{align}
Recall that $\psi$ satisfies (\ref{eq: psi R1 eps est}), so for $0 <\varepsilon \leq 1$
\begin{align*}
E(\varepsilon \psi;k,R) + \varepsilon^{4} &= \varepsilon^{2} \left( \int_{B_{R}(0)} |\nabla \psi|^{2} + \int_{B_{R}(0)} \left( \left( u_{k}^{2} \cdot \chi_{B_{R}(0)^{\rm c}} - m_{k} \right) * Y_{a_{k}} \right) \psi^{2} \right) \nonumber \\
& \quad + \varepsilon^{10/3} \int_{B_{R}(0)} \psi^{10/3} + \frac{\varepsilon^{4}}{2} \int_{B_{R}(0)} \left( \psi^{2} \cdot \chi_{B_{R}(0)} * Y_{a_{k}} \right) \psi^{2} + \varepsilon^{4} \\
& \leq - \varepsilon^{2} + C \varepsilon^{10/3} R^{3} + C \varepsilon^{4} a_{k}^{-2} R^{3} + \varepsilon^{4} \\
& \leq - \varepsilon^{2} + C \varepsilon^{4} =: - \varepsilon^{2} + C_{3} \varepsilon^{4}.
\end{align*}
Choosing $\varepsilon = \varepsilon_{0} = \min\{1,(2C_{3})^{-1/2}\}$ implies that (\ref{eq: E phi eps neg est}) holds
\begin{align}
E(\varepsilon_{0} \psi;k,R) + \varepsilon_{0}^{4} \leq - \varepsilon_{0}^{2} + C_{3} \varepsilon_{0}^{4} \leq - \frac{ \varepsilon_{0}^{2} }{2} =: -C_{1} < 0. \label{eq: E phi eps neg est}
\end{align}
Now consider
\begin{align*}
E( \xi u_{k} ;k,R) &= \int_{B_{R}(0)} |\nabla (\xi u_{k})|^{2} + \int_{B_{R}(0)} (\xi u_{k})^{10/3} - \int_{B_{R}(0)} \left( m_{k} * Y_{a_{k}}  \right) (\xi u_{k})^{2}  \bigg. \nonumber \\
& \quad + \frac{1}{2} \int_{B_{R}(0)} \left( (\xi u_{k})^{2} \cdot \chi_{B_{R}(0)} * Y_{a_{k}} \right) (\xi u_{k})^{2} + \int_{B_{R}(0)} \left( u_{k}^{2} \cdot \chi_{B_{R}(0)^{\rm c}} * Y_{a_{k}} \right) (\xi u_{k})^{2}.
%& \leq \int_{B_{R}(0)} |\nabla u_{k_{0},j}|^{2} \xi^{2} + \int_{B_{R}(0)} u_{k_{0},j}^{10/3} - \int_{B_{R}(0)} \left( m_{k,j} * \smfrac{e^{-a_{k}|\cdot|}}{|\cdot|} - \theta_{k,j}  \right) u_{k_{0},j}^{2}  \bigg. \nonumber \\
%& \quad + \frac{1}{2} \int_{B_{R}(0)} \left( u_{k_{0},j}^{2} \cdot \chi_{B_{R}(0)} * \smfrac{e^{-a_{k}|\cdot|}}{|\cdot|} \right) u_{k_{0},j}^{2} + \int_{B_{R}(0)} \left( u_{k_{0},j}^{2} \cdot \chi_{B_{R}(0)^{c}} * \smfrac{e^{-a_{k}|\cdot|}}{|\cdot|} \right) u_{k_{0},j}^{2} \\
%& \quad + \int_{B_{R}(0)} |\nabla \xi |^{2} u_{k_{0},j}^{2} \xi^{2} + 2 \int_{B_{R}(0)} \nabla u_{k_{0},j} \cdot \nabla \xi u_{k_{0},j} \xi \\
%& \quad + \int_{B_{R}(0)} \left( m_{k,j} * \smfrac{e^{-a_{k}|\cdot|}}{|\cdot|} - \theta_{k,j}  \right) \left( 1 - \xi^{2} \right) u_{k_{0},j}^{2} \\
%&\leq E( u_{k_{0},j} ;k_{0},j,R) + \int_{B_{R}(0)} |\nabla \xi |^{2} u_{k_{0},j}^{2} + 2 \int_{B_{R}(0)} \nabla u_{k_{0},j} \cdot \nabla \xi u_{k_{0},j} \xi \\
%& \quad + \int_{B_{R}(0)} \left( m_{k,j} * \smfrac{e^{-a_{k}|\cdot|}}{|\cdot|} - \theta_{k,j}  \right) \left( 1 - \xi^{2} \right) u_{k_{0},j}^{2} \\
%&\leq E( u_{k_{0},j} ;k_{0},j,R) + C \left( \| u_{k_{0},j} \|_{L^{\infty}(B_{R}(0))} + \| u_{k_{0},j} \|_{L^{\infty}(B_{R}(0))}^{2} \right) \to 0 \text{ as } j \to \infty,
\end{align*}
Using that $0 \leq \xi \leq 1$, $|\nabla \xi| \in L^{\infty}(\R)$, $u_{k} \to 0$ as $k \to \infty$ and following the proof of (\ref{eq: E u k,j j limit zero est}), it follows that $E( \xi u_{k} ;k,R) \to 0$ as $k \to \infty$. For the remaining term
\begin{align}
0 \leq \varepsilon_{0}^{2} \int_{B_{R}(0)} \left( (\xi u_{k})^{2} \cdot \chi_{B_{R}(0)} * Y_{a_{k}} \right) \psi^{2} &\leq C \varepsilon_{0}^{2} \| u_{k} \|_{L^{\infty}(B_{R}(0))}^{2} \left \| Y_{a_{k}} \right \|_{L^{1}(\R)}  \int_{B_{R}(0)} \psi^{2} \nonumber \\
&= \frac{C \varepsilon_{0}^{2}}{a_{\rm c}^{2}} \| u_{k} \|_{L^{\infty}(B_{R}(0))}^{2} \to 0 \text{ as } k \to \infty. \label{eq:e E u k,j j limit zero}
\end{align}
It follows that there exists $k_{2} \in \mathbb{N}$ such that for all $k \geq k_{2}$
\begin{align}
E( \xi u_{k} ;k,R) + \varepsilon_{0}^{2} \int_{B_{R}(0)} \left( (\xi u_{k})^{2} \cdot \chi_{B_{R}(0)} * Y_{a_{k}} \right) \psi^{2} \leq \varepsilon_{0}^{4}. \label{eq: xi u k,j epsilon est}
\end{align}
Combining (\ref{eq: E phi eps neg est}) and (\ref{eq: xi u k,j epsilon est}), for $k \geq \max \{k_{1}, k_{2} \}$ yields the desired estimate (\ref{eq: varphi epsilon contr statement}).
\begin{align*}
E(\varphi_{\varepsilon_{0},k};k,R) &= E(\varepsilon_{0} \psi;k,R) + E( \xi u_{k} ;k,R) \\ & \quad + \varepsilon_{0}^{2} \int_{B_{R}(0)} \left( (\xi u_{k})^{2} \cdot \chi_{B_{R}(0)} * Y_{a_{k}} \right) \psi^{2} \\
&\leq E(\varepsilon_{0} \psi;k,R) + \varepsilon_{0}^{4} \leq - C_{1} < 0,
\end{align*}
which contradicts the initial assumption (\ref{eq: u pos contr initial assumption}).
\end{proof}

\begin{proof}[Proof of \Cref{Proposition - Uniform Yuk inf u estimate remainder}]
%By \Cref{Proposition - Uniform Yuk inf u estimate}, we have already shown that
%\begin{align*}
%\inf_{0 < a \leq a_{\rm c}} \inf_{m \in \mathcal{M}_{L^{2}}(M,\omega)}\inf_{x \in \R} u_{a,m}(x) \geq c_{a_{\rm c}, M,\omega} > 0.
%\end{align*}
The estimate (\ref{eq:u-Yuk-inf-est-remainder}) is shown by contradiction, so suppose there exists $a_{0} > a_{\rm c}$ such that
\begin{align}
\inf_{a_{\rm c} < a \leq a_{0}} \inf_{m \in \mathcal{M}_{L^{2}}(M,\omega)}\inf_{x \in \R} u_{a,m}(x) = 0, \label{eq: main result contr hyp}
\end{align}
hence for each $k \in \mathbb{N}$, there exists $a_{k} \in (a_{\rm c},a_{0}], m_{k} \in \mathcal{M}_{L^{2}}(M,\omega)$ and $x_{k} \in \R$ such that $u_{a_{k},m_{k}}(x_{k}) \leq \smfrac{1}{k}.$ Without loss of generality, assume that $x_{k} = 0$ for all $k \in \mathbb{N}$, otherwise translate $u_{a_{k},m_{k}}$. For convenience, $u_{a_{k},m_{k}}$ will be referred to as $u_{k}$ in this argument. By the Harnack inequality, it follows that $u_{k}$ converges uniformly to $0$ on compact sets. 

For $R > 0$ and $k \in \mathbb{N}$, define the energy functional acting on $v$ satisfying $\nabla v \in L^{2}(B_{R}(0))$ and $v \in L^{10/3}(B_{R}(0))$ by
\begin{align}
E(v;k,R) &= \int_{B_{R}(0)} |\nabla v|^{2} + \int_{B_{R}(0)} v^{10/3} - \int_{B_{R}(0)} \left( m_{k} * Y_{a_{k}} \right) v^{2}  \bigg. \nonumber \\
& \quad + \frac{1}{2} \int_{B_{R}(0)} \left( v^{2} \cdot \chi_{B_{R}(0)} * Y_{a_{k}} \right) v^{2} + \int_{B_{R}(0)} \left( u_{k}^{2} \cdot \chi_{B_{R}(0)^{\rm c}} * Y_{a_{k}} \right) v^{2}. \label{eq: E k,R def 2}
\end{align}
Then consider the corresponding variational problem
\begin{align}
I(k,R) &= \inf \bigg \{ E(v;k,R) \, \bigg| \, \, \nabla v \in L^{2}(B_{R}(0)), v \in L^{10/3}(B_{R}(0)),  v|_{\partial B_{R}(0)} = u_{k} \, \bigg \}. \label{eq: I k,j,R variational problem 2}
\end{align}
The construction of the energy (\ref{eq: E k,R def 2}) and the boundary condition of (\ref{eq: I k,j,R variational problem 2}) ensures that $u_{k}$ is the unique minimiser of (\ref{eq: I k,j,R variational problem 2}) for each $R > 0$. It follows that for any fixed $R > 0$, $I(k,R) \to 0$ as $k \to \infty$. Then by following the construction used in the proof of \Cref{Lemma - Yukawa Finite Lower Bound}, there exists $R > 0$ and $\varphi_{\varepsilon,k}$ such that for sufficiently small $\varepsilon > 0$ and sufficiently large $k \in \mathbb{N}$
\begin{align*}
I(k,R) = E(u_{k};k,R) \leq E(\varphi_{\varepsilon,k};k,R) \leq - C_{1} < 0,
\end{align*}
which contradicts the fact that $I(k,R) \to 0$ as $k \to \infty$, hence the desired estimate (\ref{eq:u-Yuk-inf-est-remainder}) holds.

Consequently, as for all $a > 0$ and $m \in \mathcal{M}_{L^{2}}(M,\omega)$, the electron density satisfies $\inf u_{a} > 0$, the argument presented in \cite[Chapter 6]{C/LB/L} can be applied verbatim to guarantee the uniqueness of the corresponding ground state $(u_{a},\phi_{a})$.
\end{proof}

%
%\refer{Add a proof of the site energy estimate for the Coulomb - Yukawa case, it's lots of bounding stuff, but it shouldn't be too bad.
%\\ Consider trying to show something similar with point nuclei instead? How does this change the pointwise stability estimate results, in particular the estimates for $\psi = \phi_{1} - \phi_{2}$?}

\bibliographystyle{plain}
\bibliography{sample2}

\begin{thebibliography}{10}

\bibitem{Ashcroft}
N.~Ashcroft and N.~Mermin.
\newblock {\em {Solid State Physics}}.
\newblock Saunders College, 1976.

\bibitem{Aubin_HLS}
T.~Aubin.
\newblock {\em Nonlinear Analysis on Manifolds. Monge--Ampere Equations},
  volume 252.
\newblock Springer Science \& Business Media, 1982.

\bibitem{Blanc_Uniqueness}
X.~Blanc.
\newblock Unique solvability of a system of nonlinear elliptic pdes arising in
  solid state physics.
\newblock {\em SIAM Journal on Mathematical Analysis}, 38(4):1235--1248, 2006.

\bibitem{Blanc-DefinitionofGroundState}
X.~Blanc, C.~Le Bris, and P.-L. Lions.
\newblock A definition of the ground state energy for systems composed of
  infinitely many particles.
\newblock {\em Communications in Partial Differential Equations},
  28(1-2):439--475, 2003.

\bibitem{Lahbabi-MeanField}
E.~Canc{\`e}s, S.~Lahbabi, and M.~Lewin.
\newblock Mean-field models for disordered crystals.
\newblock {\em Journal de Math\'ematiques Pures et Appliqu\'ees}, 100(2):241 --
  274, 2013.

\bibitem{C/LB/L}
I.~Catto, C.~Le~Bris, and P.-L. Lions.
\newblock {\em The Mathematical Theory of Thermodynamic Limits:
  {T}homas--{F}ermi Type Models}.
\newblock Oxford Mathematical Monographs. The Clarendon Press Oxford University
  Press, New York, 1998.

\bibitem{Paper2}
H.~Chen, F.~Nazar, and C.~Ortner.
\newblock Models for crystalline defects.
\newblock In preparation.

\bibitem{EOS}
V.~Ehrlacher, C.~Ortner, and A.~Shapeev.
\newblock Analysis of boundary conditions for crystal defect atomistic
  simulations.
\newblock {\em ArXiv e-prints}, 1306.5334, 2013.

\bibitem{Evans}
L.~Evans.
\newblock {\em Partial Differential Equations}, volume~19 of {\em Graduate
  Studies in Mathematics}.
\newblock American Mathematical Society, Providence, RI, second edition, 2010.

\bibitem{Evans/Gariepy}
L.~Evans and R.~Gariepy.
\newblock {\em Measure Theory and Fine Properties of Functions}, volume~5.
\newblock CRC press, 1991.

\bibitem{Paper1}
Nazar F. and C.~Ortner.
\newblock {L}ocality of the {T}homas-{F}ermi-von {W}eizs\"{a}cker {E}quations.
\newblock {\em arXiv preprint arXiv:1509.06753}, 2015.

\bibitem{Gilbarg/Trudinger}
D.~Gilbarg and N.~Trudinger.
\newblock {\em Elliptic Partial Differential Equations of Second Order}.
\newblock Classics in Mathematics. Springer-Verlag, Berlin, 2001.
\newblock Reprint of the 1998 edition.

\bibitem{Kaxiras}
E.~Kaxiras.
\newblock {\em Atomic and electronic structure of solids}.
\newblock Cambridge University Press, 2003.

\bibitem{Kittel}
C.~Kittel and C.-Y. Fong.
\newblock {\em Quantum Theory of Solids}.
\newblock Wiley New York, 1963.

\bibitem{Lahbabi-Defects}
S.~Lahbabi.
\newblock The reduced {H}artree--{F}ock model for short-range quantum crystals
  with nonlocal defects.
\newblock {\em Annales Henri Poincar\'e}, 15(7):1403--1452, 2014.

\bibitem{Lieb/Simon_TF}
E.~Lieb and B.~Simon.
\newblock The {T}homas--{F}ermi theory of atoms, molecules and solids.
\newblock {\em Advances in Mathematics}, 23(1):22--116, 1977.

\bibitem{Rowlinson}
J.S. Rowlinson.
\newblock The {Y}ukawa potential.
\newblock {\em Physica A: Statistical Mechanics and its Applications},
  156(1):15 -- 34, 1989.

\bibitem{Solovej_Universality}
J.~Solovej.
\newblock Universality in the {T}homas--{F}ermi-von {W}eizs{\"a}cker model of
  atoms and molecules.
\newblock {\em Communications in Mathematical Physics}, 129(3):561--598, 1990.

\bibitem{Trudinger_MeasurableCoefficients}
N.~Trudinger.
\newblock Linear elliptic operators with measurable coefficients.
\newblock {\em Annali della Scuola Normale Superiore di Pisa-Classe di
  Scienze}, 27(2):265--308, 1973.

\bibitem{YuTrinkleBaderVolumes}
M.~Yu, D.~Trinkle, and R.~Martin.
\newblock Energy density in density functional theory: Application to
  crystalline defects and surfaces.
\newblock {\em Physical Review B}, 83(11):115113, 2011.

\end{thebibliography}

\end{document}